\author{Szymon Bagiński  and Jacek Jezierski \\ Department of Mathematical Methods in Physics \\ Faculty of Physics, University of Warsaw \\ Pasteura 5, 02-093, Warsaw, Poland}
\title{Comparison of definitions of angular momentum at null infinity}
\renewcommand{\d}{\mathrm{d}}
\newcommand{\scri}{\mathscr{I}}
\newcommand{\cjk}{J^{(\text{CJK})}}
\newcommand{\cn}{J^{(\text{CN})}}
\newcommand{\cwy}{J^{(\text{CWY})}}
\newcommand{\dcjk}{\dot{J}^{(\text{CJK})}}
\newcommand{\dcn}{\dot{J}^{(\text{CN})}}
\newcommand{\dcwy}{\dot{J}^{(\text{CWY})}}
\theoremstyle{definition}
\newtheorem{definition}{Definition}[section]
\newtheorem{theorem}{Theorem}[section]
\newtheorem{lemma}{Lemma}[section]
\begin{document}

\maketitle

\begin{abstract}
    The supertranslation ambiguity of angular momentum at null infinity creates a technical difficulty in choosing the proper definition for this conserved quantity. As a result, many different approaches are found in the literature. This article aims to review the most relevant of these notions and compare their most significant properties. A new definition of angular momentum at null infinity is also proposed, which is supertranslation invariant under certain assumptions.
\end{abstract}

\tableofcontents

\section{Introduction}\label{introduction}

After the successful detection of gravitational waves by LIGO \cite{abbott2016observation}, the topic of gravitational radiation became even more important in the theoretical understanding of general relativity. Currently, researchers usually describe radiating systems by analysing how different physical quantities behave asymptotically, which resembles how a given system is perceived by a distant observer. To make any valid conclusions, one needs to first define certain conserved quantities (e.g., energy, momentum, angular momentum) in a meaningful way, and then find expressions for their fluxes at the so-called null infinity. Although the matter is resolved in the case of energy and momentum \cite{cjk}, there are still some major technical difficulties regarding angular momentum.

The problem arises in how the radiation zone is modelled mathematically - it is usually taken to be the boundary, called the null infinity, of a conformal completion of a given spacetime. Although conformal completions are constructed in such a way that they should resemble Minkowski spacetime at large distances, the group of symmetries of the resulting null infinity is significantly larger than the Poincaré group. In fact, it is an infinite-dimensional group called the Bondi-Metzner-Sachs (BMS) group. The BMS group consists not only of translations and rotations but also contains the so-called supertranslations \cite{sachs}. The presence of such a kind of transformations makes defining conserved quantities, which, given Noether's theorem, have to be associated with certain symmetries of a physical system, fundamentally difficult. As a result, one encounters a supertranslation ambiguity when trying to define angular momentum at null infinity. Hence, many different conventions may now be found in the literature.

The goal of this thesis is to review three different possible definitions of angular momentum at null infinity and to compare their properties which may be of importance when dealing with physical systems. Firstly, essential definitions and theorems are reviewed. Then, the concerned definitions are presented and analysed in how they differ from each other. Next, their fluxes and transformation rules under supertranslations are calculated. Finally, we present a new definition of angular momentum which, under a suitable assumption, is supertranslation invariant.

\subsection{Conventions and notation}

By a spacetime we mean a 4-dimensional smooth manifold with a smooth Lorentzian metric. The signature $(-+++)$ is used for the spacetime metric. All fields and functions are assumed to be smooth unless stated otherwise. When working with tensor components, lower Greek indices ($\mu$, $\nu$, etc.) enumerate all 4 coordinates and upper Latin indices ($A$, $B$, etc.) enumerate the spherical coordinates, which are also abbreviated with $(\tensor{x}{^A})$. $\tensor{h}{_A_B}$ denotes the standard round metric on $S^2$, and all upper Latin indices are raised and lowered with respect to it. $\tensor{\epsilon}{_A_B}$ denotes the corresponding volume form. $D^A$ and $||A$ denote the Levi-Civita covariant derivative associated with $\tensor{h}{_A_B}$. Its Riemann tensor, Ricci tensor, and Ricci scalar are labelled respectively by $\tensor{R}{^A_B_C_D}$, $\tensor{R}{_A_B}$, and $R$. The Laplacian is denoted with $D^2 = D_AD^A$. On $S^2$, integration with respect to the standard volume form associated with $\tensor{h}{_A_B}$ is assumed implicitly. Partial derivatives with respect to the retarded time coordinate are denoted with a dot: $\tensor{\partial}{_u} f \equiv \dot{f}$.

\section{Background information}

Here, we present the concepts of Bondi-Sachs coordinates and conformal completion. Moreover, we recall several useful facts from differential geometry on $S^2$.

\subsection{Bondi-Sachs coordinates and null infinity}

\begin{definition}[Bondi-Sachs coordinates]
    Coordinates $(u, r, \tensor{x}{^A})$, where $u \in (u_{-}, u_{+})$ is the retarded time coordinate, $r \in (r_0, \infty)$ is the radial coordinate, and $(\tensor{x}{^A})$ are local coordinates on $S^2$, defined on some open subset of a spacetime $(M, g)$ are called Bondi-Sachs coordinates if
    \begin{equation}\label{eq:bs-metric}
        \tensor{g}{_\mu_\nu} \d\tensor{x}{^\mu} \d\tensor{x}{^\nu} = - Ve^{2\beta}\d{u^2} - 2e^{2\beta}\d{u}\d{r} + r^2\tensor{\Tilde{g}}{_A_B} \left( \d\tensor{x}{^A} - \tensor{U}{^A}\d{u} \right) \left( \d\tensor{x}{^B} - \tensor{U}{^B}\d{u} \right)
    \end{equation}
    \begin{equation*}
        \text{and}\quad\frac{\partial \left( \det \tensor{\Tilde{g}}{_A_B} \right)}{\partial r} = 0 ,
    \end{equation*}
    for some functions $V$, $\beta$, a $S^2$-vector field $\tensor{U}{^A}$, and a $S^2$-Riemannian metric $\tensor{\Tilde{g}}{_A_B}$.
\end{definition}

\begin{definition}[Conformal completion]
    Let $(M, g)$ be a spacetime such that there exists an open subset of $M$ on which Bondi-Sachs coordinates $(u, r, \tensor{x}{^A})$ can be introduced. A spacetime $\Tilde{M} = M \cup \scri^+$, where $\scri^+ \equiv \{ \frac{1}{r} = 0 \}$ (called the future null infinity), is called a conformal completion of $M$ if
    \begin{enumerate}
        \item $(u, r, \tensor{x}{^A})$, $V$, $\beta$, $\tensor{U}{^A}$, and $\tensor{\Tilde{g}}{_A_B}$ extend smoothly by continuity to $\scri^+$,
        \item $\displaystyle\lim_{r\to\infty} \tensor{\Tilde{g}}{_A_B} = \tensor{h}{_A_B}$, where $\tensor{h}{_A_B}$ is the standard round metric on $S^2$.
    \end{enumerate}
\end{definition}

In order to model some physical objects using these two definitions, we need to know more about the metric coefficients in (\ref{eq:bs-metric}). This is provided by the following theorem.

\begin{theorem}
    In a conformal completion, the coefficients of the metric from (\ref{eq:bs-metric}) have the following expansions:
    \begin{align*}
        \tensor{\Tilde{g}}{_A_B} &= \tensor{h}{_A_B} \left( 1 + \frac{1}{4r^2}\tensor{\chi}{^C^D}\tensor{\chi}{_C_D} \right) + \frac{\tensor{\chi}{_A_B}}{r} + O(r^{-3}) , \\
        \beta &= - \frac{\tensor{\chi}{_A_B}\tensor{\chi}{^A^B}}{32r^2} + O(r^{-3}) , \\
        \tensor{U}{^A} &= - \frac{\tensor{\chi}{^A^B_{||B}}}{2r^2} + \frac{2\tensor{N}{^A}}{r^3} + \frac{\left( \tensor{\chi}{^B^C}\tensor{\chi}{_B_C} \right)_{||A}}{16r^3} + \frac{\tensor{\chi}{^A^B}\tensor{\chi}{_B_C^{||C}}}{2r^3} + O(r^{-4}) , \\
        V &= 1 - \frac{2m}{r} + \frac{\tensor{\chi}{^A^B_{||B}}\tensor{\chi}{_B_C^{||C}} - 4\tensor{N}{^A_{||A}}}{4r^2} + \frac{\tensor{\chi}{^A^B}\tensor{\chi}{_A_B}}{16r^2} + O(r^{-3}) ,
    \end{align*}
    where $\tensor{\chi}{_A_B}$, $\tensor{N}{_A}$, and $m$ depend on $(u, \tensor{x}{^A})$. $m$ is called the mass aspect, $\tensor{N}{_A}$ is called the angular aspect, and $\tensor{\chi}{_A_B}$ is called the shear tensor. Moreover, $\tensor{\chi}{_A_B}$ is traceless and symmetric.
\end{theorem}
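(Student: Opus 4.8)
The plan is to regard $V$, $\beta$, $\tensor{U}{^A}$ and $\tilde g_{AB}$ as unknowns pinned down by three ingredients — the algebraic determinant condition, the smoothness and round-metric hypotheses packaged in the notion of a conformal completion (which supply ``boundary data at $r=\infty$''), and the vacuum Einstein equations $R_{\mu\nu}=0$ in a neighbourhood of $\scri^+$ — and to solve the resulting system order by order in powers of $1/r$. I would begin with the determinant condition: since $\partial_r\det\tilde g_{AB}=0$ and $\tilde g_{AB}\to h_{AB}$ as $r\to\infty$, we have $\det\tilde g_{AB}=\det h_{AB}$ identically. Writing $\tilde g_{AB}=h_{AB}+r^{-1}\chi_{AB}+r^{-2}q_{AB}+O(r^{-3})$, with each coefficient automatically symmetric because $\tilde g_{AB}$ is a metric, expanding the $2\times2$ determinant, and matching powers of $1/r$ gives $h^{AB}\chi_{AB}=0$ at order $r^{-1}$ and $h^{AB}q_{AB}=\tfrac12\chi^{CD}\chi_{CD}$ at order $r^{-2}$; together with the (field-equation-plus-smoothness) fact that the $r^{-2}$ coefficient carries no trace-free part, this forces $q_{AB}=\tfrac14(\chi^{CD}\chi_{CD})h_{AB}$, i.e.\ the stated form of $\tilde g_{AB}$, and identifies $\chi_{AB}$ as a symmetric trace-free $(u,\tensor{x}{^A})$-dependent tensor. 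The conformal-completion axioms supply the remaining asymptotic data: $\beta\to0$ and $\tensor{U}{^A}\to0$ (so that the conformally rescaled metric is regular at $\scri^+$), and $V\to1$, the constant being pinned to $1$ by the leading order of the $V$-equation below together with the fact that the Gauss curvature of $h_{AB}$ is $1$.

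Next I would organize the vacuum equations into the standard Bondi–Sachs hierarchy. The component $R_{rr}=0$ is a ``hypersurface equation'': it expresses $\partial_r\beta$ algebraically in terms of $\tilde g_{AB}$ and $\partial_r\tilde g_{AB}$ (schematically $\partial_r\beta\propto r\,\tilde g^{AC}\tilde g^{BD}\,\partial_r\tilde g_{AB}\,\partial_r\tilde g_{CD}$); substituting the expansion of $\tilde g_{AB}$ and integrating inward from $r=\infty$ with $\beta\to0$ yields $\beta=-(32r^2)^{-1}\chi_{AB}\chi^{AB}+O(r^{-3})$. With $\beta$ known, $R_{rA}=0$ becomes a radial ODE for $\tensor{U}{^A}$ whose source involves only $\beta$, $\tilde g_{AB}$ and their $S^2$-covariant derivatives; integrating with $\tensor{U}{^A}\to0$ produces a free integration function of $(u,\tensor{x}{^A})$ at order $r^{-3}$, which one names $2\tensor{N}{^A}$, while the $r^{-2}$ term and the remaining $r^{-3}$ terms are forced to be the displayed curvature expressions. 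Finally the trace $\tilde g^{AB}R_{AB}=0$ (equivalently a suitable combination of the angular equations) is a radial ODE for $V$ with source built from $\beta$, $\tensor{U}{^A}$, $\tilde g_{AB}$; its leading order returns the Gauss curvature of the limiting sphere, hence $V\to1$, and integration introduces a free function $m(u,\tensor{x}{^A})$ at order $r^{-1}$, the $r^{-2}$ coefficient being fixed by the lower-order data exactly as stated.

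It then remains to check that the other Einstein components do not over-constrain the expansion. The trace-free part of $R_{AB}=0$ contains $\partial_u\tilde g_{AB}$: it is an evolution equation rather than a radial constraint, so it leaves $\chi_{AB}$ free and merely governs its $u$-dependence (and, at subleading order, is consistent with the $q_{AB}$ already found). The components $R_{uu}=0$ and $R_{uA}=0$ are the ``supplementary'' equations; by the contracted Bianchi identities they satisfy radial ODEs forcing them to hold for all $r$ once they hold on a single cut, so they do not affect the $1/r$ structure — instead they reduce to $u$-evolution equations for $m$ and $\tensor{N}{_A}$. Collecting the four expansions obtained above then gives the theorem, with $(\chi_{AB},\tensor{N}{_A},m)$ emerging as the free $u$-dependent data.

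The substantive difficulty is computational rather than conceptual: one must expand the Ricci tensor of the full metric \eqref{eq:bs-metric} to second or third subleading order, generating many nonlinear terms and $S^2$-covariant derivatives, and three points need care — (i) verifying, using the smoothness of the conformal completion, that no logarithmic terms $r^{-k}\log r$ are forced at any stage; (ii) extracting the value $V\to1$ cleanly from the leading angular constraint and the normalization $R=2$ for $h_{AB}$; and (iii) correctly separating, at each order, the genuine integration functions (the data $m$, $\tensor{N}{_A}$) from the pieces determined by lower-order data. To control the bookkeeping I would first derive the hypersurface equations for $\beta$, $\tensor{U}{^A}$, $V$ in closed form, exact in $r$, and only afterwards insert the $1/r$-expansions, rather than expanding everything from the outset.
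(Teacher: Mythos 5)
The paper offers no proof of its own here---it simply cites van der Burg and Jezierski, whose derivation is exactly the Bondi--Sachs hierarchy you outline (determinant condition fixing the trace parts, radial integration of the $R_{rr}$, $R_{rA}$ and contracted $R_{AB}$ hypersurface equations with the conformal-completion boundary conditions, $m$ and $N_A$ arising as integration functions, and the supplementary equations disposed of via the Bianchi identities), so your proposal follows essentially the same route as the proof the paper points to. The only points worth making explicit are that the vacuum Einstein equations near $\scri^+$ are an implicit hypothesis of the theorem, and that the vanishing of the trace-free part of the $r^{-2}$ term in $\tilde g_{AB}$ (equivalently the absence of $\log r$ terms), which you correctly flag as needing care, is precisely where the smoothness requirement of the conformal completion does real work.
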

\begin{proof}
    See \cite[page 7]{jezierski1998} and \cite{van1966gravitational}.
\end{proof}

$\tensor{\chi}{_A_B}$, $\tensor{N}{_A}$, and $m$ define the asymptotical behaviour of the spacetime in question and hence are called the data on $\scri^+$. They will be later used in order to construct different notions of angular momentum. As a result, we are particularly interested in their properties. Most importantly, due to Einstein's equations, they must satisfy certain evolution equations.

\begin{theorem}
    The functions $m$ and $\tensor{N}{_A}$ satisfy the following equations:
    \begin{align}
        \dot{m} &= - \frac18 \tensor{\dot{\chi}}{_A_B}\tensor{\dot{\chi}}{^A^B} + \frac14 \tensor{\dot{\chi}}{^A^B_{||BA}}, \label{eq:m-flux} \\
        3\tensor{\dot{N}}{_A} &= - \tensor{m}{_{||A}} + \frac14 \tensor{\epsilon}{_A_B}\tensor{\Tilde{\lambda}}{^{||B}} - \frac34 \tensor{\chi}{^A_B}\tensor{\dot{\chi}}{^B^C_{||C}} - \frac14 \tensor{\dot{\chi}}{^C^D}\tensor{\chi}{^A_C_{||D}}, \label{eq:n-flux} \\
        \Tilde{\lambda} &= h^{BD}\epsilon^{AC}\chi_{DA||BC}. \nonumber
    \end{align}
\end{theorem}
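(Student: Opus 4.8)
The plan is to obtain both identities by substituting the asymptotic expansions of the preceding theorem into the vacuum Einstein equations $R_{\mu\nu}=0$ written in the Bondi--Sachs gauge \eqref{eq:bs-metric}, and then collecting coefficients order by order in $1/r$ on a neighbourhood of $\scri^+$. Recall that in this gauge the Einstein equations separate into \emph{hypersurface} equations --- $R_{rr}=0$, $R_{rA}=0$, and the trace and trace-free parts of $R_{AB}=0$ --- which are exactly the ones producing the $r$-expansions of $\beta$, $\tensor{U}{^A}$, $V$ and the subleading terms of $\tensor{\tilde{g}}{_A_B}$ quoted above (the Bondi determinant condition $\partial_r(\det\tensor{\tilde{g}}{_A_B})=0$ is already built into the $\frac14 r^{-2}\tensor{\chi}{^C^D}\tensor{\chi}{_C_D}$ term, since $\tensor{\chi}{_A_B}$ is trace-free), and the \emph{supplementary} equations $R_{uu}=0$ and $R_{uA}=0$, whose first non-trivial coefficients encode the $u$-evolution of the free data. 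Identities \eqref{eq:m-flux} and \eqref{eq:n-flux} are precisely these supplementary conditions.

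For \eqref{eq:m-flux} I would expand $R_{uu}=0$ and isolate the $O(r^{-2})$ coefficient, the order at which $\dot m$ first appears (it enters through $V$). Inserting the expansions of $V$, $\beta$, $\tensor{U}{^A}$ and $\tensor{\tilde{g}}{_A_B}$ and repeatedly using that $\tensor{\chi}{_A_B}$ is symmetric and trace-free, the quadratic-in-$\chi$ part collapses to $\frac18\tensor{\dot{\chi}}{_A_B}\tensor{\dot{\chi}}{^A^B}$ and the remaining linear part to the total divergence $\frac14\tensor{\dot{\chi}}{^A^B_{||BA}}$, which gives \eqref{eq:m-flux}; integrating over $S^2$ then annihilates the divergence and reproduces the classical Bondi mass-loss inequality.

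For \eqref{eq:n-flux} the relevant equation is $R_{uA}=0$, expanded to the order at which the angular aspect $\tensor{N}{_A}$ and its $u$-derivative appear ($\tensor{N}{^A}$ enters $\tensor{U}{^A}$ at $O(r^{-3})$ and, through $\tensor{N}{^A_{||A}}$, enters $V$ at $O(r^{-2})$). This is the laborious step: one must retain \emph{all} subleading terms of $\tensor{U}{^A}$ and $\tensor{\tilde{g}}{_A_B}$, commute covariant derivatives on $S^2$ with the help of the unit round-sphere curvature $\tensor{R}{_A_B_C_D}=\tensor{h}{_A_C}\tensor{h}{_B_D}-\tensor{h}{_A_D}\tensor{h}{_B_C}$, and use the tracelessness and symmetry of $\tensor{\chi}{_A_B}$ together with the standard identities for symmetric trace-free $S^2$-tensors. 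Regrouping then separates the contribution $\tensor{m}{_{||A}}$ coming from the coupling to $V$, the pseudo-scalar ``curl'' term $\frac14\tensor{\epsilon}{_A_B}\tensor{\tilde{\lambda}}{^{||B}}$ with $\tilde\lambda=h^{BD}\epsilon^{AC}\chi_{DA||BC}$ arising from the antisymmetric part of the $\chi$-bilinears, and the two remaining quadratic terms, yielding \eqref{eq:n-flux}. An equivalent but shorter route is to invoke the twice-contracted Bianchi identity $\nabla^\mu G_{\mu\nu}=0$: once the hypersurface equations hold it propagates the constraints along $u$, and its $1/r$-expansion in the angular direction reproduces \eqref{eq:n-flux} with considerably less direct index manipulation.

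The main obstacle is precisely this bookkeeping in the $\tensor{N}{_A}$ computation --- the number of cross terms appearing at the relevant order and the reduction to a canonical form using the $S^2$-curvature identities; by contrast, the $\dot m$ equation is short. I would therefore indicate the structure of the derivation and refer to \cite[page 7]{jezierski1998} and \cite{van1966gravitational} for the complete term-by-term verification.
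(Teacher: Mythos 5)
Your proposal is correct and matches the paper's approach in substance: the paper proves this theorem by deferring to the standard derivation of the supplementary Einstein equations ($R_{uu}=0$, $R_{uA}=0$) in Bondi--Sachs gauge, citing \cite[page 92]{cjk}, which is exactly the order-by-order computation you outline before likewise deferring the term-by-term bookkeeping to the literature. The only slight mismatch is bibliographic: the complete verification is in \cite{cjk}, not in the references you name, which the paper cites only for the metric expansions.
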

\begin{proof}
    See \cite[page 92]{cjk}.
\end{proof}

\subsection{Supertranslations}

As mentioned in Chapter \ref{introduction}, the group of symmetries of $\scri^+$ is not the Poincaré group, as is the case in the Minkowski spacetime, but the BMS group, which extends the Poincaré group to include the so-called supertranslations.

\begin{definition}[Supertranslation]
    A supertranslation is a change of coordinates \\ $(u, \tensor{x}{^A}) \mapsto (u - f(\tensor{x}{^A}), \tensor{x}{^A})$, where $f: S^2 \rightarrow \mathbb{R}$.
\end{definition}

Also mentioned in Chapter \ref{introduction} was how this fact gives rise to the supertranslation ambiguity of angular momentum. Hence, when comparing different definitions of angular momentum, we would like to know how they are transformed under supertranslations. In order to calculate that, we will need the following theorem.

\begin{theorem}\label{thm:data-transformations}
    For any function $f: S^2 \rightarrow \mathbb{R}$, the data on $\scri^+$ obey the following transformation rules under the supertranslation $u \mapsto u - f(\tensor{x}{^A})$, $\tensor{x}{^A} \mapsto \tensor{x}{^A}$:
    \begin{align}
        \tensor{\chi}{_A_B} &\longmapsto \tensor{\chi}{_A_B} - 2\tensor{f}{_{||AB}} + \tensor{h}{_A_B}D^2f, \label{eq:chi-trans} \\
        m &\longmapsto m + \frac12 \tensor{\dot{\chi}}{^A^B_{||B}}\tensor{f}{_{||A}} + \frac14 \tensor{\dot{\chi}}{^A^B}\tensor{f}{_{||AB}} + \frac14 \tensor{\ddot{\chi}}{^A^B}\tensor{f}{_{||A}}\tensor{f}{_{||B}}, \nonumber \\
        \begin{split}
            \tensor{N}{_A} &\longmapsto \tensor{N}{_A} - m\tensor{f}{_{||A}} + \frac14 \tensor{f}{_{||C}}\tensor{\chi}{_A_B^{||BC}} - \frac14 \tensor{f}{_{||C}}\tensor{\chi}{^C^B_{||BA}} + \frac14 \tensor{\dot{\chi}}{_A_B^{||B}}\tensor{f}{_{||C}}\tensor{f}{^{||C}} \\
            &\phantom{=}\quad - \frac14 \tensor{\dot{\chi}}{^B^C}\tensor{\chi}{_A_B}f_{||C} - \frac16 \tensor{\ddot{\chi}}{^B^C}\tensor{f}{_{||B}}\tensor{f}{_{||C}}\tensor{f}{_{||A}} - \frac12 \tensor{\dot{\chi}}{^B^C_{||C}}\tensor{f}{_{||B}}\tensor{f}{_{||A}} + \frac{1}{12}\tensor{\ddot{\chi}}{_A_B}\tensor{f}{^{||B}}\tensor{f}{_{||C}}\tensor{f}{^{||C}} \, .
        \end{split} \nonumber
    \end{align}
\end{theorem}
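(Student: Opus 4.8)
The plan is to promote the supertranslation from the bare shift $u \mapsto u - f$ to a genuine diffeomorphism of a neighbourhood of $\scri^+$ which still restricts to $(u,\tensor{x}{^A}) \mapsto (u-f,\tensor{x}{^A})$ on $\scri^+$ but also preserves the Bondi--Sachs form (\ref{eq:bs-metric}) together with the determinant condition, and then to identify the transformed data by re-expanding the metric in the new radial coordinate. This step is necessary because the bare shift alone is not admissible: after it the components $\tensor{g}{_r_r}$ and $\tensor{g}{_r_A}$ no longer vanish and $\det \tensor{\Tilde{g}}{_A_B}$ acquires $r$-dependence. I would therefore set $\bar u = u - f$ and allow compensating asymptotic series
\[
 r = \bar r + a_0 + \frac{a_1}{\bar r} + O(\bar r^{-2}), \qquad \tensor{x}{^A} = \bar{\tensor{x}{^A}} + \frac{\zeta^A}{\bar r} + O(\bar r^{-2}),
\]
with $a_0, a_1, \zeta^A, \dots$ functions of $\bar{\tensor{x}{^A}}$ left undetermined, and fix them order by order in $1/\bar r$ by imposing the three gauge conditions. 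Up to the sign conventions of (\ref{eq:bs-metric}) this forces $a_0 = \tfrac12 D^2 f$, $\zeta^A = -\tensor{f}{^{||A}}$, and expresses the subleading coefficients polynomially in $f$ and its covariant derivatives.

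With this diffeomorphism fixed, I would substitute the asymptotic expansions of the metric coefficients given above, write $u = \bar u + f$ and Taylor-expand in $f$ the $u$-dependent data $\tensor{\chi}{_A_B}$, $m$, $\tensor{N}{_A}$, re-expand everything in powers of $1/\bar r$, and match to the standard form to read off the barred data. The shear rule (\ref{eq:chi-trans}) then drops out almost immediately: the $O(\bar r^{-1})$ part of $r^2\tensor{\Tilde{g}}{_A_B}$ picks up $-2\tensor{f}{_{||AB}}$ from the angular reparametrisation $\zeta^A$ and $2a_0\tensor{h}{_A_B}$ from $r^2 = \bar r^2 + 2a_0\bar r + \dots$, and the determinant condition, which forces this combination to be trace-free, is exactly what fixes $a_0$ and leaves $\tensor{\chi}{_A_B} - 2\tensor{f}{_{||AB}} + \tensor{h}{_A_B}D^2 f$. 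For the mass and angular aspects the Taylor terms produce $\dot m$, $\tensor{\dot{N}}{_A}$ and $\ddot m$, which I would rewrite using the evolution equations (\ref{eq:m-flux}) and (\ref{eq:n-flux}); the remaining contributions come from the mixing of $V$, $\beta$ and $\tensor{U}{^A}$ induced by the coordinate change and are collected at $O(\bar r^{-1})$ of $V$ for $\bar m$, and at $O(\bar r^{-3})$ of $\tensor{U}{^A}$ — equivalently the $O(\bar r^{-2})$ part of $V$ — for $\bar N_A$.

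The main obstacle is the angular-aspect formula. It requires carrying all the expansions consistently up to $O(\bar r^{-3})$, tracking how the radial shift and the angular reparametrisation feed the quadratic and cubic (in $f$) pieces into $\tensor{U}{^A}$, and then massaging the outcome with $S^2$ identities — commuting covariant derivatives at the cost of the curvature term ($\tensor{R}{_A_B} = \tensor{h}{_A_B}$ on the unit sphere) and regrouping divergence-type terms — until it coincides with the stated expression. The cubic terms such as $-\tfrac16 \tensor{\ddot{\chi}}{^B^C}\tensor{f}{_{||B}}\tensor{f}{_{||C}}\tensor{f}{_{||A}}$ and $\tfrac{1}{12}\tensor{\ddot{\chi}}{_A_B}\tensor{f}{^{||B}}\tensor{f}{_{||C}}\tensor{f}{^{||C}}$, which arise from the third-order Taylor coefficient combined with $\dot m$, $\tensor{\dot{N}}{_A}$ and $\ddot m$, are the most delicate to pin down. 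A useful consistency check — and a shortcut that avoids part of this brute-force expansion — is that the transformed triple must again satisfy (\ref{eq:m-flux})--(\ref{eq:n-flux}): feeding the already-known transformed $\tensor{\chi}{_A_B}$ and $m$ into (\ref{eq:n-flux}) determines $\dot{\bar N}_A$, hence $\bar N_A$ up to a $u$-independent term, which can then be fixed from the leading behaviour or from the value at a single $u$.
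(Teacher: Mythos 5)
The paper itself does not prove Theorem \ref{thm:data-transformations}; its ``proof'' is a pointer to \cite{cjk} (Appendix C.5) and, for the angular aspect, to \cite{cwy-transformations}. Your sketch reconstructs essentially the derivation carried out in those references: complete the bare shift $u\mapsto u-f$ to a diffeomorphism preserving the Bondi--Sachs gauge conditions ($g_{rr}=0$, $g_{rA}=0$, the $r$-independence of $\det\Tilde{g}_{AB}$), fix the compensating radial and angular series order by order, re-expand, and read off the transformed data. Your leading-order identifications are consistent: killing the $d\bar{x}^A\,d\bar{r}$ cross term forces $\zeta_A=-f_{||A}$, and the determinant (trace-free) condition then forces $a_0=\tfrac12 D^2 f$, which is exactly what produces \eqref{eq:chi-trans}. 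So in substance this is the same route as the proof the paper outsources.

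Two caveats. First, as you concede, the sketch stops precisely where the content of the theorem lies: the quadratic and cubic in $f$ terms of the $N_A$ rule require carrying the expansion of $U^A$ consistently to $O(\bar{r}^{-3})$, and this is exactly the computation where published results have disagreed (hence the paper's remark that the $N_A$ law is ``correctly proven'' in \cite{cwy-transformations}); without executing it, the specific coefficients $-\tfrac14$, $-\tfrac16$, $-\tfrac12$, $\tfrac{1}{12}$ are not established by your argument. Second, your parenthetical claim that $\bar{N}_A$ can ``equivalently'' be read from the $O(\bar{r}^{-2})$ part of $V$ is not right: that coefficient contains the angular aspect only through $N^A_{\ ||A}$, so it determines the divergence of $N_A$ and not $N_A$ itself; the aspect must be read off the $\bar{r}^{-3}$ term of $U^A$. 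Relatedly, the proposed shortcut via the evolution equation \eqref{eq:n-flux} fixes $\bar{N}_A$ only up to a $u$-independent function, as you yourself note, so it can serve as a consistency check but not as a substitute for the $O(\bar{r}^{-3})$ matching.
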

\begin{proof}
    See \cite[Appendix C.5]{cjk}. Transformation of $\tensor{N}{_A}$ is correctly proven in \cite[Appendix~C]{cwy-transformations}.
\end{proof}

\subsection{Differential geometry on 2-sphere}

Finally, we list some of the important facts from differential geometry that shall later be used for our analysis of angular momentum.

\subsection{Curvature and covariant derivative}

\begin{lemma}
    On $S^2$, the curvature tensors have the following properties:
    \begin{enumerate}
        \item The Riemann tensor is proportional to the metric:
        \begin{equation}\label{eq:riemann}
            \tensor{R}{^A_B_C_D} = \tensor{\delta}{^A_C}\tensor{h}{_B_D} - \tensor{\delta}{^A_D}\tensor{h}{_B_C} ,
        \end{equation}
        \item The Ricci tensor is equal to the metric:
        \begin{equation*}
            \tensor{R}{_A_B} = \tensor{h}{_A_B} ,
        \end{equation*}
        \item The Ricci scalar is equal to $R = 2$.
    \end{enumerate}
\end{lemma}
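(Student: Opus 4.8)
The plan is to use the special feature of two-dimensional geometry that the Riemann tensor has a single algebraically independent component, so that all three assertions collapse to computing one number. First I would invoke the index symmetries of $\tensor{R}{_A_B_C_D}$ — antisymmetry in $AB$ and in $CD$, symmetry under exchange of the pairs $AB \leftrightarrow CD$, and the first Bianchi identity. In dimension two the space of tensors with this symmetry type is one-dimensional and is spanned by $\tensor{h}{_A_C}\tensor{h}{_B_D} - \tensor{h}{_A_D}\tensor{h}{_B_C}$, so necessarily $\tensor{R}{_A_B_C_D} = \kappa\left( \tensor{h}{_A_C}\tensor{h}{_B_D} - \tensor{h}{_A_D}\tensor{h}{_B_C} \right)$ for some function $\kappa$ on $S^2$; raising the first index gives $\tensor{R}{^A_B_C_D} = \kappa\left( \tensor{\delta}{^A_C}\tensor{h}{_B_D} - \tensor{\delta}{^A_D}\tensor{h}{_B_C} \right)$.

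With this form in hand, contracting the first and third indices yields $\tensor{R}{_B_D} = \kappa\left( \tensor{\delta}{^A_A}\tensor{h}{_B_D} - \tensor{h}{_B_D} \right) = \kappa\,\tensor{h}{_B_D}$, using $\tensor{\delta}{^A_A} = 2$, and a further contraction with $\tensor{h}{^B^D}$ gives $R = 2\kappa$. Hence statements (2) and (3) follow from (1) once we know $\kappa = 1$, and (1) itself is exactly the claimed formula precisely when $\kappa = 1$. So everything reduces to establishing $\kappa \equiv 1$ for the standard round metric.

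To pin down $\kappa$, the cleanest route is a direct computation in standard spherical coordinates $(\theta, \varphi)$, where $\tensor{h}{_A_B}\d\tensor{x}{^A}\d\tensor{x}{^B} = \d\theta^2 + \sin^2\theta\,\d\varphi^2$; the only nonvanishing Christoffel symbols are $\tensor{\Gamma}{^\theta_\varphi_\varphi} = -\sin\theta\cos\theta$ and $\tensor{\Gamma}{^\varphi_\theta_\varphi} = \tensor{\Gamma}{^\varphi_\varphi_\theta} = \cot\theta$, and a short calculation gives $\tensor{R}{^\theta_\varphi_\theta_\varphi} = \sin^2\theta = \tensor{h}{_\varphi_\varphi}$, which by the formula above forces $\kappa = 1$ at every point. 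Alternatively one can note that $S^2$ is homogeneous so $\kappa$ is constant, and then Gauss--Bonnet, $\int_{S^2}\kappa = 4\pi$, together with $\operatorname{area}(S^2) = 4\pi$, gives $\kappa = 1$. There is no genuine obstacle here: the only points needing care are keeping track of index positions in the contractions and fixing the sign convention in the definition of the Ricci tensor — here $\tensor{R}{_B_D} = \tensor{R}{^A_B_A_D}$ — so that it agrees with the stated $+$ signs.
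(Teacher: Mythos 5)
Your proof is correct and follows essentially the same route as the paper: both arguments exploit the fact that in two dimensions the Riemann tensor has a single independent component, compute $R_{\theta\varphi\theta\varphi}=\sin^2\theta$ (equivalently $\tensor{R}{^\theta_\varphi_\theta_\varphi}=\sin^2\theta=\tensor{h}{_\varphi_\varphi}$) for the round metric, and conclude that the curvature equals $\tensor{\delta}{^A_C}\tensor{h}{_B_D}-\tensor{\delta}{^A_D}\tensor{h}{_B_C}$, after which the Ricci tensor and scalar follow by contraction. Your phrasing via the one-dimensionality of the space of algebraic curvature tensors (and the optional Gauss--Bonnet determination of $\kappa$) is just a slightly more abstract packaging of the paper's ``check the symmetries and match the one nonzero component'' step.
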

\begin{proof}
    By explicit calculation of the Riemann tensor for $S^2$, one finds that it has only one nonzero independent component, say $R_{\theta\varphi\theta\varphi} = \sin^2\theta$. One can check that $\tensor{\delta}{^A_C}\tensor{h}{_B_D} - \tensor{\delta}{^A_D}\tensor{h}{_B_C}$ satisfies all of the symmetries of the Riemann tensor. Moreover:
    \begin{equation*}
        \tensor{\delta}{^\theta_\theta}\tensor{h}{_\varphi_\varphi} - \tensor{\delta}{^\theta_\varphi}\tensor{h}{_\varphi_\theta} = \sin^2\theta = R_{\theta\varphi\theta\varphi} = h_{A\theta}\tensor{R}{^A_\varphi_\theta_\varphi} = (h_{\theta\theta}\tensor{R}{^\theta_\varphi_\theta_\varphi} + h_{\varphi\theta}\tensor{R}{^\varphi_\varphi_\theta_\varphi}) = \tensor{R}{^\theta_\varphi_\theta_\varphi},
    \end{equation*}
    and the expression is zero for other independent components. Hence, it completely describes the Riemann tensor. The other identities follow easily from the first one.
\end{proof}

\begin{lemma}
    The Laplacian on $S^2$ satisfies the following commutation relation:
    \begin{equation}\label{eq:laplacian}
        D^2(\tensor{f}{_{||A}}) - \tensor{(D^2f)}{_{||A}} = \tensor{f}{_{||A}} ,
    \end{equation}
    where $f: S^2 \rightarrow \mathbb{R}$ is an arbitrary function.
\end{lemma}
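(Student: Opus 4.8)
The plan is to compute both terms by commuting the covariant derivatives using the Ricci identity on $S^2$, and then specialising the curvature via the preceding lemma. First I would write out $D^2(f_{\|A}) = h^{BC} f_{\|ABC}$ and try to move the last two derivatives so as to produce $h^{BC} f_{\|CAB} = (D^2 f)_{\|A}$ plus curvature corrections. Concretely, one commutes $f_{\|ABC}$ to $f_{\|BAC}$ first (the first two indices are already symmetric since $f_{\|AB}$ is a Hessian, so that step is trivial), and then commutes $f_{\|BAC}$ to $f_{\|BCA}$, which introduces a Riemann term acting on the one-form $f_{\|B}$:
\begin{equation*}
    f_{\|BAC} - f_{\|BCA} = -\tensor{R}{^D_B_A_C} f_{\|D}.
\end{equation*}

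Next I would contract with $h^{BC}$ and substitute $\tensor{R}{^D_B_A_C} = \tensor{\delta}{^D_A}\tensor{h}{_B_C} - \tensor{\delta}{^D_C}\tensor{h}{_B_A}$ from \eqref{eq:riemann}. Contracting, $h^{BC}(\tensor{\delta}{^D_A}\tensor{h}{_B_C} - \tensor{\delta}{^D_C}\tensor{h}{_B_A}) f_{\|D} = 2\tensor{\delta}{^D_A} f_{\|D} - \tensor{\delta}{^B_A} f_{\|B} = f_{\|A}$, so that $h^{BC} f_{\|BAC} = h^{BC} f_{\|BCA} + f_{\|A}$. Since $h^{BC} f_{\|BCA} = (h^{BC} f_{\|BC})_{\|A} = (D^2 f)_{\|A}$ because the metric is covariantly constant, and the symmetry of the Hessian gives $h^{BC} f_{\|ABC} = h^{BC} f_{\|BAC}$, one assembles $D^2(f_{\|A}) = (D^2 f)_{\|A} + f_{\|A}$, which is exactly \eqref{eq:laplacian}.

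I do not expect a genuine obstacle here; the only thing to be careful about is bookkeeping with index order and sign conventions in the Ricci identity, and making sure that the two commutation steps are applied to the correct pair of slots (first commuting the outermost derivative past the middle one, treating $f_{\|B}$ as a covector). A cleaner packaging, which I would mention, is that for any one-form $\omega_A$ one has the general identity $D^2 \omega_A - (D^A D^B \omega_B)$-type relations, but since here $\omega_A = f_{\|A}$ is exact the Hessian symmetry makes the direct route above the shortest. I would present the computation in two or three displayed lines, citing \eqref{eq:riemann} at the single point where the sphere's curvature enters.
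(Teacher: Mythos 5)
Your route is essentially the paper's: use the Hessian symmetry to reorder the first two indices, commute the remaining pair of covariant derivatives via the Ricci identity applied to the one-form $f_{||A}$, substitute (\ref{eq:riemann}), and contract with $h^{BC}$; the paper does exactly this, merely starting from $h^{BC}f_{||ACB}$ instead of $h^{BC}f_{||ABC}$. The one thing to repair is the sign (equivalently, the order of the last two Riemann indices) in your displayed commutator. With the paper's conventions — derivative indices appended in the order the derivatives act, and the identity $\alpha_{A||CB} = \alpha_{A||BC} - \tensor{R}{^D_A_B_C}\alpha_D$ together with (\ref{eq:riemann}) — the relation you need is
\begin{equation*}
    \tensor{f}{_{||BAC}} - \tensor{f}{_{||BCA}} = -\tensor{R}{^D_B_C_A}\tensor{f}{_{||D}} = +\tensor{R}{^D_B_A_C}\tensor{f}{_{||D}},
\end{equation*}
whereas you wrote $-\tensor{R}{^D_B_A_C}\tensor{f}{_{||D}}$. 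Taken literally, contracting your version with $h^{BC}$ yields $h^{BC}f_{||BAC} = h^{BC}f_{||BCA} - f_{||A}$ and hence the wrong sign in (\ref{eq:laplacian}); your subsequent assembly quietly uses the correct sign, so the final statement is right, but the two lines are inconsistent as written. With the corrected identity, $h^{BC}\tensor{R}{^D_B_A_C}\tensor{f}{_{||D}} = (2\tensor{\delta}{^D_A} - \tensor{\delta}{^D_A})\tensor{f}{_{||D}} = \tensor{f}{_{||A}}$, and the argument goes through exactly as in the paper. This is a bookkeeping slip rather than a conceptual gap, but it is precisely the kind you flagged, so fix the displayed line before presenting the two- or three-line computation.
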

\begin{proof}
    Note that for any one-form $\tensor{\alpha}{_A}$:
    \begin{equation*}
        \tensor{\alpha}{_A_{||CB}} = \tensor{\alpha}{_A_{||BC}} - \tensor{R}{^D_A_B_C}\tensor{\alpha}{_D} .
    \end{equation*}
    Therefore, using (\ref{eq:riemann}):
    \begin{equation*}
        \begin{split}
            D^2(\tensor{f}{_{||A}}) &= \tensor{f}{_{||A}^{||B}_{||B}} = \tensor{h}{^B^C}\tensor{f}{_{||ACB}} = \tensor{h}{^B^C}\tensor{f}{_{||CAB}} = \tensor{h}{^B^C} \left( \tensor{f}{_{||CBA}} - \tensor{R}{^D_C_B_A}\tensor{f}{_{||D}} \right) \\
            &= \tensor{h}{^B^C}\tensor{f}{_{||CBA}} - \tensor{h}{^B^C}\tensor{f}{_{||D}} \left( \tensor{\delta}{^D_B}\tensor{h}{_C_A} - \tensor{\delta}{^D_A}\tensor{h}{_C_B} \right) = \tensor{f}{^{||B}_{||BA}} - \tensor{f}{_{||A}} + 2\tensor{f}{_{||A}} \\
            &= \tensor{(D^2f)}{_{||A}} + \tensor{f}{_{||A}} .
        \end{split}
    \end{equation*}
\end{proof}

\subsection{Dipole-like functions}

\begin{lemma}\label{lem:dipole}
    Let $n: S^2 \rightarrow \mathbb{R}$ be a linear combination of $l = 1$ spherical harmonics. Then, it satisfies the following:
    \begin{align}
        (D^2 + 2)n &= 0 , \label{eq:dipole} \\
        (D^2 + 1) \tensor{n}{_{||A}} &= 0 \label{eq:dipole-gradient} , \\
        \tensor{n}{_{||AB}} &= -\tensor{h}{_A_B}n \label{eq:dipole-derivative} .
    \end{align}
\end{lemma}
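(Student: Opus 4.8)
The plan is to establish \eqref{eq:dipole} first, then derive the other two as consequences. Equation \eqref{eq:dipole} is the classical eigenvalue equation for $l=1$ spherical harmonics: since $D^2 Y_{lm} = -l(l+1) Y_{lm}$ on the unit round sphere, any linear combination $n$ of $l=1$ harmonics satisfies $D^2 n = -2n$, which is exactly $(D^2+2)n = 0$. I would simply quote this as the defining property of $l=1$ harmonics, or alternatively note that the three functions $x^i/|x|$ (restrictions of Cartesian coordinates to $S^2\subset\mathbb{R}^3$) span this space and verify the eigenvalue directly for one of them.

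Next I would prove \eqref{eq:dipole-gradient} by applying the gradient to \eqref{eq:dipole} and commuting derivatives using the previously established Lemma on the Laplacian, equation \eqref{eq:laplacian}. Concretely, $0 = \bigl((D^2+2)n\bigr)_{||A} = (D^2 n)_{||A} + 2 n_{||A}$, and by \eqref{eq:laplacian} we have $(D^2 n)_{||A} = D^2(n_{||A}) - n_{||A}$, so $0 = D^2(n_{||A}) - n_{||A} + 2 n_{||A} = (D^2+1) n_{||A}$, which is \eqref{eq:dipole-gradient}.

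The main content is \eqref{eq:dipole-derivative}, and this is the step I expect to be the real obstacle, since it is a pointwise tensorial identity rather than an operator identity. My approach would be to consider the trace-free symmetric part of the Hessian, $T_{AB} := n_{||AB} - \tfrac12 h_{AB} D^2 n = n_{||AB} + h_{AB} n$ (using \eqref{eq:dipole}), and show $T_{AB}\equiv 0$; establishing the claim then follows since $n_{||AB} = T_{AB} - h_{AB}n = -h_{AB}n$. One clean way is an integration-by-parts / Bochner-type argument: compute $\int_{S^2} T_{AB}T^{AB}$, integrate by parts twice to move all derivatives off one factor, and use \eqref{eq:dipole-gradient}, the curvature identity \eqref{eq:riemann}, and \eqref{eq:dipole} to reduce the integrand to a total divergence, concluding $\int T_{AB}T^{AB} = 0$ and hence $T_{AB}=0$ since the integrand is non-negative. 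Alternatively — and perhaps more in the spirit of this paper — I would just verify \eqref{eq:dipole-derivative} by direct computation in standard spherical coordinates $(\theta,\varphi)$ for the representative function $n = \cos\theta$ (the $l=1$, $m=0$ harmonic): compute $n_{||AB}$ using the Christoffel symbols of $h_{AB}$, check it equals $-h_{AB}\cos\theta$ componentwise, and then invoke rotational covariance of both sides together with linearity to extend the identity to an arbitrary $l=1$ combination. I would present whichever of these is shorter; the coordinate check is elementary but a little tedious, while the Bochner argument is slicker but requires carefully tracking the curvature terms.
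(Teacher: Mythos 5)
Your treatment of \eqref{eq:dipole} and \eqref{eq:dipole-gradient} coincides with the paper's: the eigenvalue property is quoted, and the gradient identity follows by differentiating and commuting with \eqref{eq:laplacian}. For \eqref{eq:dipole-derivative}, however, you take a genuinely different—and more complete—route. The paper only computes the trace, $h^{AB}n_{||AB} = D^2 n = -2n$, and then ``multiplies by $\frac12 h_{AB}$''; this step tacitly assumes that the Hessian $n_{||AB}$ is pure trace, which is exactly the nontrivial content of the identity, so the paper's argument is really a one-line appeal to the known fact that $l=1$ harmonics are restrictions of linear functions of $\mathbb{R}^3$. You correctly single this out as the real obstacle and close it: either by showing the trace-free part $T_{AB} = n_{||AB} + h_{AB}n$ satisfies $\int_{S^2} T_{AB}T^{AB} = 0$ (which indeed works: integrating by parts and using \eqref{eq:laplacian} with $D^2 n = -2n$ gives $\int n_{||AB}n^{||AB} = \int n_{||A}n^{||A} = 2\int n^2$, so the Bochner integral vanishes and $T_{AB}\equiv 0$ pointwise), or by the elementary check for $n=\cos\theta$ ($n_{||\theta\theta} = -\cos\theta$, $n_{||\theta\varphi}=0$, $n_{||\varphi\varphi} = -\sin^2\theta\,\cos\theta$) together with rotational covariance and linearity, since rotations span the $l=1$ space from this representative. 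Either of your routes is valid; what they buy, compared with the paper, is a self-contained proof of the tensorial statement rather than only its trace, at the cost of a slightly longer argument.
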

\begin{proof}
    (\ref{eq:dipole}) follows directly from the fact that $n$ is a linear combination of $l = 1$ spherical harmonics. Using (\ref{eq:laplacian}) and (\ref{eq:dipole}), we have:
    \begin{equation*}
        D^2(\tensor{n}{_{||A}}) = \tensor{(D^2n)}{_{||A}} + \tensor{n}{_{||A}} = - 2\tensor{n}{_{||A}} + \tensor{n}{_{||A}} = - \tensor{n}{_{||A}} ,
    \end{equation*}
    which proves (\ref{eq:dipole-gradient}). Finally:
    \begin{equation*}
        \tensor{h}{^A^B}\tensor{n}{_{||AB}} = \tensor{n}{^{||A}_{||A}} = - 2n ,
    \end{equation*}
    which, multiplied by $\frac12 \tensor{h}{_A_B}$ on both sides, becomes (\ref{eq:dipole-derivative}).
\end{proof}

\subsection{Integral formulae}

\begin{lemma}\label{lem:full-divergence}
    Let $\tensor{X}{^A}$ be any vector field on $S^2$. Then:
    \begin{equation*}
        \int_{S^2} \tensor{X}{^A_{||A}} = 0.
    \end{equation*}
\end{lemma}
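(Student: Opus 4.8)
The plan is to recognize this as a direct consequence of the divergence theorem (Stokes' theorem) on a closed manifold. Since $S^2$ is a compact manifold without boundary, the integral of any exact form or, equivalently, any total divergence vanishes. Concretely, the quantity $\tensor{X}{^A_{||A}}$ is precisely the divergence of the vector field $\tensor{X}{^A}$ with respect to the Levi-Civita connection of $\tensor{h}{_A_B}$, and integrating it against the Riemannian volume form yields zero.

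First I would recall that for a vector field $\tensor{X}{^A}$ on a Riemannian manifold, $\tensor{X}{^A_{||A}} = \frac{1}{\sqrt{\det h}}\,\partial_A\!\left(\sqrt{\det h}\,\tensor{X}{^A}\right)$, so that $\tensor{X}{^A_{||A}}$ times the volume element $\sqrt{\det h}$ is the coordinate divergence $\partial_A(\sqrt{\det h}\,\tensor{X}{^A})$. Alternatively, and perhaps more cleanly, I would note that the one-form $\iota_X \omega$ obtained by contracting $\tensor{X}{^A}$ into the area two-form $\omega$ satisfies $\d(\iota_X\omega) = (\tensor{X}{^A_{||A}})\,\omega$ by Cartan's formula together with $\d\omega = 0$. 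Then I would invoke Stokes' theorem: $\int_{S^2} \d(\iota_X\omega) = \int_{\partial S^2}\iota_X\omega = 0$, since $\partial S^2 = \varnothing$.

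There is essentially no obstacle here; the only point requiring a word of care is that $\tensor{X}{^A}$ must be globally smooth (which is assumed, per the paper's conventions that all fields are smooth), so that $\iota_X\omega$ is a genuine global one-form on $S^2$ and Stokes' theorem applies without boundary terms. If one preferred an argument using a coordinate chart such as standard spherical coordinates $(\theta,\varphi)$, one would have to handle the coordinate singularities at the poles, which makes the coordinate-free Stokes argument the more economical route. I would therefore present the proof in the two-line form: apply Cartan's magic formula to $\iota_X\omega$, use $\d\omega=0$ to identify $\d(\iota_X\omega)$ with $(\tensor{X}{^A_{||A}})\,\omega$, and conclude by Stokes' theorem on the closed surface $S^2$.
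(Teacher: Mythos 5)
Your proof is correct and follows essentially the same route as the paper: both reduce the statement to Stokes' theorem applied to the divergence on $S^2$ and conclude from $\partial S^2 = \emptyset$. Your extra step identifying $\tensor{X}{^A_{||A}}\,\omega$ with $\d(\iota_X\omega)$ merely makes explicit what the paper delegates to the cited reference.
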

\begin{proof}
    This follows directly from Stokes' theorem written out in index notation (see \cite[Appendix E]{carroll2019} for details):
    \begin{equation*}
        \int_{S^2} \tensor{X}{^A_{||A}} = \int_{\partial S^2} n_AX^A,
    \end{equation*}
    where $n^A$ is the unit normal to the boundary, and the integral on the right-hand side is taken with respect to the corresponding volume form on the boundary. Since $\partial S^2 = \emptyset$, the integral is zero.
\end{proof}

\begin{lemma}[Integration by parts]\label{lem:parts}
    Let $T$ be a tensor field of type $(k,l)$ and $U$ a tensor field of type $(l,k-1)$, both on $S^2$. Then:
    \begin{equation*}
        \int_{S^2} \tensor{T}{^{AB_1 \dots B_{k-1}}_{C_1 \dots C_l}_{||A}}\tensor{U}{^{C_1 \dots C_l}_{B_1 \dots B_{k-1}}} = - \int_{S^2} \tensor{T}{^{AB_1 \dots B_{k-1}}_{C_1 \dots C_l}}\tensor{U}{^{C_1 \dots C_l}_{B_1 \dots B_{k-1}}_{||A}}.
    \end{equation*}
\end{lemma}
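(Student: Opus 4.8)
The plan is to derive this identity directly from Lemma \ref{lem:full-divergence} by applying it to a cleverly chosen vector field built from $T$ and $U$. First I would define the vector field
\begin{equation*}
    \tensor{X}{^A} \coloneqq \tensor{T}{^{AB_1 \dots B_{k-1}}_{C_1 \dots C_l}}\tensor{U}{^{C_1 \dots C_l}_{B_1 \dots B_{k-1}}},
\end{equation*}
where all the repeated indices $B_1, \dots, B_{k-1}, C_1, \dots, C_l$ are contracted, leaving a genuine $S^2$-vector field with the single free upper index $A$. Applying Lemma \ref{lem:full-divergence} to this $\tensor{X}{^A}$ gives $\int_{S^2} \tensor{X}{^A_{||A}} = 0$.

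Next I would expand $\tensor{X}{^A_{||A}}$ using the Leibniz rule for the covariant derivative $D_A$: since $D_A$ is a derivation that commutes with contractions, $\tensor{X}{^A_{||A}}$ splits into the term where the derivative hits $T$ plus the term where it hits $U$, namely
\begin{equation*}
    \tensor{X}{^A_{||A}} = \tensor{T}{^{AB_1 \dots B_{k-1}}_{C_1 \dots C_l}_{||A}}\tensor{U}{^{C_1 \dots C_l}_{B_1 \dots B_{k-1}}} + \tensor{T}{^{AB_1 \dots B_{k-1}}_{C_1 \dots C_l}}\tensor{U}{^{C_1 \dots C_l}_{B_1 \dots B_{k-1}}_{||A}}.
\end{equation*}
Integrating both sides over $S^2$ and using that the left-hand side integrates to zero immediately yields the claimed formula after moving one term to the other side.

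The only genuinely delicate point — more bookkeeping than obstacle — is checking that the index pattern is consistent so that $T \otimes U$ really does contract down to an object with exactly one free index: $T$ is type $(k,l)$ and $U$ is type $(l,k-1)$, so the $l$ lower indices of $T$ pair with the $l$ upper indices of $U$, and $k-1$ of the upper indices of $T$ pair with the $k-1$ lower indices of $U$, leaving precisely one upper index on $T$, which becomes the index $A$ of $\tensor{X}{^A}$. One should also note that the metric compatibility of $D_A$ with $\tensor{h}{_A_B}$ is what guarantees the Leibniz expansion is clean even though raising and lowering of the contracted indices is implicit. With these observations the proof is complete.
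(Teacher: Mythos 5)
Your argument is exactly the paper's: you form the contracted vector field $\tensor{X}{^A} = \tensor{T}{^{AB_1 \dots B_{k-1}}_{C_1 \dots C_l}}\tensor{U}{^{C_1 \dots C_l}_{B_1 \dots B_{k-1}}}$, expand its divergence by the Leibniz rule, and apply Lemma \ref{lem:full-divergence}. The proof is correct and coincides with the one given in the paper, with your index-bookkeeping remark being a harmless elaboration.
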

\begin{proof}
    Note that
    \begin{multline*}
        \left( \tensor{T}{^{AB_1 \dots B_{k-1}}_{C_1 \dots C_l}}\tensor{U}{^{C_1 \dots C_l}_{B_1 \dots B_{k-1}}} \right)_{||A} = \tensor{T}{^{AB_1 \dots B_{k-1}}_{C_1 \dots C_l}_{||A}}\tensor{U}{^{C_1 \dots C_l}_{B_1 \dots B_{k-1}}} \\ + \tensor{T}{^{AB_1 \dots B_{k-1}}_{C_1 \dots C_l}}\tensor{U}{^{C_1 \dots C_l}_{B_1 \dots B_{k-1}}_{||A}}.
    \end{multline*}
    The result then follows directly from Lemma \ref{lem:full-divergence}.
\end{proof}

Note that Lemma \ref{lem:parts} can be easily generalised to products of multiple components of tensors of suitable types, which will be frequently used throughout this paper.

\begin{lemma}\label{lem:zero-int}
    Let $u, v : S^2 \rightarrow \mathbb{R}$ such that $D^2u = - k(k + 1)u$ and $D^2v = - l(l + 1)v$, and let $n$ be as in Lemma \ref{lem:dipole}. Then, if $k \neq l$:
    \begin{equation*}
        \int_{S^2} n\tensor{\epsilon}{^A^B}\tensor{u}{_{||A}}\tensor{v}{_{||B}} = 0.
    \end{equation*}
\end{lemma}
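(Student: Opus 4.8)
The plan is to exploit the integral identity $\int_{S^2} n\,\tensor{\epsilon}{^A^B}\tensor{u}{_{||A}}\tensor{v}{_{||B}}$ by integrating by parts and using the eigenvalue equations. First I would note the antisymmetry of $\tensor{\epsilon}{^A^B}$, which kills the ``diagonal'' term $\tensor{\epsilon}{^A^B}\tensor{u}{_{||A}}\tensor{u}{_{||B}} = 0$ and, more usefully, lets me rewrite the integrand in a symmetric way. Using Lemma~\ref{lem:parts} to move a covariant derivative from $\tensor{u}{_{||A}}$ onto $n\tensor{\epsilon}{^A^B}\tensor{v}{_{||B}}$, and recalling that $\tensor{\epsilon}{^A^B}$ is covariantly constant, gives
\begin{equation*}
    \int_{S^2} n\,\tensor{\epsilon}{^A^B}\tensor{u}{_{||A}}\tensor{v}{_{||B}} = - \int_{S^2} u\,\tensor{\epsilon}{^A^B}\tensor{n}{_{||A}}\tensor{v}{_{||B}} - \int_{S^2} u\,n\,\tensor{\epsilon}{^A^B}\tensor{v}{_{||BA}} .
\end{equation*}
The last term vanishes because $\tensor{v}{_{||BA}}$ is symmetric in $AB$ (second covariant derivatives of a scalar commute on $S^2$) while $\tensor{\epsilon}{^A^B}$ is antisymmetric. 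This produces a symmetric relation in $u$ and $v$: denoting $I(u,v) = \int_{S^2} n\,\tensor{\epsilon}{^A^B}\tensor{u}{_{||A}}\tensor{v}{_{||B}}$, we learn $I(u,v) = -I(v,u) - \int u\,n\,\tensor{\epsilon}{^A^B}(\dots)$; but in fact the cleaner route is to extract the Laplacian.

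The key step is to insert the eigenvalue equations by applying the divergence operator twice. Consider instead integrating by parts to bring in $D^2$. Write
\begin{equation*}
    I(u,v) = \int_{S^2} n\,\tensor{\epsilon}{^A^B}\tensor{u}{_{||A}}\tensor{v}{_{||B}}
\end{equation*}
and integrate by parts once to move $\tensor{}{_{||A}}$ off $u$: since $\tensor{\epsilon}{^A^B}\tensor{v}{_{||BA}} = 0$ as above, one gets $I(u,v) = -\int u\,\tensor{\epsilon}{^A^B}\tensor{n}{_{||A}}\tensor{v}{_{||B}}$. Now integrate by parts again, this time moving $\tensor{}{_{||B}}$ off $v$ onto $u\,\tensor{\epsilon}{^A^B}\tensor{n}{_{||A}}$; the term where the derivative hits $\tensor{n}{_{||A}}$ produces $\tensor{\epsilon}{^A^B}\tensor{n}{_{||AB}} = 0$ by symmetry again, leaving $I(u,v) = \int v\,\tensor{\epsilon}{^A^B}\tensor{u}{_{||B}}\tensor{n}{_{||A}} = -\,I(v,u)$ — merely antisymmetry, no new information. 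To genuinely use the eigenvalues I instead compute the Laplacian of the relevant quantity: consider $\tensor{\epsilon}{^A^B}\tensor{u}{_{||A}}\tensor{v}{_{||B}}$, which up to sign equals the two-dimensional ``cross product'' $D^A u\,D^B v\,\tensor{\epsilon}{_A_B}$. The crucial observation is that this can be written as a total divergence in two inequivalent ways, $\tensor{\epsilon}{^A^B}\tensor{u}{_{||A}}\tensor{v}{_{||B}} = (\tensor{\epsilon}{^A^B} u\,\tensor{v}{_{||B}})_{||A} = -(\tensor{\epsilon}{^A^B} v\,\tensor{u}{_{||A}})_{||B}$, and so pairing with $n$ and integrating by parts lands a single derivative on $n$ either way. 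The honest content comes from using the vector identity that expresses $n\,\tensor{\epsilon}{^A^B}\tensor{u}{_{||A}}\tensor{v}{_{||B}}$ — after one integration by parts — in terms of $\int \tensor{n}{_{||A}}\tensor{\epsilon}{^A^B}(u\,\tensor{v}{_{||B}})$, then peeling off another derivative and invoking $D^2 u = -k(k+1)u$, $D^2 v = -l(l+1)v$, together with $(D^2+1)\tensor{n}{_{||A}} = 0$ from \eqref{eq:dipole-gradient}.

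Concretely, the cleanest execution I would carry out is: integrate by parts twice so that both derivatives act on $n$ and on the ``other'' field in a way that produces the combination $\bigl(D^2 u\bigr)\cdot v - u\cdot\bigl(D^2 v\bigr)$ against $n\,\tensor{\epsilon}{^A^B}$-type structure — but since $\tensor{\epsilon}{^A^B}$ is antisymmetric and $\tensor{n}{_{||AB}} = -\tensor{h}{_A_B}n$ by \eqref{eq:dipole-derivative}, the terms quadratic in $n$-derivatives collapse. Tracking the bookkeeping, one finds an identity of the schematic form
\begin{equation*}
    \bigl[\,l(l+1) - k(k+1)\,\bigr]\int_{S^2} n\,\tensor{\epsilon}{^A^B}\tensor{u}{_{||A}}\tensor{v}{_{||B}} = 0 ,
\end{equation*}
obtained by computing $\int_{S^2} (D^2 n)\,\tensor{\epsilon}{^A^B}\tensor{u}{_{||A}}\tensor{v}{_{||B}}$ in two ways — once via $D^2 n = -2n$ and once by integrating the Laplacian by parts onto $\tensor{u}{_{||A}}$ and $\tensor{v}{_{||B}}$, where the cross terms $\tensor{u}{_{||A||C}}\tensor{v}{_{||B}^{||C}}\tensor{\epsilon}{^A^B}$ and the like cancel by the symmetry of second derivatives against $\tensor{\epsilon}{^A^B}$ (after commuting derivatives via the Ricci identity, whose correction terms are proportional to the metric by \eqref{eq:riemann} and hence also annihilated by $\tensor{\epsilon}{^A^B}$). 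Since $k \neq l$ with $k,l$ nonnegative integers forces $l(l+1) \neq k(k+1)$, the integral must vanish. The main obstacle is purely organizational: correctly commuting the covariant derivatives in the double integration by parts and verifying that every curvature-correction and every ``diagonal'' second-derivative term is killed by the antisymmetry of $\tensor{\epsilon}{^A^B}$, leaving exactly the scalar factor $l(l+1) - k(k+1)$; no estimate or hard analysis is involved, only careful index manipulation using Lemmas \ref{lem:parts}, \ref{lem:dipole} and equation \eqref{eq:riemann}.
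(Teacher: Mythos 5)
Your overall strategy---producing an identity of the form $[\,l(l+1)-k(k+1)\,]\int_{S^2} n\,\epsilon^{AB}u_{||A}v_{||B}=0$ by evaluating $\int_{S^2} (D^2 n)\,\epsilon^{AB}u_{||A}v_{||B}$ once via $D^2n=-2n$ and once by moving the Laplacian off $n$---can in fact be completed, but the step you rely on to make it work is wrong as stated, so there is a genuine gap. Throwing $D^2$ onto $\epsilon^{AB}u_{||A}v_{||B}$ produces, besides the terms where the Laplacian hits $u_{||A}$ or $v_{||B}$ (handled by \eqref{eq:laplacian} and the eigenvalue equations), the cross term $2\,\epsilon^{AB}u_{||A}{}^{||C}v_{||BC}$. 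This does \emph{not} cancel ``by the symmetry of second derivatives against $\epsilon^{AB}$'': for two symmetric tensors $S_{AC}=u_{||AC}$ and $T_{BC}=v_{||BC}$, the contraction $\epsilon^{AB}S_A{}^{C}T_{CB}$ is the $\epsilon$-trace of their commutator and is generically nonzero, and no Ricci-identity commutation via \eqref{eq:riemann} turns it into something annihilated by $\epsilon^{AB}$. To close the argument you must keep this term and work it back into the original integral: for instance, from $0=\int_{S^2}\epsilon^{AB}u_{||A}\bigl(n^{||C}v_{||C}\bigr)_{||B}$ together with \eqref{eq:dipole-derivative} ($n_{||AB}=-h_{AB}n$) one gets $\int_{S^2}\epsilon^{AB}u_{||A}n^{||C}v_{||CB}=\int_{S^2} n\,\epsilon^{AB}u_{||A}v_{||B}$, and feeding this back through one more integration by parts converts the cross term into $2\int_{S^2} n\,\epsilon^{AB}u_{||A}v_{||B}$ plus eigenvalue terms, which is precisely what yields the clean factor $l(l+1)-k(k+1)$. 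Without that extra work the claimed schematic identity is unsubstantiated, so the proposal as written does not constitute a proof.

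For comparison, the paper's argument is shorter and conceptual: after a single integration by parts the integral becomes $-\int_{S^2} v\,\epsilon^{AB}n_{||B}u_{||A}$ (up to relabelling), and $X^A=\epsilon^{AB}n_{||B}$ is recognized as a rotational Killing field on $S^2$ (this is where the dipole identity \eqref{eq:dipole-derivative} is secretly used, since $X_{(A||B)}=-\epsilon_{(AB)}n=0$). Because rotations commute with $D^2$, $X^Au_{||A}$ stays in the eigenspace of $u$, and distinct eigenspaces of the self-adjoint Laplacian are $L^2$-orthogonal, so the integral vanishes for $k\neq l$. Your computational route, if carried out correctly, is essentially an index-level unpacking of that Killing-vector fact; the missing cancellation you need is exactly the Killing property, not antisymmetry of $\epsilon^{AB}$ against second derivatives.
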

\begin{proof}
    By integrating by parts, one obtains:
    \begin{equation*}
        \int_{S^2} n\epsilon^{AB}u_{||A}v_{||B} = - \int_{S^2} n_{||B}\epsilon^{AB}u_{||A}v.
    \end{equation*}
    The vector field $X^A = \epsilon^{AB}n_{||B}$ corresponds to rotations. Since, the Laplace operator is rotationally invariant, $X^A$ does not change its eigenspaces. 
    Thus, $X^Au_{||A} = \epsilon^{AB}n_{||B}u_{||A}$ belongs to the same eigenspace as $u$. However, unless $k = l$, $u$ and $v$ belong to different eigenspaces that are orthogonal to each other. Therefore, the considered integral, being in fact the inner product on the space of functions on $S^2$, is zero.
\end{proof}

\subsection{Tensor field decomposition}

\begin{lemma}\label{lem:potentials}
    Any symmetric traceless rank 2 tensor field $T$ on $S^2$ can be decomposed into two smooth functions $b$ and $c$ on $S^2$ or into one-form $v_A$, in the following way:
    \begin{align*}
        \tensor{T}{_A_B} &= v_{A||B} + v_{B||A} - h_{AB}\tensor{v}{^C_{||C}} \\
        \tensor{T}{_A_B} &= 2\tensor{c}{_{||AB}} - \tensor{h}{_A_B}D^2c + \tensor{\epsilon}{_A_C}\tensor{b}{^{||C}_{||B}} + \tensor{\epsilon}{_B_C}\tensor{b}{^{||C}_{||A}} \, ,
    \end{align*}
    where $b$ and $c$ have no $l = 0, 1$ parts. Moreover, the potentials are related by:
    \begin{equation*}
        v_A = c_{||A} + \epsilon_{AB}b^{||B} \, .
    \end{equation*}
\end{lemma}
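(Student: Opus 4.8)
The plan is to establish the two decompositions separately and then derive the relation between the potentials. The cleanest route uses Hodge theory on $S^2$. For the first formula, I would introduce the operator that sends a one-form $v_A$ to the symmetric traceless tensor $(\mathcal{L}v)_{AB} := v_{A||B} + v_{B||A} - h_{AB}\tensor{v}{^C_{||C}}$, and argue that it is surjective onto the space of symmetric traceless rank-$2$ fields. Surjectivity follows from an $L^2$-adjoint/ellipticity argument: the formal adjoint of $\mathcal{L}$ acting on symmetric traceless tensors is (up to a constant) the divergence, and $\mathcal{L}$ together with its adjoint form an elliptic operator on a compact manifold, so the cokernel of $\mathcal{L}$ equals the kernel of its adjoint, which consists of symmetric traceless tensors that are both divergence-free — a space that is trivial on $S^2$ (this is the standard fact that there are no transverse-traceless tensors on the round sphere, provable via an integrated Bochner/Weitzenböck identity using $R_{AB} = h_{AB}$ from the earlier lemma). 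Hence any symmetric traceless $T$ is in the image of $\mathcal{L}$.

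For the second formula, I would feed the one-form $v_A$ from the first step into the Hodge decomposition on $S^2$: write $v_A = c_{||A} + \epsilon_{AB} b^{||B}$ (there is no harmonic part since $H^1(S^2) = 0$), where $c,b$ are determined up to constants, so we may fix them to have no $l=0$ component. Substituting this into $(\mathcal{L}v)_{AB}$ and expanding gives the scalar/pseudoscalar form: the $c_{||A}$ piece produces $2c_{||AB} - h_{AB} D^2 c$ directly, and the $\epsilon_{AB} b^{||B}$ piece, after commuting one covariant derivative past the $\epsilon$ (which is covariantly constant) and using the $S^2$ curvature identity $\tensor{\alpha}{_A_{||CB}} = \tensor{\alpha}{_A_{||BC}} - \tensor{R}{^D_A_B_C}\tensor{\alpha}{_D}$ with $\tensor{R}{^A_B_C_D} = \tensor{\delta}{^A_C}\tensor{h}{_B_D} - \tensor{\delta}{^A_D}\tensor{h}{_B_C}$, reorganizes into $\tensor{\epsilon}{_A_C}\tensor{b}{^{||C}_{||B}} + \tensor{\epsilon}{_B_C}\tensor{b}{^{||C}_{||A}}$ plus terms that cancel by the antisymmetry of $\epsilon$ and the trace condition. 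This simultaneously furnishes the relation $v_A = c_{||A} + \epsilon_{AB} b^{||B}$ claimed at the end. Finally I would observe that the map $(b,c) \mapsto T_{AB}$ annihilates exactly the $l=0,1$ parts of $b$ and $c$ — the $l=0$ parts by the Hodge normalization above, and the $l=1$ parts because for a dipole function $n$ one has $\tensor{n}{_{||AB}} = -h_{AB} n$ by \cref{lem:dipole}, so $2n_{||AB} - h_{AB} D^2 n = -2h_{AB} n + 2 h_{AB} n = 0$ and similarly for the pseudoscalar term — which justifies restricting $b,c$ to have no $l=0,1$ content and gives uniqueness on that restricted space.

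The main obstacle is the surjectivity/uniqueness bookkeeping rather than any single computation: one must be careful that the conformal Killing operator $\mathcal{L}$ has a nontrivial kernel (the six conformal Killing vectors of $S^2$), so $v_A$ is \emph{not} unique, and one has to check that this ambiguity is precisely absorbed by the $l=0,1$ freedom in $(b,c)$ — i.e. that the conformal Killing vectors correspond under $v_A = c_{||A} + \epsilon_{AB}b^{||B}$ to $c,b$ built from $l=0,1$ harmonics. This is true ($l=1$ gradients and their rotations span the conformal Killing fields on $S^2$), and verifying it closes the loop between the two decompositions. If one prefers to avoid invoking elliptic theory abstractly, an equivalent concrete route is to expand $T_{AB}$ in tensor spherical harmonics and match coefficients mode by mode, using that the $l\geq 2$ scalar and pseudoscalar harmonics $\{2c_{||AB} - h_{AB}D^2 c\}$ and $\{\epsilon_{A}{}^{C}b_{||CB} + \epsilon_{B}{}^{C}b_{||CA}\}$ together span the symmetric traceless rank-$2$ harmonics of each degree; I would relegate that harmonic-analysis fact to a citation.
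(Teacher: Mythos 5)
Your argument is correct, but it is a genuinely different route from the paper, which does not prove the lemma at all: it simply cites Appendix E of \cite{jezierski2002peeling}. Your self-contained proof via elliptic theory is sound: the conformal Killing operator $v_A \mapsto v_{A||B}+v_{B||A}-h_{AB}v^C{}_{||C}$ has injective symbol, its formal $L^2$-adjoint on trace-free symmetric tensors is (up to a factor $-2$) the divergence, and the absence of transverse-traceless tensors on the round sphere (equivalently, the absence of holomorphic quadratic differentials on the genus-zero surface, or your Bochner argument with $R_{AB}=h_{AB}$) kills the cokernel, giving the first decomposition; the Hodge splitting $v_A = c_{||A}+\epsilon_{AB}b^{||B}$ (no harmonic part since $H^1(S^2)=0$) then yields the second formula and the stated relation between the potentials. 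Two small remarks. First, the $b$-piece computation is even easier than you suggest: no curvature commutation is needed, since the only extra term is $h_{AB}\epsilon^{CD}b_{||DC}$, which vanishes identically by symmetry of the Hessian, so the pseudoscalar part drops out of the trace term directly. Second, your closing point about the kernel is exactly the right bookkeeping: the conformal Killing fields of $S^2$ are spanned by $n^{||A}$ and $\epsilon^{AB}n_{||B}$ with $n$ an $l=1$ harmonic, and by Lemma \ref{lem:dipole} these (together with constants) are precisely the $l=0,1$ modes annihilated by the map $(b,c)\mapsto T_{AB}$, which justifies the normalisation that $b$ and $c$ carry no $l=0,1$ parts and makes them unique on that restricted space. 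Compared with the paper's citation, your approach buys a self-contained and conceptually transparent proof (and generalises to other compact surfaces once the TT-kernel is known), at the cost of invoking the theory of overdetermined elliptic operators, whereas the cited appendix keeps the present paper shorter.
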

\begin{proof}
    See \cite[Appendix E]{jezierski2002peeling}.
\end{proof}            
        
\section{Angular momenta at null infinity}

\subsection{The definitions}

Throughout this section, $n$ has the same meaning as in Lemma \ref{lem:dipole}.

\begin{definition}\label{def:momenta}
    We compare properties of three different notions of angular momentum, defined in terms of data on $\scri^+$:
    \begin{itemize}
        \item Chruściel-Jezierski-Kijowski \cite{cjk} (or rather Dray-Streubel \cite{DS}) angular momentum:
        \begin{equation}\label{eq:cjk}
            \cjk = -\frac{1}{8\pi} \int_{S^2} \left( 3\tensor{N}{_A} + \frac14 \tensor{\chi}{_A_B}\tensor{\chi}{^B^C_{||C}} \right) \tensor{\epsilon}{^A^D}\tensor{n}{_{||D}},
        \end{equation}
        \item Compere-Nichols angular momentum \cite{cn}:
        \begin{equation}\label{eq:cn}
            \cn = \cjk - \frac{\alpha - 1}{32\pi} \int_{S^2} \tensor{\epsilon}{^A^D}\tensor{n}{_{||D}}\tensor{\chi}{_A_B}\tensor{\chi}{^B^C_{||C}},
        \end{equation}
        where $\alpha \in \mathbb{R}$ is an arbitrary free parameter,
        \item Chen-Wang-Yau angular momentum \cite{cwy}:
        \begin{equation}\label{eq:cwy}
            \cwy = \cjk + \frac{1}{4\pi} \int_{S^2} m\tensor{\epsilon}{^A^B}\tensor{n}{_{||B}}\tensor{c}{_{||A}},
        \end{equation}
        where $c$ is the unique solution to
        \begin{equation}\label{eq:c-solution}
            D^2(D^2 + 2)c = \tensor{\chi}{_A_B^{||BA}}
        \end{equation}
        such that $c$ has no $l = 0,1$ parts.
    \end{itemize}
\end{definition}

At this point, one may already notice that $\cn = \cjk$ for $\alpha = 1$. Before we move any further, it is useful to note that, by Lemma \ref{lem:potentials}, the definitions may be rewritten in terms of potentials.

\begin{lemma}\label{lem:momenta-potentials}
    The shear tensor and the angular momenta from Definition \ref{def:momenta} take the following forms when rewritten in terms of smooth functions $b, c: S^2 \rightarrow \mathbb{R}$
    \begin{align}
        \tensor{\chi}{_A_B} &= 2\tensor{c}{_{||AB}} - \tensor{h}{_A_B}D^2c + \tensor{\epsilon}{_A_C}\tensor{b}{^{||C}_{||B}} + \tensor{\epsilon}{_B_C}\tensor{b}{^{||C}_{||A}}, \label{eq:chi} \\
        \cjk &= \frac{1}{8\pi} \int_{S^2} \left( - 3N_A\epsilon^{AD}n_{||D} - 2n_{||A}c^{||A}D^2b + \frac12nD^2bD^2c - \frac12n_{||A}D^2b(D^2c)^{||A} \right), \label{eq:cjk-pot} \\
        \cn &= \frac{1}{8\pi} \int_{S^2} \left( - 3N_A\epsilon^{AD}n_{||D} - 2\alpha n_{||A}c^{||A}D^2b + \frac{\alpha}{2}nD^2bD^2c - \frac{\alpha}{2}n_{||A}D^2b(D^2c)^{||A} \right), \label{eq:cn-pot} \\
        \begin{split}
            \cwy &= \frac{1}{8\pi} \int_{S^2} \left( - 3N_A\epsilon^{AD}n_{||D} - 2n_{||A}c^{||A}D^2b + \frac12nD^2bD^2c - \frac12n_{||A}D^2b(D^2c)^{||A} \right. \\
            &\phantom{=}\, \left. + 2m\epsilon^{AD}n_{||D}c_{||A} \vphantom{\frac12}\right).
        \end{split} \label{eq:cwy-pot}
    \end{align}
\end{lemma}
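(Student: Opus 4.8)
The identity \eqref{eq:chi} is nothing but the second decomposition of Lemma~\ref{lem:potentials}, so there is nothing to prove for it, and once \eqref{eq:cjk-pot} is established the other two formulas follow with little extra work. I would therefore concentrate on \eqref{eq:cjk-pot}. Split $\cjk$ as in \eqref{eq:cjk} into the linear piece $-\tfrac{3}{8\pi}\int_{S^2}N_A\epsilon^{AD}n_{||D}$, which already has the required shape, and the quadratic--shear piece $Q:=-\tfrac{1}{32\pi}\int_{S^2}\chi_{AB}\chi^{BC}{}_{||C}\,\epsilon^{AD}n_{||D}$; reducing $Q$ to the potentials is the whole problem.

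The first step is to compute the divergence $\chi^{BC}{}_{||C}$ in terms of $b$ and $c$. Using the form $\chi_{AB}=v_{A||B}+v_{B||A}-h_{AB}v^C{}_{||C}$ with $v_A=c_{||A}+\epsilon_{AB}b^{||B}$ from Lemma~\ref{lem:potentials}, the $S^2$ Ricci identity $[\nabla_B,\nabla^A]v^B=v^A$ (a consequence of \eqref{eq:riemann}) makes the two $\nabla^A(v^C{}_{||C})$ terms cancel, giving the compact formula $\chi^{AB}{}_{||B}=(D^2+1)v^A$; expanding $v^A$ and applying the Laplacian commutator \eqref{eq:laplacian} turns this into $\chi^{AB}{}_{||B}=\bigl((D^2+2)c\bigr)^{||A}+\epsilon^{AB}\bigl((D^2+2)b\bigr)_{||B}$. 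Substituting this together with the decomposition of $\chi_{AB}$ into $Q$ yields an integrand that is a sum of terms bilinear in (derivatives of) $b$ and $c$, plus ``pure $c$'' terms and ``pure $b$'' terms.

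The second step is to integrate by parts (Lemma~\ref{lem:parts}), iterating, and to use three facts: (i) $\epsilon^{AD}n_{||D}$ is divergence-free, since $\nabla_A(\epsilon^{AD}n_{||D})=\epsilon^{AD}n_{||DA}=0$ by symmetry of the Hessian; (ii) the dipole relations of Lemma~\ref{lem:dipole}, chiefly $n_{||AB}=-h_{AB}n$ and $D^2n=-2n$, which eliminate every second derivative of $n$ and shift eigenvalues; and (iii) Lemma~\ref{lem:zero-int} together with the pointwise vanishing $\epsilon^{AB}X_AX_B=0$. Fact (iii) is what disposes of the pure-$c$ and pure-$b$ parts: after the integrations by parts each such part reduces to a combination of integrals $\int_{S^2}n\,\epsilon^{AB}u_{||A}v_{||B}$ with $u$ and $v$ drawn from the $D^2$-eigenspaces of one and the same field, and there Lemma~\ref{lem:zero-int} kills the off-diagonal ($k\neq l$) contributions while antisymmetry of $\epsilon^{AB}$ kills the diagonal ones. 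In the surviving bilinear terms the two $\epsilon$'s collapse to the metric via $\epsilon_A{}^C\epsilon^{AD}=h^{CD}$, and a final use of (i)--(ii) brings $Q$ to the form $\tfrac{1}{8\pi}\int_{S^2}\bigl(-2n_{||A}c^{||A}D^2b+\tfrac12 nD^2bD^2c-\tfrac12 n_{||A}D^2b(D^2c)^{||A}\bigr)$, which is exactly the quadratic part of \eqref{eq:cjk-pot}. I expect the only real difficulty to be this bookkeeping --- keeping track of the many terms and their signs through repeated integrations by parts --- since all the geometric content is already packaged in Lemmas~\ref{lem:dipole}, \ref{lem:parts} and \ref{lem:zero-int}.

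Finally, \eqref{eq:cn-pot} is immediate from \eqref{eq:cn}: the extra integral there equals $(\alpha-1)$ times $Q$, so $\cn$ is just $\cjk$ with each shear term rescaled by $\alpha$. For \eqref{eq:cwy-pot} I would first check that the $c$ in \eqref{eq:cwy}--\eqref{eq:c-solution} is the same as the potential $c$ of Lemma~\ref{lem:potentials}: taking the divergence of $\chi^{AB}{}_{||B}=\bigl((D^2+2)c\bigr)^{||A}+\epsilon^{AB}\bigl((D^2+2)b\bigr)_{||B}$ and using $\epsilon^{AB}\nabla_A\nabla_B(\cdot)=0$ gives $\chi_{AB}{}^{||BA}=D^2(D^2+2)c$, and since $D^2(D^2+2)$ is invertible on functions with no $l=0,1$ part this $c$ solves \eqref{eq:c-solution} and hence equals the one appearing there. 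Then \eqref{eq:cwy} adds $\tfrac{1}{4\pi}\int_{S^2}m\,\epsilon^{AB}n_{||B}c_{||A}=\tfrac{1}{8\pi}\int_{S^2}2m\,\epsilon^{AD}n_{||D}c_{||A}$ to the potential form of $\cjk$, giving \eqref{eq:cwy-pot}.
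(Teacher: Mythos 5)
Your strategy is correct and is essentially the paper's: substitute the potential decomposition into the quadratic shear term, expand, and grind through repeated integrations by parts, disposing of the pure-$b$ and pure-$c$ pieces (the paper does this by exhibiting them as gradients contracted with the divergence-free field $\epsilon^{AD}n_{||D}$, you by the eigenspace/orthogonality argument of Lemma \ref{lem:zero-int} plus pointwise antisymmetry --- both work). Your intermediate identity $\chi^{AB}{}_{||B}=\bigl((D^2+2)c\bigr)^{||A}+\epsilon^{AB}\bigl((D^2+2)b\bigr)_{||B}$ is a tidier bookkeeping device than the paper's fully expanded product and also gives the cleanest verification that the $c$ of \eqref{eq:c-solution} coincides with the potential $c$, but this is an organizational improvement rather than a genuinely different route.
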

\begin{proof}
    By Lemma \ref{lem:potentials} we have (\ref{eq:chi}). Note that $c$ in (\ref{eq:cwy}) is the same as in (\ref{eq:chi}) - this can be seen by substituting (\ref{eq:chi}) into (\ref{eq:c-solution}). The above formulas are then obtained by direct substitution. It suffices to show that terms quadratic in $b$ and terms quadratic in $c$ evaluate to 0, which is done by integrating the terms by parts and obtaining expressions that are covariant derivatives of some function with respect to the index $A$. For details see Appendix~\ref{app:potentials}.
\end{proof}

By looking at the formulas written out in terms of potentials, it is easier to notice how exactly they differ. In particular, one may notice that in certain special cases, all of these definitions are equivalent.

\begin{theorem}\label{thm:cjk-cn}
    If $b = 0$ in (\ref{eq:chi}), then $\cjk = \cn$.
\end{theorem}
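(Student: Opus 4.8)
The plan is to read the statement off Lemma~\ref{lem:momenta-potentials}, which has already rewritten both quantities in terms of the potentials $b$ and $c$. Comparing (\ref{eq:cjk-pot}) with (\ref{eq:cn-pot}) term by term, $\cjk$ and $\cn$ share the summand $-3N_A\epsilon^{AD}n_{||D}$, while the three remaining summands of $\cn$ are obtained from those of $\cjk$ simply by inserting the overall constant $\alpha$. Hence
\begin{equation*}
  \cn - \cjk = \frac{\alpha-1}{8\pi}\int_{S^2}\left(-2n_{||A}c^{||A}D^2b + \tfrac12\,nD^2bD^2c - \tfrac12\,n_{||A}D^2b\,(D^2c)^{||A}\right),
\end{equation*}
and every term on the right carries a factor of $D^2 b$. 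I would then set $b=0$, observe that $D^2 b = 0$, and conclude that the integrand vanishes pointwise, so $\cjk = \cn$; in fact both then reduce to $-\tfrac{3}{8\pi}\int_{S^2}N_A\epsilon^{AD}n_{||D}$. On this route there is essentially no obstacle — all the substantive work is already contained in Lemma~\ref{lem:momenta-potentials}, whose delicate point was that the pieces quadratic in $b$ and quadratic in $c$ drop out after integration by parts.

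As a second, self-contained route that avoids Lemma~\ref{lem:momenta-potentials}, I would argue straight from Definition~\ref{def:momenta}, where
\begin{equation*}
  \cn - \cjk = -\frac{\alpha-1}{32\pi}\int_{S^2}\epsilon^{AD}n_{||D}\,\chi_{AB}\chi^{BC}{}_{||C}.
\end{equation*}
I would substitute the $b=0$ form $\chi_{AB}=2c_{||AB}-h_{AB}D^2 c$ and apply the commutation relation (\ref{eq:laplacian}) to get $\chi^{BC}{}_{||C}=\bigl((D^2+2)c\bigr)^{||B}=:\psi^{||B}$, reducing the claim to $\int_{S^2}X^A\chi_{AB}\psi^{||B}=0$ with $X^A:=\epsilon^{AD}n_{||D}$. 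The point to spot is that $X^A$ is a rotational Killing field on $S^2$: the dipole identity (\ref{eq:dipole-derivative}) gives $X^{A||B}=-n\epsilon^{AB}$, which is antisymmetric, so in particular $X^A{}_{||A}=0$. Integrating by parts and using $\chi_{AB}{}^{||B}=\psi_{||A}$ together with the symmetry of $\chi_{AB}$, the contraction $X^{A||B}\chi_{AB}$ drops and the integral collapses to $-\int_{S^2}X^A\,(\tfrac12\psi^2)_{||A}=\int_{S^2}X^A{}_{||A}\,\tfrac12\psi^2=0$.

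I expect the only place needing any care is the elimination of the $b$- and $c$-quadratic terms (already handled inside Lemma~\ref{lem:momenta-potentials}) or, on the second route, the recognition that $X^A$ is Killing so that the after-integration-by-parts expression is a pure divergence. Either way the argument is short, so I would present the first route as the proof and record the second as a remark.
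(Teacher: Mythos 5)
Your first route is exactly the paper's proof: substitute $b=0$ into (\ref{eq:cjk-pot}) and (\ref{eq:cn-pot}) and observe the differing terms all carry $D^2b$, so they vanish and both momenta reduce to $-\frac{3}{8\pi}\int_{S^2} N_A\epsilon^{AD}n_{||D}$. Your second, self-contained route is also sound (with $b=0$ one has $\chi_{AB}{}^{||B}=((D^2+2)c)_{||A}$, and since $X^A=\epsilon^{AD}n_{||D}$ is Killing with $X^{A||B}$ antisymmetric and divergence-free, the correction integral collapses to a divergence), but since you present the first route as the proof, the approach is essentially the paper's.
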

\begin{proof}
    It suffices to substitute $b = 0$ in (\ref{eq:cjk-pot}) and (\ref{eq:cn-pot}).
\end{proof}

\begin{theorem}
    If $\tensor{m}{_{||A}} = 0$ for all $A$, then $\cjk = \cwy$.
\end{theorem}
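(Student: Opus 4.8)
The plan is to reduce the claim to a single vanishing integral. Comparing \eqref{eq:cjk} and \eqref{eq:cwy}, the only difference between the two definitions is the extra term
\begin{equation*}
    \cwy - \cjk = \frac{1}{4\pi}\int_{S^2} m\,\tensor{\epsilon}{^A^B}\tensor{n}{_{||B}}\tensor{c}{_{||A}},
\end{equation*}
so it suffices to show that this integral is zero whenever $\tensor{m}{_{||A}} = 0$. Note that for this argument we do not need to know anything about $c$ beyond its being a smooth function on $S^2$.

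First I would integrate by parts, using the (generalised) Lemma \ref{lem:parts}, to move the derivative off of $c$:
\begin{equation*}
    \int_{S^2} \tensor{\epsilon}{^A^B}\, m\, \tensor{n}{_{||B}}\tensor{c}{_{||A}} = \int_{S^2} \left( \tensor{\epsilon}{^A^B}\, m\, \tensor{n}{_{||B}}\, c \right)_{||A} - \int_{S^2} \left( \tensor{\epsilon}{^A^B}\, m\, \tensor{n}{_{||B}} \right)_{||A} c,
\end{equation*}
where the first term on the right vanishes by Lemma \ref{lem:full-divergence}, being a pure divergence. Since $\tensor{\epsilon}{^A^B}$ is covariantly constant, the remaining integrand expands as
\begin{equation*}
    \left( \tensor{\epsilon}{^A^B}\, m\, \tensor{n}{_{||B}} \right)_{||A} = \tensor{\epsilon}{^A^B}\,\tensor{m}{_{||A}}\,\tensor{n}{_{||B}} + \tensor{\epsilon}{^A^B}\, m\, \tensor{n}{_{||BA}}.
\end{equation*}

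Both terms now vanish: the first because $\tensor{m}{_{||A}} = 0$ by hypothesis, and the second because $\tensor{n}{_{||BA}}$ is symmetric in $A$ and $B$ (it is the second covariant derivative of a scalar on a torsion-free manifold) while $\tensor{\epsilon}{^A^B}$ is antisymmetric, so the contraction is identically zero. Hence $\cwy - \cjk = 0$. There is no genuine obstacle here — the only place the hypothesis enters is in killing the $\tensor{m}{_{||A}}\tensor{n}{_{||B}}$ term, and the rest is just antisymmetry of $\tensor{\epsilon}{^A^B}$ together with Stokes' theorem. One could equivalently observe at the outset that $\tensor{m}{_{||A}} = 0$ on the connected manifold $S^2$ forces $m$ to be a constant, pull it out of the integral, and then use that $\int_{S^2}\tensor{\epsilon}{^A^B}\tensor{n}{_{||B}}\tensor{c}{_{||A}} = -\int_{S^2}\tensor{\epsilon}{^A^B}\tensor{n}{_{||BA}}\, c = 0$ by the same symmetry argument.
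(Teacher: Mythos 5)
Your proof is correct, but it takes a genuinely different and more elementary route than the paper's. The paper also reduces the claim to the correction term in \eqref{eq:cwy} and pulls the constant $m$ out of the integral, but it then disposes of $\int_{S^2}\epsilon^{AB}n_{||B}c_{||A}$ by spectral means: it notes via Lemma \ref{lem:potentials} that $c$ has no $l=0,1$ parts while $D^2n=-2n$, and invokes the eigenspace-orthogonality statement of Lemma \ref{lem:zero-int}. You instead kill the integral outright with Stokes' theorem: after integrating by parts, the remaining integrand $(\epsilon^{AB}m\,n_{||B})_{||A}$ splits into $\epsilon^{AB}m_{||A}n_{||B}$, which vanishes by hypothesis, and $\epsilon^{AB}m\,n_{||BA}$, which vanishes because the covariantly constant, antisymmetric $\epsilon^{AB}$ is contracted with the symmetric second covariant derivative of a scalar. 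Your argument uses no information about $n$ or $c$ beyond smoothness — neither that $n$ is a dipole nor that $c$ lacks $l=0,1$ parts — so it is in fact slightly more general: it shows the CWY correction term vanishes for arbitrary smooth functions in place of $n$ and $c$ as soon as $m$ is constant on $S^2$ (connectedness of $S^2$ upgrading $m_{||A}=0$ to constancy, as you note). What the paper's route buys is economy within its own toolbox, since Lemma \ref{lem:zero-int} is developed anyway and settles the case in one line; what yours buys is a self-contained computation making clear that the vanishing here is purely a matter of antisymmetry rather than of harmonic decomposition.
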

\begin{proof}
    Since $\tensor{m}{_{||A}} = 0$, $m$ is constant on $S^2$ and may be taken out of the integral. By Lemma~\ref{lem:potentials} $c$ has no $l = 0, 1$ parts. On the other hand $D^2n = -2n$. Therefore, the statement follows from Lemma~\ref{lem:zero-int}.
\end{proof}

\subsection{Time evolution}

We now turn to the fluxes of the presented angular momenta. Firstly, we present them for $\cjk$ and $\cn$, as they may be expressed directly in terms of the data on $\scri^+$.

\begin{theorem}
    $\cjk$ and $\cn$ obey the following evolution equations:
    \begin{align}
        \dcjk &= \frac{1}{8\pi} \int_{S^2} \tensor{\epsilon}{^A^D}\tensor{\dot{\chi}}{^B^C} \left( \frac12 n\tensor{h}{_C_D}\tensor{\chi}{_A_B} - \frac14 \tensor{n}{_{||D}}\tensor{\chi}{_A_B_{||C}} - \frac14 \tensor{n}{_{||D}}\tensor{h}{_A_B}\tensor{\chi}{_{CE}^{||E}} \right), \label{eq:dcjk-chi} \\
        \dcn &= \frac{1}{32\pi} \int_{S^2} \tensor{\epsilon}{^A^D}\tensor{\dot{\chi}}{^B^C} \left( (3 - \alpha) n\tensor{h}{_C_D}\tensor{\chi}{_A_B} + (\alpha - 2) \tensor{n}{_{||D}}\tensor{\chi}{_A_B_{||C}} - \alpha \tensor{n}{_{||D}}\tensor{h}{_A_B}\tensor{\chi}{_{CE}^{||E}} \right). \label{eq:dcn-chi}
    \end{align}
    In particular, $\dcjk = \dcn = 0$ if $\dot{\chi} = 0$.
\end{theorem}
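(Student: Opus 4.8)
The plan is to differentiate the defining integrals \eqref{eq:cjk} and \eqref{eq:cn} directly in $u$. Since $n$ and the integration domain $S^2$ are $u$-independent, one may differentiate under the integral sign, the only $u$-dependence being that of the data $\tensor{N}{_A}$ and $\tensor{\chi}{_A_B}$. The product rule applied to \eqref{eq:cjk} produces three contributions — $3\tensor{\dot{N}}{_A}$, together with $\tfrac14\tensor{\dot{\chi}}{_A_B}\tensor{\chi}{^B^C_{||C}}$ and $\tfrac14\tensor{\chi}{_A_B}\tensor{\dot{\chi}}{^B^C_{||C}}$ from the quadratic term — all contracted against $\tensor{\epsilon}{^A^D}\tensor{n}{_{||D}}$.

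The first real step is to eliminate $\tensor{\dot{N}}{_A}$ using the evolution equation \eqref{eq:n-flux}. Combining the resulting $-\tfrac34\tensor{\chi}{_A_B}\tensor{\dot{\chi}}{^B^C_{||C}}$ with the $+\tfrac14\tensor{\chi}{_A_B}\tensor{\dot{\chi}}{^B^C_{||C}}$ already present, one is left with an integrand quadratic in the shear and its retarded-time derivative, plus the two terms $\tensor{m}{_{||A}}$ and $\tensor{\epsilon}{_A_B}\tilde\lambda^{||B}$. I claim both of the latter integrate to zero. For the $\tensor{m}{_{||A}}$ term, integrate by parts (Lemma~\ref{lem:parts}) to move the derivative onto $\tensor{\epsilon}{^A^D}\tensor{n}{_{||D}}$; this produces $\tensor{\epsilon}{^A^D}\tensor{n}{_{||DA}}$, which vanishes since $\tensor{n}{_{||DA}} = -\tensor{h}{_D_A}n$ by \eqref{eq:dipole-derivative} and $\tensor{\epsilon}{^A^D}\tensor{h}{_D_A} = 0$. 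For the $\tilde\lambda$ term, use $\tensor{\epsilon}{_A_B}\tensor{\epsilon}{^A^D} = \tensor{\delta}{_B^D}$ and one integration by parts to reduce it to $\int_{S^2}n\,\tilde\lambda$; substituting $\tilde\lambda = \tensor{h}{^B^D}\tensor{\epsilon}{^A^C}\tensor{\chi}{_D_A_{||BC}}$ and integrating by parts twice again yields a second derivative $\tensor{n}{_{||CB}} = -\tensor{h}{_C_B}n$, leaving $\tensor{\epsilon}{^A^D}\tensor{\chi}{_D_A}\,n = 0$ by symmetry of $\tensor{\chi}{_A_B}$. (The same argument works whether \eqref{eq:n-flux} is read with $\tilde\lambda$ or $\dot{\tilde\lambda}$, and in particular no assumption on $m$ is needed.)

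What remains is to put the surviving quadratic integrand into the canonical shape of \eqref{eq:dcjk-chi}. The three leftover terms $\tensor{\chi}{_A_B}\tensor{\dot{\chi}}{^B^C_{||C}}$, $\tensor{\dot{\chi}}{^C^E}\tensor{\chi}{_A_C_{||E}}$, $\tensor{\dot{\chi}}{_A_B}\tensor{\chi}{^B^C_{||C}}$ (each paired with $\tensor{\epsilon}{^A^D}\tensor{n}{_{||D}}$) are brought to the common form $\tensor{\epsilon}{^A^D}\tensor{\dot{\chi}}{^B^C}(\cdots)$: the last two simply after relabelling summation indices, while the first requires one integration by parts moving the $||C$ derivative off $\tensor{\dot{\chi}}{^B^C_{||C}}$, which — again via $\tensor{n}{_{||DC}} = -\tensor{h}{_D_C}n$ — splits it into the $\tfrac12 n\tensor{h}{_C_D}\tensor{\chi}{_A_B}$ term of \eqref{eq:dcjk-chi} and a $\tensor{n}{_{||D}}\tensor{\chi}{_A_B_{||C}}$ term; this last combines with the relabelled $\tensor{\dot{\chi}}{^C^E}\tensor{\chi}{_A_C_{||E}}$ term to give the coefficient $-\tfrac14$ in \eqref{eq:dcjk-chi}. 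This bookkeeping — tracking which summation index sits against $\tensor{\epsilon}{^A^D}$, and the \emph{curvature} pieces generated whenever a derivative lands on $\tensor{n}{_{||D}}$ — is the only delicate point; I would also read every free index in \eqref{eq:n-flux} consistently as the lower $A$ contracted with $\tensor{\epsilon}{^A^D}\tensor{n}{_{||D}}$, fixing the evident index clashes there.

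For $\dcn$ nothing new is required: by \eqref{eq:cn}, $\dcn - \dcjk$ equals $-\tfrac{\alpha-1}{32\pi}\,\partial_u\!\int_{S^2}\tensor{\epsilon}{^A^D}\tensor{n}{_{||D}}\tensor{\chi}{_A_B}\tensor{\chi}{^B^C_{||C}}$, and the product rule followed by the same single integration by parts (using \eqref{eq:dipole-derivative}) recasts this extra flux as $\tfrac{1}{32\pi}\int_{S^2}\tensor{\epsilon}{^A^D}\tensor{\dot{\chi}}{^B^C}(\cdots)$; adding it to \eqref{eq:dcjk-chi} and collecting the coefficients of $n\tensor{h}{_C_D}\tensor{\chi}{_A_B}$, $\tensor{n}{_{||D}}\tensor{\chi}{_A_B_{||C}}$ and $\tensor{n}{_{||D}}\tensor{h}{_A_B}\tensor{\chi}{_{CE}^{||E}}$ gives $3-\alpha$, $\alpha-2$ and $-\alpha$, which is \eqref{eq:dcn-chi}. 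The last assertion is then immediate: every term on the right of \eqref{eq:dcjk-chi} and \eqref{eq:dcn-chi} carries an explicit factor $\tensor{\dot{\chi}}{^B^C}$, so both fluxes vanish when $\dot\chi = 0$.
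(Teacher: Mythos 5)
Your proposal is correct and follows essentially the same route as the paper's Appendix B.1: differentiate under the integral, eliminate $\tensor{\dot{N}}{_A}$ via \eqref{eq:n-flux}, kill the $m_{||A}$ and $\Tilde{\lambda}$ contributions by integration by parts together with the dipole identities and the symmetry of $\tensor{\chi}{_A_B}$, and bring the surviving quadratic terms to the form $\tensor{\epsilon}{^A^D}\tensor{\dot{\chi}}{^B^C}(\cdots)$ by one further integration by parts, with the $\cn$ case obtained from the same manipulation of the extra $\chi\chi$ term. The coefficient bookkeeping you outline ($-\tfrac12+\tfrac14=-\tfrac14$, and $3-\alpha$, $\alpha-2$, $-\alpha$ for $\dcn$) matches the paper's calculation.
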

\begin{proof}
    The statement follows directly from explicit calculations. See Appendix \ref{app:flux-chi} for details.
\end{proof}

So far, no additional assumptions about the physical system have been made. However, in order to simplify the formulas and to draw any meaningful conclusions, we shall assume $b = 0$ when dealing with fluxes.

\begin{theorem}\label{thm:fluxes}
    If $b = 0$ in (\ref{eq:chi}), then the angular momenta from Definition \ref{def:momenta} obey the following evolution equations:
    \begin{align*}
        \dcjk &= - \frac{1}{16\pi} \int_{S^2} \tensor{\epsilon}{^A^D}\tensor{n}{_{||D}}c_{||A} D^2(D^2 + 2)\dot{c}, \\
        \dcn &= - \frac{1}{16\pi} \int_{S^2} \tensor{\epsilon}{^A^D}\tensor{n}{_{||D}}c_{||A} D^2(D^2 + 2)\dot{c}, \\
        \dcwy &= \frac{1}{8\pi} \int_{S^2} \tensor{\epsilon}{^A^D}\tensor{n}{_{||D}} \left( 2m\tensor{\dot{c}}{_{||A}} - \tensor{c}{_{||A}}\tensor{\dot{c}}{^{||BC}}\tensor{\dot{c}}{_{||BC}} + \frac12\tensor{c}{_{||A}}(D^2\dot{c})^2 \right)
        \, .
    \end{align*}
    In particular, $\dcjk = \dcn = \dcwy = 0$ when $\dot{c} = 0$, which, in this case, is equivalent to $\dot{\chi} = 0$.
\end{theorem}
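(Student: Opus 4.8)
The plan is to start from the potential forms of the three angular momenta under the assumption $b=0$, i.e.\ from (\ref{eq:cjk-pot}), (\ref{eq:cn-pot}), and (\ref{eq:cwy-pot}) with all $b$-dependent terms dropped, and differentiate each with respect to $u$. For $\cjk$ and $\cn$ the surviving terms after setting $b=0$ are identical, so it suffices to handle one of them; this already explains why $\dcjk=\dcn$ in this regime. The $u$-derivative hits both the explicit $N_A$ term and the $c$-dependent terms, so the first task is to eliminate $\dot N_A$ using the flux equation (\ref{eq:n-flux}) and to re-express $\tensor{\chi}{_A_B}$ everywhere in terms of $c$ via (\ref{eq:chi}) with $b=0$, namely $\tensor{\chi}{_A_B}=2\tensor{c}{_{||AB}}-\tensor{h}{_A_B}D^2c$. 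The key simplifying identity is (\ref{eq:c-solution}), which with $b=0$ reads $D^2(D^2+2)c=\tensor{\chi}{_A_B^{||BA}}$, so $\dot c$ satisfies $D^2(D^2+2)\dot c=\tensor{\dot\chi}{_A_B^{||BA}}$; this is what lets the messy $\tensor{\dot\chi}{}$-quadratic terms collapse into the compact form $-\frac{1}{16\pi}\int_{S^2}\tensor{\epsilon}{^A^D}\tensor{n}{_{||D}}c_{||A}\,D^2(D^2+2)\dot c$.

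Concretely, I would first record the scalar-potential flux from the already-proven $\tensor{\dot\chi}{}$-form (\ref{eq:dcjk-chi}): substitute $\tensor{\chi}{_A_B}=2\tensor{c}{_{||AB}}-\tensor{h}{_A_B}D^2c$ and likewise $\tensor{\dot\chi}{^B^C}=2\tensor{\dot c}{^{||BC}}-\tensor{h}{^B^C}D^2\dot c$ into (\ref{eq:dcjk-chi}), then integrate by parts (Lemma~\ref{lem:parts}) to move derivatives off $\tensor{\dot c}{}$ until one factor of $D^2(D^2+2)\dot c$ appears. Commutators of $D^2$ with gradients are handled by (\ref{eq:laplacian}), and the dipole identities (\ref{eq:dipole})–(\ref{eq:dipole-derivative}) for $n$ are used to kill the terms where all derivatives land on $n$. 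Terms of the schematic form $n_{||D}\epsilon^{AD}(\dots)$ in which the bracket turns out to be a pure gradient $(\dots)_{||A}$ vanish after one more integration by parts because $\epsilon^{AD}n_{||D||A}=0$. The bookkeeping here is the bulk of the work but is mechanical once the substitution and the target identity are fixed.

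For $\dcwy$ the extra ingredient is the $2m\tensor{\epsilon}{^A^D}\tensor{n}{_{||D}}c_{||A}$ term in (\ref{eq:cwy-pot}); differentiating gives $2\dot m\,\epsilon^{AD}n_{||D}c_{||A}+2m\,\epsilon^{AD}n_{||D}\dot c_{||A}$, and $\dot m$ is replaced using (\ref{eq:m-flux}), $\dot m=-\tfrac18\tensor{\dot\chi}{_A_B}\tensor{\dot\chi}{^A^B}+\tfrac14\tensor{\dot\chi}{^A^B_{||BA}}$. Writing $\tensor{\dot\chi}{_A_B}$ in terms of $\dot c$ and using $\tensor{\dot\chi}{^A^B_{||BA}}=D^2(D^2+2)\dot c$, the $\dot m$ contribution must be shown to combine with the $\cjk$-part of $\dcwy$ so that only the $-c_{||A}\dot c^{||BC}\dot c_{||BC}+\tfrac12 c_{||A}(D^2\dot c)^2$ terms remain alongside the surviving $2m\dot c_{||A}$ term. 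Finally, when $\dot c=0$ every integrand above manifestly vanishes; and $\dot c=0$ is equivalent to $\tensor{\dot\chi}{_A_B}=0$ because $b=0$ makes $\tensor{\chi}{_A_B}$ a linear functional of $c$ with trivial kernel once the $l=0,1$ parts are excluded (Lemma~\ref{lem:potentials}), so $\tensor{\dot\chi}{}=0\Rightarrow\dot c=0$ and conversely trivially.

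The main obstacle I anticipate is the $\dcwy$ computation: reconciling the $\dot m$-flux terms with the reduction of the $\cjk$-part requires several non-obvious cancellations involving $\tensor{\dot c}{^{||BC}}\tensor{\dot c}{_{||BC}}$ versus $(D^2\dot c)^2$ and the cross terms generated by commuting $D^2$ through gradients acting on $\dot c$; keeping the $\epsilon^{AD}n_{||D}$ prefactor correctly paired with a pure-gradient bracket so that those terms drop is where sign and index errors are most likely. The $\cjk/\cn$ case is comparatively safe because it is a direct consequence of (\ref{eq:dcjk-chi}), (\ref{eq:dcn-chi}), Theorem~\ref{thm:cjk-cn}, and the definition (\ref{eq:c-solution}) of $c$.
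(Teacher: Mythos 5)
Your proposal is correct and follows essentially the same route as the paper: with $b=0$ the momenta reduce to the $N_A$ term (so $\dcjk=\dcn$ automatically), the flux is obtained by substituting $\chi_{AB}=2c_{||AB}-h_{AB}D^2c$ into the $\dot\chi$-form flux (equivalently by differentiating the potential form and using (\ref{eq:n-flux})), integrating by parts with $\dot\chi_{AB}{}^{||BA}=D^2(D^2+2)\dot c$, and for $\dcwy$ differentiating the correction term and eliminating $\dot m$ via (\ref{eq:m-flux}), exactly as in Appendix~\ref{app:flux-pot}. The only nitpick is the phrase about moving derivatives \emph{off} $\dot c$ — the integrations by parts must push derivatives onto $\dot c$ (and off $c$ and $n$) to produce the factor $D^2(D^2+2)\dot c$ — but this is a wording slip, not a gap.
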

\begin{proof}
    It suffices to substitute (\ref{eq:chi}) in (\ref{eq:dcjk-chi}) and (\ref{eq:dcn-chi}). For $\dcwy$, the additional term is differentiated explicitly. For details see Appendix \ref{app:flux-pot}.
\end{proof}

Note that this result agrees with Theorem \ref{thm:cjk-cn} - for $b = 0$, $\cjk$ and $\cn$ are equal. $\dcwy$, however, has additional terms, which originate from differentiating the correction term in (\ref{eq:cwy}). Nonetheless, all fluxes behave reasonably, as for $\dot{\chi} = 0$, no angular momentum should be radiated out of the system.

\subsection{Transformations under supertranslations}

Here, we present how different definitions of angular momentum transform under supertranslations. Similarly to the previous section, an additional assumption has to be made in order to obtain sufficiently clean expressions for comparison.

\begin{theorem}\label{thm:supertranslations}
    Suppose that $\dot{\chi} = 0$. Then, under a supertranslation $u \mapsto u - f(\tensor{x}{^A})$, the angular momenta from Definition \ref{def:momenta} obey the following transformation rules:
    \begin{align} \label{eq:trans-cwy}
        \cjk &\longmapsto \cjk + \frac{1}{8\pi} \int_{S^2} f \left( 3{\epsilon}^{DA}\tensor{n}{_{||D}}\tensor{m}{_{||A}} + \frac14n_{||A}D^AD^2(D^2 + 2)b \right), \nonumber \\
        \begin{split}
            \cn &\longmapsto \cn + \frac{1}{8\pi} \int_{S^2} f \left( 3{\epsilon}^{DA}\tensor{n}{_{||D}}\tensor{m}{_{||A}} + \frac32(\alpha - 1)nD^2(D^2+2)b \right. \\
            &\phantom{=}\, \left. + \frac14(3 - 2\alpha)n_{||A}D^AD^2(D^2 + 2)b \right),
        \end{split} \\
        \cwy &\longmapsto \cwy + \frac{1}{8\pi} \int_{S^2} f \left({\epsilon}^{DA}\tensor{n}{_{||D}}\tensor{m}{_{||A}} + \frac14n_{||A}D^AD^2(D^2 + 2)b \right)\, . \nonumber
    \end{align}
\end{theorem}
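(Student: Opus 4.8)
The plan is to start from the potential expressions for the three angular momenta in Lemma~\ref{lem:momenta-potentials}, specialise them to the stationary case $\dot\chi = 0$, and then apply the data transformation rules of Theorem~\ref{thm:data-transformations} term by term. The crucial simplification is that when $\dot\chi = 0$ all the terms in the transformations of $m$ and $\tensor{N}{_A}$ involving $\dot\chi$ or $\ddot\chi$ vanish, so we are left with $m \mapsto m$, $\tensor{\chi}{_A_B} \mapsto \tensor{\chi}{_A_B} - 2\tensor{f}{_{||AB}} + \tensor{h}{_A_B}D^2f$, and $\tensor{N}{_A} \mapsto \tensor{N}{_A} - m\tensor{f}{_{||A}} + \tfrac14 \tensor{f}{_{||C}}\tensor{\chi}{_A_B^{||BC}} - \tfrac14 \tensor{f}{_{||C}}\tensor{\chi}{^C^B_{||BA}}$. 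I would first compute how the ``$N$-part'' $-\tfrac{3}{8\pi}\int N_A \epsilon^{AD} n_{||D}$ common to all three definitions changes: the $-m f_{||A}$ piece gives, after integrating by parts and using $\tensor{n}{_{||AB}} = -\tensor{h}{_A_B} n$ from Lemma~\ref{lem:dipole}, a term proportional to $\int f\, \epsilon^{DA} n_{||D} m_{||A}$; the $\chi$-dependent pieces will need to be rewritten using the decomposition $\tensor{\chi}{_A_B} = 2\tensor{c}{_{||AB}} - \tensor{h}{_A_B}D^2 c + \cdots$ (Lemma~\ref{lem:potentials}, with the $b$-part retained since $b$ need not vanish here) and Eq.~\eqref{eq:c-solution}, $D^2(D^2+2)c = \tensor{\chi}{_A_B^{||BA}}$, together with the analogous identity for $b$.

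Next I would handle the genuinely definition-specific corrections. For $\cjk$ versus $\cn$ the extra term is $-\tfrac{\alpha-1}{32\pi}\int \epsilon^{AD} n_{||D} \tensor{\chi}{_A_B}\tensor{\chi}{^B^C_{||C}}$; substituting the shift $\tensor{\chi}{_A_B} \mapsto \tensor{\chi}{_A_B} - 2f_{||AB} + h_{AB}D^2 f$ and keeping only the linear-in-$f$ cross terms (the $O(f^2)$ terms should cancel or be discarded consistently with the first-order nature of the stated result), then converting $\chi$ to potentials and repeatedly integrating by parts with the commutator identity \eqref{eq:laplacian} and the dipole identities, should produce the $b$-dependent terms $\tfrac32(\alpha-1) n D^2(D^2+2)b$ and $\tfrac14(3-2\alpha) n_{||A}D^A D^2(D^2+2)b$ in the $\cn$ line. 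For $\cwy$ the extra term is $\tfrac{1}{4\pi}\int m\, \epsilon^{AB} n_{||B} c_{||A}$; here one needs how $c$ itself transforms, which follows from the $\chi$ transformation and \eqref{eq:c-solution}: since $D^2(D^2+2)(\delta c) = (-2f_{||AB}+h_{AB}D^2f)^{||BA}$, a short computation using \eqref{eq:laplacian} shows the right-hand side is proportional to $D^2(D^2+2)f$ up to $l=0,1$ pieces, so $\delta c = f$ modulo $l=0,1$ harmonics — and since $c$ carries no $l=0,1$ part, effectively $c \mapsto c + f_{\ge 2}$. Combined with $m \mapsto m$ and an integration by parts moving $c_{||A}$, this yields the $\epsilon^{DA} n_{||D} m_{||A}$ term with coefficient $1$ (versus $3$ for the others), plus the $b$-term $\tfrac14 n_{||A}D^A D^2(D^2+2)b$ coming from the $N$-part computation that is shared with $\cjk$.

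The main obstacle I anticipate is bookkeeping rather than conceptual: carefully isolating the terms linear in $f$ while discarding quadratic ones in a way that is consistent with what the theorem claims, and correctly tracking the numerous integrations by parts needed to reduce everything to the canonical forms $\int f\,\epsilon^{DA}n_{||D}m_{||A}$, $\int f\, n D^2(D^2+2)b$, and $\int f\, n_{||A}D^A D^2(D^2+2)b$. In particular, verifying that all the $c$-dependent contributions collapse (the $b=0$ case of Theorem~\ref{thm:cjk-cn} and the vanishing in Theorem~\ref{thm:fluxes} suggest that the ``$c$-sector'' is supertranslation-trivial once $\dot\chi=0$, because then $c$ is essentially shifted by $f$ and the quadratic-in-$c$ terms are total divergences as in Lemma~\ref{lem:momenta-potentials}), and that the surviving dependence is entirely through $m_{||A}$ and through the $b$-potential, is the delicate point. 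I would organise the computation by first proving a single lemma computing $\delta\big(\!-\tfrac{3}{8\pi}\int N_A\epsilon^{AD}n_{||D}\big)$ under $\dot\chi=0$, then adding the definition-specific corrections; the detailed algebra I would relegate to an appendix, as the paper does for its other flux and transformation results.
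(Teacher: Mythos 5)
Your strategy is essentially the paper's own (Appendix \ref{app:transformations}): pass to the potential form of Lemma \ref{lem:momenta-potentials}, use that under $u \mapsto u-f$ the magnetic potential is untouched while the electric one is shifted by $f$, insert the $\dot{\chi}=0$ transformation of $N_A$ from Theorem \ref{thm:data-transformations}, and integrate by parts until $f$ stands undifferentiated. However, the step you rely on is stated with the wrong sign: from \eqref{eq:chi-trans} and \eqref{eq:chi} one has $2(\delta c)_{||AB} - h_{AB}D^2(\delta c) = -\left(2f_{||AB} - h_{AB}D^2 f\right)$, equivalently $D^2(D^2+2)\,\delta c = \left(-2f_{||AB}+h_{AB}D^2f\right)^{||BA} = -\,D^2(D^2+2)f$, so $b \mapsto b$ and $c \mapsto c - f$ (modulo $l=0,1$), not $c \mapsto c + f_{\ge 2}$ as you assert. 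This is not cosmetic: with your sign the $\cwy$ correction $\frac{1}{4\pi}\int_{S^2} m\,\epsilon^{AB}n_{||B}c_{||A}$ changes by $+\frac{2}{8\pi}\int_{S^2} f\,\epsilon^{DA}n_{||D}m_{||A}$, and the $m$-coefficient in the $\cwy$ line would come out as $5$ rather than the claimed $1$; with $c \mapsto c-f$ it changes by $-\frac{2}{8\pi}\int_{S^2} f\,\epsilon^{DA}n_{||D}m_{||A}$, cancelling two of the three units inherited from $\cjk$. The same shift also enters the bilinear $b$--$c$ terms of \eqref{eq:cjk-pot}--\eqref{eq:cwy-pot}, which together with the $\chi$-terms left in the $N_A$ transformation are precisely the source of the $b$-dependent pieces $n_{||A}D^AD^2(D^2+2)b$ and, for $\cn$, $nD^2(D^2+2)b$; with the flipped sign those coefficients would come out wrong as well.

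A smaller but genuine point: the theorem is exact, not first order in $f$, so the quadratic-in-$f$ terms you propose to ``discard'' in the $\cn$ correction $\propto \int_{S^2}\epsilon^{AD}n_{||D}\chi_{AB}\tensor{\chi}{^B^C_{||C}}$ must be shown to vanish, not dropped. They do vanish: the $f$-shift of $\chi$ is purely electric, and quadratic terms built from a single scalar potential integrate to zero, which is exactly the cancellation established in the proof of Lemma \ref{lem:momenta-potentials} (Appendix \ref{app:potentials}); equivalently, if you work directly from \eqref{eq:cn-pot}, every term is linear in $c$, so no $O(f^2)$ contribution appears at all once $\dot{\chi}=0$ removes the nonlinear terms in the transformations of $m$ and $N_A$. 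With the sign of $\delta c$ corrected and this exactness point made explicit, your outline reproduces the computation of Appendix \ref{app:transformations}.
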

\begin{proof}
    The above transformation rules are obtained by substituting formulas from Theorem~\ref{thm:data-transformations} into \cref{eq:cjk-pot,eq:cn-pot,eq:cwy-pot} and setting $\dot{\chi} = 0$. Then, by integrating by parts, one may see that some terms vanish and the rest can be manipulated into the form above. For details, see Appendix \ref{app:transformations}.
\end{proof}

From the above result, it is clear that $\cjk$ and $\cwy$ have the simplest transformation formula. The introduction of additional terms with free parameter $\alpha$ in $\cn$ comes with the disadvantage of a more complex expression. On the other hand, the correction term in $\cwy$ does not produce any additional terms in the transformation formula, as compared with $\cjk$. In fact, this correction term partially cancels the term independent of $b$ in the transformation expression. It is, therefore, easy to notice that, by properly changing the coefficient of this correction term, one may obtain a definition of angular momentum that is conveniently close to being invariant. Hence, in the following section, the new definition is presented.

\subsection{The proposed definition of angular momentum}

\begin{definition}
    We propose a new definition of angular momentum:
    \begin{equation}\label{eq:j}
        J = \cjk + \frac{3}{8\pi} \int_{S^2} m\tensor{\epsilon}{^A^D}\tensor{n}{_{||D}}\tensor{c}{_{||A}}.
    \end{equation}
\end{definition}

\begin{theorem}\label{thm:j}
    The new angular momentum $J$ proposed in (\ref{eq:j}) has the following properties:
    \begin{itemize}
        \item Under the assumption that $b = 0$, $J$ obeys the following evolution equation:
        \begin{equation*}
            \dot{J} = \frac{1}{8\pi} \int_{S^2} \tensor{\epsilon}{^A^D}\tensor{n}{_{||D}} \left( 3m\tensor{\dot{c}}{_{||A}} - \frac32\tensor{c}{_{||A}}\tensor{\dot{c}}{^{||BC}}\tensor{\dot{c}}{_{||BC}} + \frac34\tensor{c}{_{||A}}(D^2\dot{c})^2 + \frac14\tensor{c}{_{||A}}D^2(D^2 + 2)\dot{c} \right),
        \end{equation*}
        \item If $\dot{\chi} = 0$, then, under a supertranslation $(u, \tensor{x}{^A}) \mapsto (u - f(\tensor{x}{^A}), \tensor{x}{^A})$, $J$ obeys the following transformation rule:
        \begin{equation*}
            J \longmapsto J + \frac{1}{32\pi} \int_{S^2} fn_{||A}D^AD^2(D^2 + 2)b.
        \end{equation*}
    \end{itemize}
\end{theorem}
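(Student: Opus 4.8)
The plan is to treat the two bullet points of Theorem~\ref{thm:j} separately, in each case reducing to results already established for $\cjk$ and $\cwy$. Notice first that $J$ is nothing but a linear combination: from \eqref{eq:j} and \eqref{eq:cwy} we read off $J = 3\cwy - 2\cjk$, since $\cwy - \cjk = \frac{1}{4\pi}\int_{S^2} m\,\epsilon^{AD}n_{||D}c_{||A}$ and hence $\frac{3}{8\pi}\int_{S^2} m\,\epsilon^{AD}n_{||D}c_{||A} = \tfrac32(\cwy-\cjk)$, giving $J = \cjk + \tfrac32(\cwy - \cjk) = \tfrac32\cwy - \tfrac12\cjk$. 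So really $J = \tfrac32\cwy - \tfrac12\cjk$, and both stated properties should follow by taking the same linear combination of the corresponding statements for $\cwy$ and $\cjk$.

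For the evolution equation, I would invoke Theorem~\ref{thm:fluxes}, which under the assumption $b=0$ gives $\dcjk = -\frac{1}{16\pi}\int_{S^2}\epsilon^{AD}n_{||D}c_{||A}D^2(D^2+2)\dot c$ and the explicit formula for $\dcwy$. Differentiating $J = \tfrac32\cwy - \tfrac12\cjk$ in $u$ and substituting:
\begin{equation*}
    \dot J = \tfrac32\dcwy - \tfrac12\dcjk = \frac{3}{16\pi}\int_{S^2}\epsilon^{AD}n_{||D}\left(2m\dot c_{||A} - c_{||A}\dot c^{||BC}\dot c_{||BC} + \tfrac12 c_{||A}(D^2\dot c)^2\right) + \frac{1}{32\pi}\int_{S^2}\epsilon^{AD}n_{||D}c_{||A}D^2(D^2+2)\dot c.
\end{equation*}
Collecting coefficients ($\tfrac32\cdot 2 = 3$ on the $m\dot c_{||A}$ term, $-\tfrac32$ and $+\tfrac34$ on the quadratic terms, and $\tfrac{1}{32\pi}\cdot\tfrac{16\pi}{8\pi}=\tfrac14\cdot\tfrac{1}{8\pi}$ on the last, after pulling the common $\tfrac{1}{8\pi}$ out front) reproduces exactly the displayed formula. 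This step is purely bookkeeping once the linear relation is noted.

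For the supertranslation rule, I would use Theorem~\ref{thm:supertranslations}, valid under $\dot\chi=0$, which gives the transformation increments $\Delta\cjk = \frac{1}{8\pi}\int_{S^2} f\left(3\epsilon^{DA}n_{||D}m_{||A} + \tfrac14 n_{||A}D^AD^2(D^2+2)b\right)$ and $\Delta\cwy = \frac{1}{8\pi}\int_{S^2} f\left(\epsilon^{DA}n_{||D}m_{||A} + \tfrac14 n_{||A}D^AD^2(D^2+2)b\right)$. Then $\Delta J = \tfrac32\Delta\cwy - \tfrac12\Delta\cjk$; the $\epsilon^{DA}n_{||D}m_{||A}$ terms cancel because $\tfrac32\cdot 1 - \tfrac12\cdot 3 = 0$, while the $b$-terms combine with coefficient $\tfrac14(\tfrac32 - \tfrac12) = \tfrac14\cdot 1$, yielding $\Delta J = \frac{1}{8\pi}\int_{S^2} f\cdot\tfrac14 n_{||A}D^AD^2(D^2+2)b = \frac{1}{32\pi}\int_{S^2} f n_{||A}D^AD^2(D^2+2)b$, as claimed. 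The main (and essentially only) obstacle is recognizing the linear relation $J = \tfrac32\cwy - \tfrac12\cjk$ and keeping the numerical coefficients straight; everything else is a direct substitution into prior theorems, with no new integration by parts or geometric identities required. For a fully self-contained argument one could alternatively reprove both formulas from scratch by substituting Theorem~\ref{thm:data-transformations} and the $b=0$ form of \eqref{eq:chi} directly into the potential form of $J$ obtained from \eqref{eq:cjk-pot}, mirroring the appendix calculations for $\cjk$ and $\cwy$, but the linear-combination shortcut is cleaner and I would present it that way.
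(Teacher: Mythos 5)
Your proposal is correct and is essentially the paper's own argument: the paper likewise obtains $\dot J$ by adding $\tfrac32$ times the already-computed derivative of the CWY correction term to $\dot{J}^{(\text{CJK})}$ (equivalently, your $\dot J=\tfrac32\dot{J}^{(\text{CWY})}-\tfrac12\dot{J}^{(\text{CJK})}$), and deduces the transformation rule from the cancellation of the $m_{||A}$ term once the correction coefficient is changed. The momentary slip ``$J=3J^{(\text{CWY})}-2J^{(\text{CJK})}$'' (and the garbled factor $\tfrac{1}{32\pi}\cdot\tfrac{16\pi}{8\pi}$) are immediately self-corrected and do not affect the argument.
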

\begin{proof}
    The flux is calculated explicitly. See Appendix \ref{app:flux-pot} for details. The transformation rule is easily deduced by seeing that the first term in (\ref{eq:trans-cwy}) vanishes due to the changed coefficient in the correction term, as stated earlier.
\end{proof}

\begin{theorem}\label{thm:j-transformation}
    Suppose that $\dot{\chi} = 0$ and $n_{||A}D^AD^2(D^2 + 2)b$ is orthogonal to $f$. Then, $J$ is invariant under the supertranslation $(u, \tensor{x}{^A}) \mapsto (u - f(\tensor{x}{^A}), \tensor{x}{^A})$:
    \begin{equation*}
        J \longmapsto J.
    \end{equation*}
\end{theorem}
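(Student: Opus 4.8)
The plan is to derive this statement as an immediate corollary of Theorem~\ref{thm:j}. Under the standing hypothesis $\dot{\chi} = 0$, the second bullet of Theorem~\ref{thm:j} already supplies the explicit transformation rule
\[
    J \longmapsto J + \frac{1}{32\pi} \int_{S^2} f\, n_{||A} D^A D^2(D^2 + 2) b ,
\]
so the entire content of the present theorem is that, under the additional hypothesis, the correction integral on the right-hand side vanishes. No new flux or supertranslation computation is therefore required; everything reduces to interpreting the orthogonality assumption correctly.

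The key step is to make that assumption precise. On scalar functions on $S^2$ the natural pairing is the $L^2$ inner product $\langle \phi, \psi \rangle = \int_{S^2} \phi\psi$ (with respect to the round volume form, as fixed in the conventions of Section~\ref{introduction}). Saying that $n_{||A} D^A D^2(D^2 + 2) b$ is orthogonal to $f$ thus means exactly
\[
    \int_{S^2} f\, n_{||A} D^A D^2(D^2 + 2) b = 0 .
\]
Inserting this into the transformation rule quoted above from Theorem~\ref{thm:j} collapses the correction term to zero, giving $J \longmapsto J$, which is the claim.

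To see that the statement is not vacuous I would append a brief remark on when the hypothesis holds. The cleanest sufficient condition is $b = 0$: then $D^2(D^2+2)b$ vanishes identically, so the correction term is zero for every $f$ and $J$ is outright supertranslation invariant whenever the shear carries no pseudoscalar potential. More generally, since $b$ has no $l=0,1$ parts (Lemma~\ref{lem:potentials}) and $D^2(D^2+2)$ acts on an $l$-mode by the factor $l(l+1)(l-1)(l+2)$, the fixed function $g := n_{||A} D^A D^2(D^2+2) b$ is a definite datum supported on $l \ge 1$; the hypothesis $\langle f, g\rangle = 0$ is then a single linear constraint on $f$, satisfied in particular when the spherical-harmonic content of $f$ avoids that of $g$. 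The theorem is stated conditionally precisely because a completely generic supertranslation need not meet this constraint. Accordingly, the only ``obstacle'' here is interpretational rather than computational: the proof is a one-line substitution into Theorem~\ref{thm:j}, and the care lies entirely in fixing the meaning of ``orthogonal'' and in recording when the condition can actually be realised.
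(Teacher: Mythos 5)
Your proposal is correct and takes essentially the same route as the paper: the result is read off directly from the transformation rule in Theorem~\ref{thm:j}, with the orthogonality hypothesis (interpreted as vanishing of the $L^2$ pairing on $S^2$) annihilating the correction integral. Your added remarks on when the hypothesis is realisable go beyond the paper's one-line proof but do not change the argument.
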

\begin{proof}
    Follows directly from Theorem \ref{thm:j}.
\end{proof}

As a result, one obtains another definition of angular momentum at null infinity which obeys a much simpler transformation rule under supertranslations, as compared to the other definitions.

\subsection{Transformations with inclusion of the superenergy terms}

So far, we have considered only transformations of the data on $\scri^+$, which gave the transformation formulae for angular momenta presented above. However, one could as well consider how the rotational vector field itself is transformed under supertranslations, which introduces additional terms into the transformation rules. To see this, we first note:

\begin{theorem}\label{thm:field-transformation}
    For any function $f: S^2 \rightarrow \mathbb{R}$, a rotational vector field $X = X^A\partial_{x^A}$ on $\scri^+$ obeys the following transformation rule under the supertranslation $\Bar{u} = u - f(x^A)$, $\Bar{x}^A = x^A$:
    \begin{equation*}
        \Bar{X} = X^A \partial_{\Bar{x}^A} - X(f) \partial_{\Bar{u}}.
    \end{equation*}
\end{theorem}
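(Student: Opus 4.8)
This is essentially a change-of-coordinates identity for a vector field, exploiting that a vector field on $\scri^+$ is a coordinate-independent object: the plan is to take the rotational vector field $X$, which by hypothesis has the form $X = X^A\partial_{x^A}$ in the unbarred Bondi--Sachs chart — with components $X^A$ depending on $\tensor{x}{^A}$ only, since a rotation does not displace the retarded time — and to re-express this same field in the barred chart $(\Bar{u}, \Bar{x}^A)$.

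First I would invert the supertranslation to get $u = \Bar{u} + f(\Bar{x}^A)$ and $x^A = \Bar{x}^A$. Applying the chain rule to the coordinate vector fields then gives
\[
\partial_{x^A} = \frac{\partial \Bar{u}}{\partial x^A}\,\partial_{\Bar{u}} + \frac{\partial \Bar{x}^B}{\partial x^A}\,\partial_{\Bar{x}^B} = \partial_{\Bar{x}^A} - (\partial_A f)\,\partial_{\Bar{u}},
\]
where the last equality uses $\Bar{u} = u - f(x^A)$ together with $\Bar{x}^B = x^B$. Substituting this into $X = X^A\partial_{x^A}$ yields
\[
\Bar{X} = X^A\partial_{\Bar{x}^A} - X^A(\partial_A f)\,\partial_{\Bar{u}} = X^A\partial_{\Bar{x}^A} - X(f)\,\partial_{\Bar{u}},
\]
since $X(f) = X^A\partial_A f$ (the field $X$ having no $\partial_u$-component, and $f$ being a function on $S^2$). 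This is the claimed formula.

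I do not expect a genuine obstacle here: the computation is entirely routine, and the only points needing attention are bookkeeping ones — that the components $X^A$ are literally unchanged as functions because $\Bar{x}^A = x^A$, so no component transformation takes place, and that $X(f)$ is unambiguous because $\partial_u f = 0$. The feature worth stressing — and the reason this lemma is recorded — is that $\Bar{X}$ is \emph{not} the naive rotational field $X^A\partial_{\Bar{x}^A}$ adapted to the barred chart: it differs from it by the pure-supertranslation term $-X(f)\,\partial_{\Bar{u}}$, and it is precisely this discrepancy that generates the additional contributions to the transformation of angular momentum discussed in this subsection.
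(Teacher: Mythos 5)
Your argument is correct and is essentially identical to the paper's own proof: both apply the chain rule to write $\partial_{x^A} = \partial_{\Bar{x}^A} - (\partial_A f)\,\partial_{\Bar{u}}$ and then substitute into $X = X^A\partial_{x^A}$. The extra bookkeeping remarks (that $X^A$ is unchanged since $\Bar{x}^A = x^A$, and that $X(f)$ is unambiguous because $\partial_u f = 0$) are sound and only make explicit what the paper leaves implicit.
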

\begin{proof}
    This is easily proven by noticing that
    \begin{equation*}
        \partial_{x^A} = \frac{\partial\Bar{x}^A}{\partial x^A}\partial_{\Bar{x}^A} + \frac{\partial\Bar{u}}{\partial x^A}\partial_{\Bar{u}} = \partial_{\Bar{x}^A} - \frac{\partial f}{\partial x^A}\partial_{\Bar{u}},
    \end{equation*}
    and hence,
    \begin{equation*}
        \Bar{X} = X^A \partial_{\Bar{x}^A} - X^A \frac{\partial f}{\partial x^A} \partial_{\Bar{u}} = X^A \partial_{\Bar{x}^A} - X(f) \partial_{\Bar{u}}.
    \end{equation*}
\end{proof}

In order to properly transform $\cjk$ with the rotational vector field taken into consideration, we shall look back at the defining Hamiltonian. In practice, we transform the formula \cite[(C.100)]{cjk} by applying Theorem~\ref{thm:data-transformations} and Theorem~\ref{thm:field-transformation}. As a result, we get:

\begin{theorem}
    Suppose that $\dot{\chi} = 0$. Then, under a supertranslation $u \mapsto u - f(\tensor{x}{^A})$, the angular momenta from Definition \ref{def:momenta} obey the following transformation rules:
    \begin{align*}
        \cjk &\longmapsto \cjk + \frac{1}{8\pi} \int_{S^2} f \left( {\epsilon}^{DA}\tensor{n}{_{||D}}\tensor{m}{_{||A}} + \frac14\epsilon^{AD}n_{||D}(D^2D^2 - 1)(v_A - 3c_{||A}) \right), \nonumber \\
        \begin{split}
            \cn &\longmapsto \cn + \frac{1}{8\pi} \int_{S^2} f \left( {\epsilon}^{DA}\tensor{n}{_{||D}}\tensor{m}{_{||A}} + \frac32(\alpha - 1)nD^2(D^2+2)b \right. \\
            &\phantom{=}\, \left. + \frac14(3 - 2\alpha)n_{||A}D^AD^2(D^2 + 2)b - \frac12\epsilon^{AD}n_{||D}D_AD^2(D^2 + 2)c \right),
        \end{split} \nonumber \\
        \cwy &\longmapsto \cwy + \frac{1}{8\pi} \int_{S^2} f \left({\epsilon}^{AD}\tensor{n}{_{||D}}\tensor{m}{_{||A}} + \frac14\epsilon^{AD}n_{||D}(D^2D^2 - 1)(v_A - 3c_{||A}) \right)\, .
    \end{align*}
\end{theorem}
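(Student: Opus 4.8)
The plan is to redo the computation behind Theorem~\ref{thm:supertranslations}, but starting one step earlier — from the full CJK Hamiltonian \cite[(C.100)]{cjk}, which expresses $\cjk$ as a surface integral over a cut of $\scri^+$ contracted with the BMS generator $X$, rather than from the already-reduced potential formula (\ref{eq:cjk-pot}). For a rotation one has $X = X^A\partial_{x^A}$ with $X^A = \epsilon^{AB}n_{||B}$ and no $\partial_u$-component. The genuinely new ingredient, relative to Theorem~\ref{thm:supertranslations}, is that under the supertranslation the generator itself is pushed forward to $\bar X = X^A\partial_{\bar x^A} - X(f)\partial_{\bar u}$ by Theorem~\ref{thm:field-transformation}; the extra $-X(f)\partial_{\bar u}$ piece pairs with the energy-type (``superenergy'') part of the Hamiltonian and is the source of the additional terms.

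Concretely, I would split \cite[(C.100)]{cjk} into the part contracted with $X^A\partial_{x^A}$ and the part contracted with $\partial_u$. Evaluating the first part on the data transformed by Theorem~\ref{thm:data-transformations} (using the corrected $N_A$-rule) and imposing $\dot\chi = 0$ simply reproduces the right-hand sides of Theorem~\ref{thm:supertranslations}, which I would quote. For the new $\partial_u$-part I substitute the time component $-X(f) = -\epsilon^{AB}n_{||B}f_{||A}$ against the Bondi-energy density, which for $\dot\chi = 0$ reduces to a term proportional to the mass aspect $m$ plus a definite quadratic-in-$\chi$ combination. I then integrate this contribution by parts on $S^2$ (Lemmas~\ref{lem:full-divergence} and \ref{lem:parts}) to move all derivatives onto $f$, using $n_{||AB} = -h_{AB}n$ and $D^2n = -2n$ from Lemma~\ref{lem:dipole}, the Laplacian commutator (\ref{eq:laplacian}), and the decomposition $\chi_{AB} = v_{A||B}+v_{B||A}-h_{AB}v^C_{||C}$ of Lemma~\ref{lem:potentials}. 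The $m$-piece integrates to a multiple of $\frac{1}{8\pi}\int_{S^2} f\,\epsilon^{DA}n_{||D}m_{||A}$ — this is what turns the coefficient $3$ of Theorem~\ref{thm:supertranslations} into $1$ for $\cjk$ — while the quadratic-in-$\chi$ piece collapses, via the identity $(D^2D^2-1)w_{||A} = D_A D^2(D^2+2)w$ (a consequence of (\ref{eq:laplacian})), to $\frac14\epsilon^{AD}n_{||D}(D^2D^2-1)(v_A-3c_{||A})$, whose $\epsilon_{AB}b^{||B}$-part recovers the old $\frac14 n_{||A}D^AD^2(D^2+2)b$ term and whose $c_{||A}$-part is the new superenergy term $-\frac12\epsilon^{AD}n_{||D}D_AD^2(D^2+2)c$. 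For $\cn$ and $\cwy$ I add the correction terms of Definition~\ref{def:momenta} in their potential form (Lemma~\ref{lem:momenta-potentials}), transform the shear (and the induced $c$ and $m$) by Theorem~\ref{thm:data-transformations}, set $\dot\chi=0$, and integrate by parts; for $\cwy$ this only adjusts the mass-aspect coefficient, leaving the shear part identical to that of $\cjk$, while for $\cn$ it regenerates the $\alpha$-dependent $b$-terms of Theorem~\ref{thm:supertranslations} together with the same new $c$-term.

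The main obstacle is the bookkeeping: correctly identifying which pieces of \cite[(C.100)]{cjk} pair with $\partial_u$ and carrying the integrations by parts with the right numerical coefficients, so that the mass term lands with coefficient exactly $1$ and the shear contributions assemble into $(D^2D^2-1)(v_A-3c_{||A})$. Some care is also needed with the subtle transformation rule for $N_A$ and with the $l=0,1$ projections implicit in the potentials $b$ and $c$; beyond this, the argument is a long but mechanical computation whose conceptual content is entirely in Theorem~\ref{thm:field-transformation} together with the $\dot\chi=0$ simplification.
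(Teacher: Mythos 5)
Your strategy is the same as the paper's: keep the rotational ($X^A$) part exactly as in Theorem~\ref{thm:supertranslations} and add the contribution that the transformed generator's $-X(f)\partial_{\bar u}$ component (Theorem~\ref{thm:field-transformation}) picks up from the Hamiltonian \cite[(C.100)]{cjk}. However, your description of that extra contribution has a concrete flaw. The only new term is $\frac{1}{16\pi}\int_{S^2}\bigl(4m-\tensor{\chi}{^B^C_{||BC}}\bigr)X^u$, which is \emph{linear}, not quadratic, in the shear; and since $\tensor{\chi}{^B^C_{||BC}}=D^2(D^2+2)c$ (equation (\ref{eq:c-solution}), the $b$-part of the decomposition (\ref{eq:chi}) drops out of the double divergence), the superenergy piece contains no $b$ whatsoever. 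After transforming and integrating by parts it yields, per $\frac{1}{8\pi}\int_{S^2}f$, exactly $2\epsilon^{AD}n_{||D}m_{||A}-\frac12\epsilon^{AD}n_{||D}D_AD^2(D^2+2)c$: the $m$-piece is what changes the mass coefficients, and the $c$-piece is the only new shear term. It cannot ``recover'' the old $\frac14 n_{||A}D^AD^2(D^2+2)b$ term, which lives entirely in the quoted Theorem~\ref{thm:supertranslations}. As you have set things up (quote Theorem~\ref{thm:supertranslations} for the first part \emph{and} let the $\partial_u$-part collapse to the whole of $\frac14\epsilon^{AD}n_{||D}(D^2D^2-1)(v_A-3c_{||A})$), the $b$-term is counted twice and would come out with coefficient $\frac12$ instead of $\frac14$. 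The correct bookkeeping is that the old $b$-term and the new $c$-term are merely repackaged at the end, via $v_A=c_{||A}+\epsilon_{AB}b^{||B}$ (Lemma~\ref{lem:potentials}) and your identity $(D^2D^2-1)w_{||A}=D_AD^2(D^2+2)w$, into the displayed $(D^2D^2-1)(v_A-3c_{||A})$ form.

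Two further steps your sketch leaves open. Transforming the density also produces a piece quadratic in $f$, namely $D^2(D^2+2)f$ paired with $X(f)=\epsilon^{AB}n_{||B}f_{||A}$; you must argue that it integrates to zero (the rotation generated by $n$ is a Killing field commuting with $D^2(D^2+2)$, in the spirit of Lemma~\ref{lem:zero-int}), otherwise spurious $f$-quadratic terms survive. And the numerical factor of the $m$-piece has to be computed, not asserted: the density carries $4m$ and $X^u\mapsto -\epsilon^{AD}n_{||D}f_{||A}$, which is precisely what shifts the coefficient of $\epsilon^{DA}n_{||D}m_{||A}$ by $-2$, i.e.\ from $3$ to $1$ for $\cjk$ and $\cn$, and from $1$ to $-1$ (equivalently $+\epsilon^{AD}$) for $\cwy$; saying the $m$-piece ``integrates to a multiple'' of the right expression is not yet a proof of those coefficients.
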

\begin{proof}
    This follows from Theorem \ref{thm:supertranslations} with the inclusion of additional term from \cite[(C.100)]{cjk}. The only nonzero term which was not considered earlier is
    \begin{equation*}
        \frac{1}{16\pi}\int_{S^2}\left(4m-\tensor{\chi}{^B^C_{||BC}}\right)X^u.
    \end{equation*}
    After transforming it under a supertranslation $u \mapsto u - f(\tensor{x}{^A})$, we get: (the second equality follows from integration by parts)
    \begin{equation*}
    \begin{split}
        &\phantom{=} \frac{1}{16\pi}\int_{S^2}\left(4m-\tensor{\chi}{^B^C_{||BC}}+2\tensor{f}{^{||BC}_{||BC}}-D^2D^2f\right)\Bar{X}^{\Bar{u}} \\
        &= - \frac{1}{16\pi}\int_{S^2}\left(4m-\tensor{\chi}{^B^C_{||BC}}+D^2(D^2+2)f\right)\epsilon^{AD}n_{||D}f_{||A} \\
        &= - \frac{1}{8\pi}\int_{S^2}\left(2m\epsilon^{AD}n_{||D}f_{||A} - \frac12\tensor{\chi}{^B^C_{||BC}}\epsilon^{AD}n_{||D}f_{||A}\right) \\
        &= \frac{1}{8\pi}\int_{S^2}\left(2m_{||A}\epsilon^{AD}n_{||D}f + \frac12\epsilon^{AD}n_{||D}f_{||A}D^2(D^2+2)c\right) \\
        &= \frac{1}{8\pi}\int_{S^2}f\left(2m_{||A}\epsilon^{AD}n_{||D}-\frac12\epsilon^{AD}n_{||D}D_AD^2(D^2+2)c\right).
    \end{split}
    \end{equation*}
    By adding the above terms to the previously derived formulae, and applying Lemma \ref{lem:potentials}, we obtain the presented transformation rules.
\end{proof}

Considering the new formulae, one may propose a new definition:
\begin{definition}
    \begin{equation}\label{eq:j-tilde}
        \Tilde{J} = \cjk + \frac{1}{8\pi} \int_{S^2} m\tensor{\epsilon}{^A^D}\tensor{n}{_{||D}}\tensor{c}{_{||A}},
    \end{equation}
\end{definition}
which has the following properties:
\begin{theorem}\label{thm:j-tilde}
    The angular momentum $\Tilde{J}$ defined in (\ref{eq:j-tilde}) has the following properties:
    \begin{itemize}
        \item Under the assumption that $b = 0$, $\Tilde{J}$ obeys the following evolution equation:
        \begin{equation*}
            \dot{\Tilde{J}} = \frac{1}{8\pi} \int_{S^2} \tensor{\epsilon}{^A^D}\tensor{n}{_{||D}} \left( m\tensor{\dot{c}}{_{||A}} - \frac12\tensor{c}{_{||A}}\tensor{\dot{c}}{^{||BC}}\tensor{\dot{c}}{_{||BC}} + \frac14\tensor{c}{_{||A}}(D^2\dot{c})^2 - \frac14\tensor{c}{_{||A}}D^2(D^2 + 2)\dot{c} \right),
        \end{equation*}
        \item If $\dot{\chi} = 0$, then, under a supertranslation $(u, \tensor{x}{^A}) \mapsto (u - f(\tensor{x}{^A}), \tensor{x}{^A})$, $\Tilde{J}$ obeys the following transformation rule:
        \begin{equation*}
            \Tilde{J} \longmapsto \Tilde{J} + \frac{1}{32\pi} \int_{S^2} f\epsilon^{AD}n_{||D}(D^2D^2 - 1)(v_A - 3c_{||A}).
        \end{equation*}
    \end{itemize}
\end{theorem}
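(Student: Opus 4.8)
The plan is to observe that $\Tilde{J}$ is an affine combination of the two angular momenta whose behaviour is already known, and then invoke linearity. Comparing (\ref{eq:j-tilde}) with (\ref{eq:cwy}), the correction term of $\Tilde{J}$ is exactly half the correction term of $\cwy$ — and the potential $c$ in (\ref{eq:j-tilde}) is the same $l=0,1$-free solution of (\ref{eq:c-solution}) that appears in $\cwy$ and in Lemma~\ref{lem:momenta-potentials} — so
\begin{equation*}
    \Tilde{J} = \cjk + \tfrac12\bigl(\cwy - \cjk\bigr) = \tfrac12\bigl(\cjk + \cwy\bigr).
\end{equation*}
Every property that depends linearly on the data on $\scri^+$ therefore passes to $\Tilde{J}$ simply by averaging the corresponding statements for $\cjk$ and $\cwy$.

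For the first bullet, under the hypothesis $b=0$ I would differentiate this identity, $\dot{\Tilde{J}} = \tfrac12\dcjk + \tfrac12\dcwy$, and insert the two formulas of Theorem~\ref{thm:fluxes}, which are proved precisely under $b=0$. Collecting terms reproduces the stated expression: the $m\tensor{\dot c}{_{||A}}$ piece comes from $\tfrac12\dcwy$ with coefficient $\tfrac12\cdot\tfrac{2}{8\pi}=\tfrac{1}{8\pi}$, the cubic-in-$\dot c$ terms are merely halved, and the $D^2(D^2+2)\dot c$ piece comes from $\tfrac12\dcjk$, contributing $-\tfrac{1}{32\pi}=\tfrac{1}{8\pi}\cdot(-\tfrac14)$.

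For the second bullet I would use the transformation rules for $\cjk$ and $\cwy$ established just above (the version \emph{including} the superenergy term, valid for $\dot\chi=0$) and write $\Tilde{J}\longmapsto\Tilde{J}+\tfrac12(\Delta\cjk+\Delta\cwy)$ with $\Delta\cjk,\Delta\cwy$ the respective shifts. The decisive point is the index ordering on the volume form in those two formulas: the mass-aspect term enters $\cjk$ as $\epsilon^{DA}\tensor{n}{_{||D}}\tensor{m}{_{||A}}$ and $\cwy$ as $\epsilon^{AD}\tensor{n}{_{||D}}\tensor{m}{_{||A}}$, i.e.\ with opposite signs since $\epsilon^{DA}=-\epsilon^{AD}$, whereas the $(D^2D^2-1)(v_A-3\tensor{c}{_{||A}})$ term is identical in both. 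Averaging therefore annihilates the $\tensor{m}{_{||A}}$ term and leaves $\tfrac12\cdot\tfrac14\cdot\tfrac{1}{8\pi}\int_{S^2} f\,\epsilon^{AD}\tensor{n}{_{||D}}(D^2D^2-1)(v_A-3\tensor{c}{_{||A}}) = \tfrac{1}{32\pi}\int_{S^2} f\,\epsilon^{AD}\tensor{n}{_{||D}}(D^2D^2-1)(v_A-3\tensor{c}{_{||A}})$, as claimed.

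Via linearity there is essentially no hard step: it is coefficient bookkeeping layered on top of previously proved theorems. The one spot that must be handled carefully is exactly this $\epsilon^{DA}$-versus-$\epsilon^{AD}$ sign in the transformation formulas — it is what makes the $\tensor{m}{_{||A}}$ contribution cancel in $\Tilde{J}$, and mishandling it would spuriously leave a mass term. Should one instead want a self-contained computation of the flux, the genuine labour would be to differentiate $\tfrac{1}{8\pi}\int_{S^2} m\,\epsilon^{AD}\tensor{n}{_{||D}}\tensor{c}{_{||A}}$ directly, using $\dot m$ from (\ref{eq:m-flux}), the $b=0$ identity $\tensor{\dot\chi}{_A_B}=2\tensor{\dot c}{_{||AB}}-\tensor{h}{_A_B}D^2\dot c$ (whence $\tensor{\dot\chi}{^A^B_{||BA}}=D^2(D^2+2)\dot c$), repeated integration by parts (Lemma~\ref{lem:parts}), and the dipole identities of Lemma~\ref{lem:dipole}; but since Theorem~\ref{thm:fluxes} already records the flux of $\cwy$, this detour is unnecessary.
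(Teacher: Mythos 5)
Your proposal is correct and follows essentially the paper's own route: the paper likewise obtains $\dot{\Tilde{J}}$ by adding one half of the $\cwy$ correction-term contribution to $\dcjk$ (equivalent to your $\Tilde{J}=\tfrac12(\cjk+\cwy)$ averaging), and deduces the transformation rule from the previously derived superenergy-included formulas, where the halved correction coefficient cancels the $\tensor{m}{_{||A}}$ term. Your reading of the $\epsilon^{DA}$ versus $\epsilon^{AD}$ sign in those formulas is the genuine mechanism of that cancellation, so no gap remains.
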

\begin{proof}
    This result is proved analogously to Theorem~\ref{thm:j} but with inclusion of additional transformation terms as shown above.
\end{proof}

Accordingly, we may note an analogous result to Theorem \ref{thm:j-transformation}.

\begin{theorem}
    Suppose that $\chi$ is such that $\dot{\chi} = 0$ and $\epsilon^{AD}n_{||D}(D^2D^2 - 1)(v_A - 3c_{||A})$ is orthogonal to $f$. Then, $\Tilde{J}$ is invariant under the supertranslation $(u, \tensor{x}{^A}) \mapsto (u - f(\tensor{x}{^A}), \tensor{x}{^A})$:
    \begin{equation*}
        \Tilde{J} \longmapsto \Tilde{J}.
    \end{equation*}
\end{theorem}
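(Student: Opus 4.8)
The plan is to obtain this as an immediate corollary of the transformation rule established in Theorem~\ref{thm:j-tilde}. The first hypothesis, $\dot\chi = 0$, is precisely the condition under which that theorem applies, and it yields
\begin{equation*}
    \Tilde{J} \longmapsto \Tilde{J} + \frac{1}{32\pi} \int_{S^2} f\,\epsilon^{AD}n_{||D}(D^2D^2 - 1)(v_A - 3c_{||A}).
\end{equation*}
Thus the entire task reduces to showing that the correction integral vanishes under the second hypothesis.

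To that end, let $g$ denote the scalar field $\epsilon^{AD}n_{||D}(D^2D^2 - 1)(v_A - 3c_{||A})$ on $S^2$. By Lemma~\ref{lem:potentials} the one-form potential $v_A$ is well defined and smooth, being associated with the shear $\chi_{AB}$, and $c$ is its smooth scalar potential with no $l = 0, 1$ part; since contraction with $\epsilon^{AD}n_{||D}$ and application of $D^2D^2 - 1$ preserve smoothness, $g$ is a genuine smooth function on $S^2$, so the orthogonality hypothesis is a meaningful condition on the data. Recalling the convention that integration on $S^2$ is taken with respect to the round volume form, the correction term is exactly $\frac{1}{32\pi}$ times the $L^2(S^2)$ pairing $\int_{S^2} f g$, and "$g$ is orthogonal to $f$" means precisely that this pairing is zero. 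Substituting back into the displayed transformation rule gives $\Tilde{J} \longmapsto \Tilde{J}$, which is the claim.

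I do not anticipate any genuine obstacle: the mathematical content is carried entirely by Theorem~\ref{thm:j-tilde}, and what remains is only the observation that the "orthogonal to $f$" hypothesis is, by definition, the statement that the relevant $L^2$ inner product vanishes — the same reduction that was already used to pass from Theorem~\ref{thm:j} to Theorem~\ref{thm:j-transformation}. The only point worth stating explicitly is the well-definedness and smoothness of $g$ noted above.
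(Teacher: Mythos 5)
Your proposal is correct and matches the paper's own proof, which likewise deduces the statement directly from Theorem~\ref{thm:j-tilde} by noting that the orthogonality hypothesis kills the correction integral. The extra remarks on the smoothness of the integrand are harmless but not needed.
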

\begin{proof}
    Follows directly from Theorem \ref{thm:j-tilde}.
\end{proof}
        
\section{Conclusion}

This article aimed to present different notions of angular momentum at null infinity found in the literature and compare their selected properties. To aid this goal, essential definitions and theorems were first reviewed. Then, the chosen angular momenta were presented, in two different forms. Consequently, the formulas for fluxes and transformations under supertranslations of those momenta were calculated and compared. Finally, given the computed transformation rules, new definitions of angular momentum were proposed, and their core properties were presented for comparison with other definitions.

Ultimately, it is worth noting that the proposed definitions of angular momentum, like $\cwy$, are described in terms of the potential $c$. However, one may modify these definitions in such a way to emphasise its geometric meaning. Namely, $c_{||A}$ may be replaced by ${v_A = c_{||A} + \epsilon_{AB}b^{||B}}$, a first order potential for $\chi$, meaning that ${\chi_{AB} = v_{A||B} + v_{B||A} - h_{AB}\tensor{v}{^C_{||C}}}$ \cite[Appendix E]{jezierski2002peeling}. Since $b$ is invariant under supertranslations, the properties stated in Theorem~\ref{thm:j} and Theorem~\ref{thm:j-tilde} would not change due to such a modification. On the other hand, one would obtain a fully geometrical definition, which may be looked at in a more general context.
        
\bibliographystyle{plain}
\bibliography{bibliography}
\appendix

\raggedbottom
\clearpage

\section{Angular momenta in terms of potentials}\label{app:potentials}

Since the $\tensor{N}{_A}$ terms are left unchanged in Lemma \ref{lem:momenta-potentials}, it may be ommited from calculations. We start by substituting \eqref{eq:chi} into \eqref{eq:cjk}. For clarity, we simplify the integrand first.

\begin{equation*}
\begin{split}
    - \frac14\tensor{\chi}{_A_B}\tensor{\chi}{^B^C_{||C}} &= - \frac14 \left( 2\tensor{c}{_{||AB}} - \tensor{h}{_A_B}D^2c + \tensor{\epsilon}{_A_F}\tensor{b}{^{||F}_{||B}} + \tensor{\epsilon}{_B_F}\tensor{b}{^{||F}_{||A}} \right) \\
    &\phantom{=}\ \times \left( 2\tensor{c}{^{||BC}} - \tensor{h}{^B^C}D^2c + \tensor{\epsilon}{^B^E}\tensor{b}{_{||E}^{||C}} + \tensor{\epsilon}{^C^E}\tensor{b}{_{||E}^{||B}} \right)_{||C} \\
    &= - \tensor{c}{_{||AB}}\tensor{c}{^{||BC}_{||C}} + \frac12\tensor{c}{_{||AB}}\tensor{(D^2c)}{^{||B}} - \frac12\tensor{\epsilon}{^B^E}\tensor{c}{_{||AB}}\tensor{b}{_{||E}^{||C}_{||C}} - \frac12\tensor{c}{_{||AB}}\tensor{\epsilon}{^C^E}\tensor{b}{_{||E}^{||B}_{||C}} \\
    &\phantom{=}\ + \frac12\tensor{c}{_{||A}^{||C}_{||C}}D^2c - \frac14\tensor{(D^2c)}{_{||A}}D^2c + \frac14\tensor{\epsilon}{_A_E}\tensor{b}{^{||EC}_{||C}}D^2c + \frac14\tensor{\epsilon}{^C^E}\tensor{b}{_{||EAC}}D^2c \\
    &\phantom{=}\ - \frac12\tensor{\epsilon}{_A_F}\tensor{b}{^{||F}_{||B}}\tensor{c}{^{||BC}_{||C}} + \frac14\tensor{\epsilon}{_A_F}\tensor{b}{^{||F}_{||B}}\tensor{(D^2c)}{^{||B}} - \frac14\tensor{\epsilon}{_A_F}\tensor{\epsilon}{^B^E}\tensor{b}{^{||F}_{||B}}\tensor{b}{_{||E}^{||C}_{||C}} \\
    &\phantom{=}\ - \frac14\tensor{\epsilon}{_A_F}\tensor{\epsilon}{^C^E}\tensor{b}{^{||F}_{||B}}\tensor{b}{_{||E}^{||B}_{||C}} - \frac12\tensor{\epsilon}{_B_F}\tensor{b}{^{||F}_{||A}}\tensor{c}{^{||BC}_{||C}} + \frac14\tensor{\epsilon}{_B_F}\tensor{b}{^{||F}_{||A}}\tensor{(D^2c)}{^{||B}} \\
    &\phantom{=}\ - \frac14\tensor{b}{^{||E}_{||A}}\tensor{b}{_{||E}^{||C}_{||C}} - \frac14\tensor{\epsilon}{_B_F}\tensor{\epsilon}{^C^E}\tensor{b}{^{||F}_{||A}}\tensor{b}{_{||E}^{||B}_{||C}} \\
    &= - \tensor{c}{_{||AB}}\tensor{(D^2c)}{^{||B}} - \tensor{c}{_{||AB}}\tensor{c}{^{||B}} + \frac12\tensor{c}{_{||AB}}\tensor{(D^2c)}{^{||B}} - \frac12\tensor{\epsilon}{^B^E}\tensor{c}{_{||AB}}\tensor{(D^2b)}{_{||E}} \\
    &\phantom{=}\ - \frac12\tensor{\epsilon}{^B^E}\tensor{c}{_{||AB}}\tensor{b}{_{||E}} - \frac12\tensor{\epsilon}{^C^E}\tensor{c}{_{||AB}}\tensor{b}{_{||E}^{||B}_{||C}} + \frac12\tensor{(D^2c)}{_{||A}}D^2c + \frac12\tensor{c}{_{||A}}D^2c \\
    &\phantom{=}\ - \frac14\tensor{(D^2c)}{_{||A}}D^2c + \frac14\tensor{\epsilon}{_A_E}\tensor{(D^2b)}{^{||E}}D^2c + \frac14\tensor{\epsilon}{_A_E}\tensor{b}{^{||E}}D^2c + \frac14\tensor{\epsilon}{^C^E}\tensor{b}{_{||EAC}}D^2c \\
    &\phantom{=}\ - \frac12\tensor{\epsilon}{_A_F}\tensor{b}{^{||F}_{||B}}\tensor{(D^2c)}{^{||B}} - \frac12\tensor{\epsilon}{_A_F}\tensor{b}{^{||F}_{||B}}\tensor{c}{^{||B}} + \frac14\tensor{\epsilon}{_A_F}\tensor{b}{^{||F}_{||B}}\tensor{(D^2c)}{^{||B}} \\
    &\phantom{=}\ - \frac14\left(\tensor{\delta}{_A^B}\tensor{\delta}{_F^E} - \tensor{\delta}{_A^E}\tensor{\delta}{_F^B}\right)\tensor{b}{^{||F}_{||B}}\tensor{(D^2b)}{_{||E}} \\
    &\phantom{=}\ - \frac14\left(\tensor{\delta}{_A^B}\tensor{\delta}{_F^E} - \tensor{\delta}{_A^E}\tensor{\delta}{_F^B}\right)\tensor{b}{^{||F}_{||B}}\tensor{b}{_{||E}} \\
    &\phantom{=}\ - \frac14\left(\tensor{\delta}{_A^C}\tensor{\delta}{_F^E} - \tensor{\delta}{_A^E}\tensor{\delta}{_F^C}\right)\tensor{b}{^{||F}_{||B}}\tensor{b}{_{||E}^{||B}_{||C}} - \frac12\tensor{\epsilon}{_B_F}\tensor{b}{^{||F}_{||A}}\tensor{(D^2c)}{^{||B}} \\
    &\phantom{=}\ - \frac12\tensor{\epsilon}{_B_F}\tensor{b}{^{||F}_{||A}}\tensor{c}{^{||B}} + \frac14\tensor{\epsilon}{_B_F}\tensor{b}{^{||F}_{||A}}\tensor{(D^2c)}{^{||B}} - \frac14\tensor{b}{^{||E}_{||A}}\tensor{(D^2b)}{_{||E}} - \frac14\tensor{b}{^{||E}_{||A}}\tensor{b}{_{||E}} \\
    &\phantom{=}\ - \frac14\left(\tensor{\delta}{_B^C}\tensor{\delta}{_F^E} - \tensor{\delta}{_B^E}\tensor{\delta}{_F^C}\right)\tensor{b}{^{||F}_{||A}}\tensor{b}{_{||E}^{||B}_{||C}}
\end{split}
\end{equation*}
\begin{equation*}
\begin{split}
    &= - \frac12\tensor{c}{_{||AB}}\tensor{(D^2c)}{^{||B}} - \tensor{c}{_{||AB}}\tensor{c}{^{||B}} - \frac12\tensor{\epsilon}{^B^E}\tensor{c}{_{||AB}}\tensor{(D^2b)}{_{||E}} - \frac12\tensor{\epsilon}{^B^E}\tensor{c}{_{||AB}}\tensor{b}{_{||E}} - \frac12\tensor{\epsilon}{^C^E}\tensor{c}{_{||AB}}\tensor{b}{_{||E}^{||B}_{||C}} \\
    &\phantom{=}\ + \frac14\tensor{(D^2c)}{_{||A}}D^2c + \frac12\tensor{c}{_{||A}}D^2c + \frac14\tensor{\epsilon}{_A_E}\tensor{(D^2b)}{^{||E}}D^2c + \frac14\tensor{\epsilon}{_A_E}\tensor{b}{^{||E}}D^2c + \frac14\tensor{\epsilon}{^C^E}\tensor{b}{_{||EAC}}D^2c \\
    &\phantom{=}\ - \frac14\tensor{\epsilon}{_A_F}\tensor{b}{^{||F}_{||B}}\tensor{(D^2c)}{^{||B}} - \frac12\tensor{\epsilon}{_A_F}\tensor{b}{^{||F}_{||B}}\tensor{c}{^{||B}} - \frac14\tensor{b}{_{||EA}}\tensor{(D^2b)}{^{||E}} + \frac14\tensor{(D^2b)}{_{||A}}D^2b - \frac14\tensor{b}{_{||EA}}\tensor{b}{^{||E}} \\
    &\phantom{=}\ + \frac14\tensor{b}{_{||A}}D^2b - \frac14\tensor{b}{^{||E}_{||B}}\tensor{b}{_{||E}^{||B}_{||A}} + \frac14\tensor{b}{^{||C}_{||B}}\tensor{b}{_{||A}^{||B}_{||C}} - \frac14\tensor{\epsilon}{_B_F}\tensor{b}{^{||F}_{||A}}\tensor{(D^2b)}{^{||B}} - \frac12\tensor{\epsilon}{_B_F}\tensor{b}{^{||F}_{||A}}\tensor{c}{^{||B}} \\
    &\phantom{=}\ - \frac14\tensor{b}{^{||E}_{||A}}\tensor{(D^2b)}{_{||E}} - \frac14\tensor{b}{^{||E}_{||A}}\tensor{b}{_{||E}} - \frac14\tensor{b}{^{||E}_{||A}}\tensor{(D^2b)}{_{||E}} - \frac14\tensor{b}{^{||E}_{||A}}\tensor{b}{_{||E}} + \frac14\tensor{b}{^{||C}_{||A}}\tensor{(D^2b)}{_{||C}} \\
    &= - \frac12\tensor{c}{_{||AB}}\tensor{(D^2c)}{^{||B}} - \tensor{c}{_{||AB}}\tensor{c}{^{||B}} - \frac12\tensor{\epsilon}{^B^E}\tensor{c}{_{||AB}}\tensor{(D^2b)}{_{||E}} - \frac12\tensor{\epsilon}{^B^E}\tensor{c}{_{||AB}}\tensor{b}{_{||E}} \\
    &\phantom{=}\ + \frac12\tensor{\epsilon}{^C^E}\tensor{c}{_{||A}^{||B}}\tensor{R}{^F_E_C_B}\tensor{b}{_{||F}} + \frac14\tensor{(D^2c)}{_{||A}}D^2c + \frac12\tensor{c}{_{||A}}D^2c + \frac14\tensor{\epsilon}{_A_E}\tensor{(D^2b)}{^{||E}}D^2c + \frac14\tensor{\epsilon}{_A_E}\tensor{b}{^{||E}}D^2c \\
    &\phantom{=}\ - \frac14\tensor{\epsilon}{^C^E}\tensor{R}{^F_E_C_A}\tensor{b}{_{||F}}D^2c - \frac14\tensor{\epsilon}{_A_F}\tensor{b}{^{||F}_{||B}}\tensor{(D^2b)}{^{||B}} - \frac12\tensor{\epsilon}{_A_F}\tensor{b}{^{||F}_{||B}}\tensor{c}{^{||B}} - \frac14\tensor{\epsilon}{_B_F}\tensor{b}{^{||F}_{||A}}\tensor{(D^2c)}{^{||B}} \\
    &\phantom{=}\ - \frac12\tensor{\epsilon}{_B_F}\tensor{b}{^{||F}_{||A}}\tensor{c}{^{||B}} - \frac12\tensor{b}{_{||AB}}\tensor{(D^2b)}{^{||B}} + \frac14\tensor{(D^2b)}{_{||A}}D^2b  - \frac34\tensor{b}{_{||AB}}\tensor{b}{^{||B}} + \frac14\tensor{b}{_{||A}}D^2b \\
    &\phantom{=}\ - \frac14\tensor{b}{^{||C}_{||B}}\tensor{b}{_{||C}^{||B}_{||A}} + \frac14\tensor{b}{^{||C}_{||B}}\tensor{b}{_{||A}^{||B}_{||C}} \\
    &= - \frac12\tensor{c}{_{||AB}}\tensor{(D^2c)}{^{||B}} - \tensor{c}{_{||AB}}\tensor{c}{^{||B}} + \frac14\tensor{(D^2c)}{_{||A}}D^2c + \frac12\tensor{c}{_{||A}}D^2c - \frac12\tensor{\epsilon}{^B^C}\tensor{c}{_{||AB}}\tensor{(D^2b)}{_{||C}} \\
    &\phantom{=}\ - \frac12\tensor{\epsilon}{^B^C}\tensor{c}{_{||AB}}\tensor{b}{_{||C}} + \frac12\tensor{\epsilon}{^C^E}\tensor{c}{_{||A}^{||B}}\tensor{b}{_{||F}}\left(\tensor{\delta}{^F_C}\tensor{h}{_E_B} - \tensor{\delta}{^F_B}\tensor{h}{_E_C}\right) + \frac14\tensor{\epsilon}{_A_B}\tensor{(D^2b)}{^{||B}}D^2c \\
    &\phantom{=}\ + \frac14\tensor{\epsilon}{_A_B}\tensor{b}{^{||B}}D^2c - \frac14\tensor{\epsilon}{^C^E}\tensor{b}{_{||F}}D^2c\left(\tensor{\delta}{^F_C}\tensor{h}{_E_A} - \tensor{\delta}{^F_A}\tensor{h}{_E_C}\right) - \frac14\tensor{\epsilon}{_A_C}\tensor{b}{^{||C}_{||B}}\tensor{(D^2c)}{^{||B}} \\
    &\phantom{=}\ - \frac12\tensor{\epsilon}{_A_C}\tensor{b}{^{||C}_{||B}}\tensor{c}{^{||B}} - \frac14\tensor{\epsilon}{_B_C}\tensor{b}{^{||C}_{||A}}\tensor{(D^2c)}{^{||B}} - \frac12\tensor{\epsilon}{_B_C}\tensor{b}{^{||C}_{||A}}\tensor{c}{^{||B}} - \frac12\tensor{b}{_{||AB}}\tensor{(D^2b)}{^{||B}} - \frac34\tensor{b}{_{||AB}}\tensor{b}{^{||B}} \\
    &\phantom{=}\ + \frac14\tensor{(D^2b)}{_{||A}}D^2b + \frac14\tensor{b}{_{||A}}D^2b - \frac14\tensor{b}{^{||BC}}\left(\tensor{b}{_{||ABC}} - \tensor{R}{^E_B_A_C}\tensor{b}{_{||E}}\right) + \frac14\tensor{b}{^{||BC}}\tensor{b}{_{||ABC}} \\
    &= - \frac12\tensor{c}{_{||AB}}\tensor{(D^2c)}{^{||B}} - \tensor{c}{_{||AB}}\tensor{c}{^{||B}} + \frac14\tensor{(D^2c)}{_{||A}}D^2c + \frac12\tensor{c}{_{||A}}D^2c - \frac12\tensor{\epsilon}{^B^C}\tensor{c}{_{||AB}}\tensor{(D^2b)}{_{||C}} \\
    &\phantom{=}\ - \frac12\tensor{\epsilon}{^B^C}\tensor{c}{_{||AB}}\tensor{b}{_{||C}} + \frac12\tensor{\epsilon}{^C^B}\tensor{c}{_{||AB}}\tensor{b}{_{||C}} + \frac14\tensor{\epsilon}{_A_B}\tensor{(D^2b)}{^{||B}}D^2c + \frac14\tensor{\epsilon}{_A_B}\tensor{b}{^{||B}}D^2c \\
    &\phantom{=}\ - \frac14\tensor{\epsilon}{_B_A}\tensor{b}{^{||B}}D^2c - \frac14\tensor{\epsilon}{_A_C}\tensor{b}{^{||C}_{||B}}\tensor{(D^2c)}{^{||B}} - \frac12\tensor{\epsilon}{_A_C}\tensor{b}{^{||C}_{||B}}\tensor{c}{^{||B}} - \frac14\tensor{\epsilon}{_B_C}\tensor{b}{^{||C}_{||A}}\tensor{(D^2c)}{^{||B}} \\
    &\phantom{=}\ - \frac12\tensor{\epsilon}{_B_C}\tensor{b}{^{||C}_{||A}}\tensor{c}{^{||B}} - \frac12\tensor{b}{_{||AB}}\tensor{(D^2b)}{^{||B}} - \frac34\tensor{b}{_{||AB}}\tensor{b}{^{||B}} + \frac14\tensor{(D^2b)}{_{||A}}D^2b + \frac14\tensor{b}{_{||A}}D^2b \\
    &\phantom{=}\ + \frac14\tensor{b}{_{||A}}D^2b - \frac14\tensor{b}{^{||B}}\tensor{b}{_{||AB}} \\
    &= - \frac12\tensor{c}{_{||AB}}\tensor{(D^2c)}{^{||B}} - \tensor{c}{_{||AB}}\tensor{c}{^{||B}} + \frac14\tensor{(D^2c)}{_{||A}}D^2c + \frac12\tensor{c}{_{||A}}D^2c - \frac12\tensor{\epsilon}{^B^C}\tensor{c}{_{||AB}}\tensor{(D^2b)}{_{||C}} \\
    &\phantom{=}\ - \tensor{\epsilon}{^B^C}\tensor{c}{_{||AB}}\tensor{b}{_{||C}} + \frac14\tensor{\epsilon}{_A_B}\tensor{(D^2b)}{^{||B}}D^2c + \frac12\tensor{\epsilon}{_A_B}\tensor{b}{^{||B}}D^2c - \frac14\tensor{\epsilon}{_A_C}\tensor{b}{^{||BC}}\tensor{(D^2c)}{_{||B}} \\
    &\phantom{=}\ - \frac12\tensor{\epsilon}{_A_C}\tensor{b}{^{||BC}}\tensor{c}{_{||B}} - \frac14\tensor{\epsilon}{^B^C}\tensor{b}{_{||AC}}\tensor{(D^2c)}{_{||B}} - \frac12\tensor{\epsilon}{^B^C}\tensor{b}{_{||AC}}\tensor{c}{_{||B}} - \frac12\tensor{b}{_{||AB}}\tensor{(D^2b)}{^{||B}} \\
    &\phantom{=}\ - \tensor{b}{_{||AB}}\tensor{b}{^{||B}} + \frac14\tensor{(D^2b)}{_{||A}}D^2b + \frac12\tensor{b}{_{||A}}D^2b
\end{split}
\end{equation*}

\newpage
Now, we integrate the above expression with $\tensor{\epsilon}{^A^D}\tensor{n}{_{||D}}$, using integration by parts where needed.

\begin{equation*}
\begin{split}
    &- \frac14\int_{S^2} \tensor{\chi}{_A_B}\tensor{\chi}{^B^C_{||C}}\tensor{\epsilon}{^A^D}\tensor{n}{_{||D}} \\
    &= \int_{S^2} \left( - \frac12\tensor{\epsilon}{^A^D}\tensor{n}{_{||D}}\tensor{c}{_{||AB}}\tensor{(D^2c)}{^{||B}} - \tensor{\epsilon}{^A^D}\tensor{n}{_{||D}}\tensor{c}{_{||AB}}\tensor{c}{^{||B}} + \frac14\tensor{\epsilon}{^A^D}\tensor{n}{_{||D}}\tensor{(D^2c)}{_{||A}}D^2c \right. \\
    &\phantom{=}\ \left. + \frac12\tensor{\epsilon}{^A^D}\tensor{n}{_{||D}}\tensor{c}{_{||A}}D^2c - \frac12\tensor{\epsilon}{^A^D}\tensor{n}{_{||D}}\tensor{\epsilon}{^B^C}\tensor{c}{_{||AB}}\tensor{(D^2b)}{_{||C}} - \tensor{\epsilon}{^A^D}\tensor{n}{_{||D}}\tensor{\epsilon}{^B^C}\tensor{c}{_{||AB}}\tensor{b}{_{||C}} \right. \\
    &\phantom{=}\ \left. + \frac14\tensor{\epsilon}{^A^D}\tensor{n}{_{||D}}\tensor{\epsilon}{_A_B}\tensor{(D^2b)}{^{||B}}D^2c + \frac12\tensor{\epsilon}{^A^D}\tensor{n}{_{||D}}\tensor{\epsilon}{_A_B}\tensor{b}{^{||B}}D^2c - \frac14\tensor{\epsilon}{^A^D}\tensor{n}{_{||D}}\tensor{\epsilon}{_A_C}\tensor{b}{^{||BC}}\tensor{(D^2c)}{_{||B}} \right. \\
    &\phantom{=}\ \left. - \frac12\tensor{\epsilon}{^A^D}\tensor{n}{_{||D}}\tensor{\epsilon}{_A_C}\tensor{b}{^{||BC}}\tensor{c}{_{||B}} - \frac14\tensor{\epsilon}{^A^D}\tensor{n}{_{||D}}\tensor{\epsilon}{^B^C}\tensor{b}{_{||AC}}\tensor{(D^2c)}{_{||B}} - \frac12\tensor{\epsilon}{^A^D}\tensor{n}{_{||D}}\tensor{\epsilon}{^B^C}\tensor{b}{_{||AC}}\tensor{c}{_{||B}} \right. \\
    &\phantom{=}\ \left. - \frac12\tensor{\epsilon}{^A^D}\tensor{n}{_{||D}}\tensor{b}{_{||AB}}\tensor{(D^2b)}{^{||B}} - \tensor{\epsilon}{^A^D}\tensor{n}{_{||D}}\tensor{b}{_{||AB}}\tensor{b}{^{||B}} + \frac14\tensor{\epsilon}{^A^D}\tensor{n}{_{||D}}\tensor{(D^2b)}{_{||A}}D^2b \right. \\
    &\phantom{=}\ \left. + \frac12\tensor{\epsilon}{^A^D}\tensor{n}{_{||D}}\tensor{b}{_{||A}}D^2b \right) \\
    &= \int_{S^2} \left( \frac12\tensor{\epsilon}{^A^D}\tensor{n}{_{||D}}\tensor{(D^2c)}{_{||A}}D^2c + \frac12\tensor{\epsilon}{^A^D}\tensor{n}{_{||D}}\tensor{c}{_{||A}}D^2c - \frac12\tensor{\epsilon}{^A^D}\tensor{n}{_{||D}}(\tensor{c}{_{||B}}\tensor{c}{^{||B}})_{||A} \right. \\
    &\phantom{=}\ \left. + \frac18\tensor{\epsilon}{^A^D}\tensor{n}{_{||D}}\tensor{((D^2c)^2)}{_{||A}} + \frac12\tensor{\epsilon}{^A^D}\tensor{n}{_{||D}}\tensor{c}{_{||A}}D^2c - \frac12\left(\tensor{\delta}{_B^A}\tensor{\delta}{_C^D} - \tensor{\delta}{_B^D}\tensor{\delta}{_C^A}\right)\tensor{n}{_{||D}}\tensor{c}{_{||A}^{||B}}\tensor{(D^2b)}{^{||C}} \right. \\
    &\phantom{=}\ \left. - \left(\tensor{\delta}{_B^A}\tensor{\delta}{_C^D} - \tensor{\delta}{_B^D}\tensor{\delta}{_C^A}\right)\tensor{n}{_{||D}}\tensor{c}{_{||A}^{||B}}\tensor{b}{^{||C}} + \frac14\tensor{n}{_{||B}}\tensor{(D^2b)}{^{||B}}D^2c + \frac12\tensor{n}{_{||B}}\tensor{b}{^{||B}}D^2c \right. \\
    &\phantom{=}\ \left. - \frac14\tensor{n}{_{||C}}\tensor{b}{^{||BC}}\tensor{(D^2c)}{_{||B}} - \frac12\tensor{n}{_{||C}}\tensor{b}{^{||BC}}\tensor{c}{_{||B}} - \frac14\left(\tensor{\delta}{_B^A}\tensor{\delta}{_C^D} - \tensor{\delta}{_B^D}\tensor{\delta}{_C^A}\right)\tensor{n}{_{||D}}\tensor{b}{_{||A}^{||C}}\tensor{(D^2c)}{^{||B}} \right. \\
    &\phantom{=}\ \left. - \frac12\left(\tensor{\delta}{_B^A}\tensor{\delta}{_C^D} - \tensor{\delta}{_B^D}\tensor{\delta}{_C^A}\right)\tensor{n}{_{||D}}\tensor{b}{_{||A}^{||C}}\tensor{c}{^{||B}} + \frac12\tensor{\epsilon}{^A^D}\tensor{n}{_{||D}}\tensor{(D^2b)}{_{||A}}D^2b + \frac12\tensor{\epsilon}{^A^D}\tensor{n}{_{||D}}\tensor{b}{_{||A}}D^2b \right. \\
    &\phantom{=}\ \left. - \frac12\tensor{\epsilon}{^A^D}\tensor{n}{_{||D}}(\tensor{b}{_{||B}}\tensor{b}{^{||B}})_{||A} + \frac18\tensor{\epsilon}{^A^D}\tensor{n}{_{||D}}\tensor{((D^2b)^2)}{_{||A}} + \frac12\tensor{\epsilon}{^A^D}\tensor{n}{_{||D}}\tensor{b}{_{||A}}D^2b \right) \\
    &= \int_{S^2} \left( \tensor{\epsilon}{^A^D}\tensor{n}{_{||D}}\tensor{c}{_{||A}}D^2c - \frac12\tensor{n}{_{||C}}\tensor{(D^2b)}{^{||C}}D^2c + \frac12\tensor{n}{^{||B}}\tensor{c}{_{||AB}}\tensor{(D^2b)}{^{||A}} - \tensor{n}{_{||C}}\tensor{b}{^{||C}}D^2c \right. \\
    &\phantom{=}\ \left. + \tensor{n}{_{||B}}\tensor{c}{_{||A}^{||B}}\tensor{b}{^{||A}} + \frac14\tensor{n}{_{||A}}\tensor{(D^2b)}{^{||A}}D^2c + \frac12\tensor{n}{_{||A}}\tensor{b}{^{||A}}D^2c - \frac14\tensor{n}{_{||B}}\tensor{b}{^{||AB}}\tensor{(D^2c)}{_{||A}} \right. \\
    &\phantom{=}\ \left. - \frac12\tensor{n}{_{||B}}\tensor{b}{^{||AB}}\tensor{c}{_{||A}} - \frac14\tensor{n}{_{||C}}\tensor{b}{_{||A}^{||C}}\tensor{(D^2c)}{^{||A}} + \frac14\tensor{n}{_{||B}}\tensor{(D^2c)}{^{||B}}D^2b - \frac12\tensor{n}{_{||C}}\tensor{b}{_{||A}^{||C}}\tensor{c}{^{||A}} \right. \\
    &\phantom{=}\ \left. + \frac12\tensor{n}{_{||B}}\tensor{c}{^{||B}}D^2b + \tensor{\epsilon}{^A^D}\tensor{n}{_{||D}}\tensor{b}{_{||A}}D^2b \right) \\
    &= \int_{S^2} \left( - \frac14\tensor{n}{_{||A}}\tensor{(D^2b)}{^{||A}}D^2c + \frac12\tensor{n}{_{||B}}\tensor{c}{^{||AB}}\tensor{(D^2b)}{_{||A}} - \frac12\tensor{n}{_{||A}}\tensor{b}{^{||A}}D^2c + \tensor{n}{_{||B}}\tensor{c}{^{||AB}}\tensor{b}{_{||A}} \right. \\
    &\phantom{=}\ \left. - \frac12\tensor{n}{_{||B}}\tensor{b}{^{||AB}}\tensor{(D^2c)}{_{||A}} - \tensor{n}{_{||B}}\tensor{b}{^{||AB}}\tensor{c}{_{||A}} + \frac14\tensor{n}{_{||A}}\tensor{(D^2c)}{^{||A}}D^2b + \frac12\tensor{n}{_{||A}}\tensor{c}{^{||A}}D^2b \right)
\end{split}
\end{equation*}
\begin{equation*}
\begin{split}
    &= \int_{S^2} \left( - \frac12nD^2bD^2c + \frac14n_{||A}D^2b(D^2c)^{||A} + \frac12nD^2bD^2c - \frac12n_{||B}D^2b(D^2c)^{||B} - \frac12n_{||B}c^{||B}D^2b \right. \\
    &\phantom{=}\ \left. - \frac12nb_{||A}c^{||A} + \frac12n_{||A}b^{||AB}c_{||B} + nb^{||A}c_{||A} - n_{||B}c^{||B}D^2b - \frac12nD^2bD^2c + \frac12n_{||B}(D^2b)^{||B}D^2c \right. \\
    &\phantom{=}\ \left. + \frac12n_{||B}b^{||B}D^2c - ncD^2b + n_{||B}c(D^2b)^{||B} + n_{||B}cb^{||B} + \frac14n_{||A}D^2b(D^2c)^{||A} + \frac12n_{||A}c^{||A}D^2b \right) \\
    &= \int_{S^2} \left( - \frac12nD^2bD^2c - n_{||A}c^{||A}D^2b + \frac12n_{||A}b^{||A}c + \frac12ncD^2b + \frac12nb^{||A}c_{||A} - \frac12n_{||A}b^{||A}D^2c \right. \\
    &\phantom{=}\ \left. + nb^{||A}c_{||A} + nD^2bD^2c - \frac12n_{||A}D^2b(D^2c)^{||A} + \frac12n_{||A}b^{||A}D^2c - ncD^2b + 2ncD^2b \right. \\
    &\phantom{=}\ \left. - n_{||A}c^{||A}D^2b + n_{||A}b^{||A}c \right) \\
    &= \int_{S^2} \left( - 2n_{||A}c^{||A}D^2b + \frac12nD^2bD^2c - \frac12n_{||A}D^2b(D^2c)^{||A} + \frac32n_{||A}b^{||A}c + \frac32ncD^2b \right. \\
    &\phantom{=}\ \left. \frac32nb^{||A}c_{||A} \right) \\
    &= \int_{S^2} \left( - 2n_{||A}c^{||A}D^2b + \frac12nD^2bD^2c - \frac12n_{||A}D^2b(D^2c)^{||A} + \frac32n_{||A}b^{||A}c + \frac32ncD^2b \right. \\
    &\phantom{=}\ \left. - \frac32n_{||A}b^{||A}c - \frac32ncD^2b \right) \\
    &= \int_{S^2} \left( - 2n_{||A}c^{||A}D^2b + \frac12nD^2bD^2c - \frac12n_{||A}D^2b(D^2c)^{||A} \right)
\end{split}
\end{equation*}

\vspace{1em}
Hence, we get \eqref{eq:cjk-pot}. \eqref{eq:cwy-pot} is obtained by only adding the correction term to \eqref{eq:cjk-pot}.  Finally, we obtain \eqref{eq:cn-pot} by noticing that
\begin{equation*}
\begin{split}
    \cn &= \frac{1}{8\pi}\int_{S^2} \left( - 3\tensor{N}{_A}\tensor{\epsilon}{^A^D}\tensor{n}{_{||D}} - \frac14\tensor{\chi}{_A_B}\tensor{\chi}{^B^C_{||C}}\tensor{\epsilon}{^A^D}\tensor{n}{_{||D}} + \frac{1 - \alpha}{4}\tensor{\chi}{_A_B}\tensor{\chi}{^B^C_{||C}}\tensor{\epsilon}{^A^D}\tensor{n}{_{||D}} \right) \\
    &= \frac{1}{8\pi} \int_{S^2} \left( - 3\tensor{N}{_A}\tensor{\epsilon}{^A^D}\tensor{n}{_{||D}} - \frac{\alpha}{4}\tensor{\chi}{_A_B}\tensor{\chi}{^B^C_{||C}}\tensor{\epsilon}{^A^D}\tensor{n}{_{||D}} \right).
\end{split}
\end{equation*}

\section{Fluxes}

\subsection{In terms of $\chi$}\label{app:flux-chi}

In order to obtain formulas for the fluxes, we differentiate the definitions by $u$, and substitute \eqref{eq:n-flux} into the results expressions. Firstly, for $\dcjk$:
\begin{equation*}
\begin{split}
    \dcjk &= \frac{1}{8\pi} \int_{S^2} \left( -3 \dot{N}_A - \frac14 \dot{\chi}_{AB}\tensor{\chi}{^{BC}_{||C}} - \frac14 \chi_{AB}\tensor{\dot{\chi}}{^{BC}_{||C}} \right) \epsilon^{AD} n_{||D} \\
    &= \frac{1}{8\pi} \int_{S^2} \left( m_{||A}\epsilon^{AD}n_{||D} - \frac14 n_{||A}\Tilde{\lambda}^{||A} + \frac34 \epsilon^{AD}n_{||D}\chi_{AB}\tensor{\dot{\chi}}{^B^C_{||C}} + \frac14 \epsilon^{AD}n_{||D}\dot{\chi}^{BC}\chi_{AB||C} \right. \\
    &\phantom{=}\, \left. - \frac14 \epsilon^{AD} n_{||D}\dot{\chi}_{AB}\tensor{\chi}{^{BC}_{||C}} - \frac14 \epsilon^{AD}n_{||D}\chi_{AB}\tensor{\dot{\chi}}{^{BC}_{||C}} \right)
\end{split}
\end{equation*}
\begin{equation*}
\begin{split}
    &= \frac{1}{8\pi} \int_{S^2} \left( m_{||A}\epsilon^{AD}n_{||D} - \frac14 n_{||A}\Tilde{\lambda}^{||A} + \frac12 \epsilon^{AD}n_{||D}\chi_{AB}\tensor{\dot{\chi}}{^B^C_{||C}} + \frac14 \epsilon^{AD}n_{||D}\dot{\chi}^{BC}\chi_{AB||C} \right. \\
    &\phantom{=}\, \left. - \frac14 \epsilon^{AD}n_{||D}\dot{\chi}_{AB}\tensor{\chi}{^{BC}_{||C}} \right) \\
    &= \frac{1}{8\pi} \int_{S^2} \left( \left( m\epsilon^{AD}n_{||D} \right)_{||A} - m\epsilon^{AD}n_{||DA} - \frac14 n_{||A}\epsilon^{EC}\tensor{\chi}{^B_E_{||BC}^{||A}} \right. \\
    &\phantom{=}\, \left. + \left( \frac12 \epsilon^{AD}n_{||D}\chi_{AB}\dot{\chi}^{BC} \right)_{||C} - \frac12 \epsilon^{AD}n_{||DC}\chi_{AB}\dot{\chi}^{BC} - \frac12 \epsilon^{AD}n_{||D}\chi_{AB||C}\dot{\chi}^{BC} \right. \\
    &\phantom{=}\, \left. + \frac14 \epsilon^{AD}n_{||D}\dot{\chi}^{BC}\chi_{AB||C} - \frac14 \epsilon^{AD}n_{||D}\dot{\chi}_{AB}\tensor{\chi}{^{BC}_{||C}} \right) \\
    &= \frac{1}{8\pi} \int_{S^2} \left( \left( - \frac14 n_{||A}\epsilon^{EC}\tensor{\chi}{^B_E_{||BC}} \right)^{||A} + \frac14 \epsilon^{EC}\tensor{\chi}{^B_E_{||BC}}D^2n_i + \frac12 \epsilon^{AD}nh_{CD}\chi_{AB}\dot{\chi}^{BC} \right. \\
    &\phantom{=}\, \left. - \frac14 \epsilon^{AD}n_{||D}\dot{\chi}^{BC}\chi_{AB||C} - \frac14 \epsilon^{AD}n_{||D}\dot{\chi}_{AB}\tensor{\chi}{^{BC}_{||C}} \right) \\
    &= \frac{1}{8\pi} \int_{S^2} \left( - \frac12 n_i\epsilon^{EC}\tensor{\chi}{^B_E_{||BC}} + \frac12 \epsilon^{AD}nh_{CD}\chi_{AB}\dot{\chi}^{BC} - \frac14 \epsilon^{AD}n_{||D}\dot{\chi}^{BC}\chi_{AB||C} \right. \\
    &\phantom{=}\, \left. - \frac14 \epsilon^{AD}n_{||D}h_{AE}h_{BF}\dot{\chi}^{EF}\tensor{\chi}{^{BC}_{||C}} \right) \\
    &= \frac{1}{8\pi} \int_{S^2} \left( - \frac12 n_{||CB}\epsilon^{EC}\tensor{\chi}{^B_E} + \frac12 \epsilon^{AD}nh_{CD}\chi_{AB}\dot{\chi}^{BC} - \frac14 \epsilon^{AD}n_{||D}\dot{\chi}^{BC}\chi_{AB||C} \right. \\
    &\phantom{=}\, \left. - \frac14 \epsilon^{AD}n_{||D}h_{AB}h_{EC}\dot{\chi}^{BC}\tensor{\chi}{^{EF}_{||F}} \right) \\
    &= \frac{1}{8\pi} \int_{S^2} \left( \frac12 nh_{BC}\epsilon^{EC}\tensor{\chi}{^B_E} + \frac12 \epsilon^{AD}nh_{CD}\chi_{AB}\dot{\chi}^{BC} - \frac14 \epsilon^{AD}n_{||D}\dot{\chi}^{BC}\chi_{AB||C} \right. \\
    &\phantom{=}\, \left. - \frac14 \epsilon^{AD}n_{||D}h_{AB}\dot{\chi}^{BC}\tensor{\chi}{_{CE}^{||E}} \right) \\
    &= \frac{1}{8\pi} \int_{S^2} \epsilon^{AD}\dot{\chi}^{BC} \left( \frac12 nh_{CD}\chi_{AB} - \frac14 n_{||D}\chi_{AB||C} - \frac14 n_{||D}h_{AB}\tensor{\chi}{_{CE}^{||E}} \right)
\end{split}
\end{equation*}

Similarly, we treat the additional term of $\cn$.

\begin{equation*}
\begin{split}
    &\phantom{=} \int_{S^2} \epsilon^{AD}n_{||D} \left( - \frac14 \dot{\chi}_{AB}\tensor{\chi}{^B^C_{||C}} - \frac14 \chi_{AB}\tensor{\dot{\chi}}{^B^C_{||C}} \right) \\
    &= \int_{S^2} \left( - \frac14 \epsilon^{AD}n_{||D}h_{AB}\dot{\chi}^{BC}\tensor{\chi}{_{CE}^{||E}} - \frac14 \epsilon^{AD}nh_{CD}\chi_{AB}\dot{\chi}^{BC} + \frac14 \epsilon^{AD}n_{||D}\chi_{AB||C}\dot{\chi}^{BC} \right) \\
    &= \int_{S^2} \epsilon^{AD}\dot{\chi}^{BC} \left( - \frac14 n_{||D}h_{AB}\tensor{\chi}{_{CE}^{||E}} - \frac14 nh_{CD}\chi_{AB} + \frac14 n_{||D}\chi_{AB||C} \right)
\end{split}
\end{equation*}

\newpage
\subsection{In terms of potentials}\label{app:flux-pot}

Since we assume $b = 0$ in Theorem \ref{thm:fluxes}, it is most convenient to differentiate \cref{eq:cjk-pot,eq:cn-pot,eq:cwy-pot}. As a result, it is sufficient to evaluate $\tensor{\dot{N}}{_A}$ in terms of $c$, and integrate it on $S^2$.

\begin{equation*}
\begin{split}
    &\int_{S^2} - 3\dot{N}_A\epsilon^{AD}n_{||D} \\
    &= \int_{S^2} \left( m_{||A}\epsilon^{AD}n_{||D} - \frac14 n_{||A}\Tilde{\lambda}^{||A} + \frac34 \epsilon^{AD}n_{||D}\chi_{AB}\tensor{\dot{\chi}}{^B^C_{||C}} + \frac14 \epsilon^{AD}n_{||D}\dot{\chi}^{BC}\chi_{AB||C} \right) \\
    &= \int_{S^2} \left( \frac34 \epsilon^{AD}n_{||D}(2c_{||AB} - h_{AB}D^2c)(2\dot{c}^{||BC} - h^{BC}D^2\dot{c})_{||C} \right. \\
    &\phantom{=}\ \left. + \frac14 \epsilon^{AD}n_{||D}(2\dot{c}^{||BC} - h^{BC}D^2\dot{c})(2c_{||AB} - h_{AB}D^2c)_{||C} \right) \\
    &= \int_{S^2} \left( 3\epsilon^{AD}n_{||D}c_{||AB}\tensor{\dot{c}}{^{||BC}_{||C}} - \frac32\epsilon^{AD}n_{||D}c_{||AB}(D^2\dot{c})^{||B} - \frac32\epsilon^{AD}n_{||D}\tensor{\dot{c}}{_{||A}^{||C}_{||C}}D^2c \right. \\
    &\phantom{=}\ \left. + \frac34\epsilon^{AD}n_{||D}(D^2\dot{c})_{||A}D^2c + \epsilon^{AD}n_{||D}\dot{c}^{||BC}c_{||ABC} - \frac12\epsilon^{AD}n_{||D}\tensor{\dot{c}}{_{||A}^{||C}}(D^2c)_{||C} \right. \\
    &\phantom{=}\ \left. - \frac12\epsilon^{AD}n_{||D}\tensor{c}{_{||A}^{||C}_{||C}}D^2\dot{c} + \frac14\epsilon^{AD}n_{||D}(D^2c)_{||A}D^2\dot{c} \right) \\
    &= \int_{S^2} \left( 3\epsilon^{AD}n_{||D}c_{||AB}(D^2\dot{c})^{||B} + 3\epsilon^{AD}n_{||D}c_{||AB}\dot{c}^{||B} - \frac32\epsilon^{AD}n_{||D}c_{||AB}(D^2\dot{c})^{||B} \right. \\
    &\phantom{=}\ \left. - \frac32\epsilon^{AD}n_{||D}(D^2\dot{c})_{||A}D^2c - \frac32\epsilon^{AD}n_{||D}\dot{c}_{||A}D^2c + \frac34\epsilon^{AD}n_{||D}(D^2\dot{c})_{||A}D^2c \right. \\
    &\phantom{=}\ \left. + \epsilon^{AD}n_{||D}\dot{c}^{||BC}c_{||ABC} - \frac12\epsilon^{AD}n_{||D}\tensor{\dot{c}}{_{||A}^{||C}}(D^2c)_{||C} - \frac12\epsilon^{AD}n_{||D}(D^2c)_{||A}D^2\dot{c} \right. \\
    &\phantom{=}\ \left. - \frac12\epsilon^{AD}n_{||D}c_{||A}D^2\dot{c} + \frac14\epsilon^{AD}n_{||D}(D^2c)_{||A}D^2\dot{c} \right) \\
    &= \int_{S^2} \left( \frac32\epsilon^{AD}n_{||D}c_{||AB}(D^2\dot{c})^{||B} + 3\epsilon^{AD}n_{||D}c_{||AB}\dot{c}^{||B} - \frac34\epsilon^{AD}n_{||D}(D^2\dot{c})_{||A}D^2c \right. \\
    &\phantom{=}\ \left. - \frac32\epsilon^{AD}n_{||D}\dot{c}_{||A}D^2c + \epsilon^{AD}n_{||D}\dot{c}^{||BC}c_{||ABC} - \frac12\epsilon^{AD}n_{||D}\tensor{\dot{c}}{_{||AB}}(D^2c)^{||B} \right. \\
    &\phantom{=}\ \left. - \frac14\epsilon^{AD}n_{||D}(D^2c)_{||A}D^2\dot{c} - \frac12\epsilon^{AD}n_{||D}c_{||A}D^2\dot{c} \right) \\
    &= \int_{S^2} \left( \frac32\epsilon^{AD}n_{||D}c_{||AB}(D^2\dot{c})^{||B} + 3\epsilon^{AD}n_{||D}c_{||AB}\dot{c}^{||B} - \frac34\epsilon^{AD}n_{||D}(D^2\dot{c})_{||A}D^2c \right. \\
    &\phantom{=}\ \left. - \frac32\epsilon^{AD}n_{||D}\dot{c}_{||A}D^2c + \left(\epsilon^{AD}n_{||D}\dot{c}^{||BC}c_{||AB}\right)_{||C} - \epsilon^{AD}n_{||DC}\dot{c}^{||BC}c_{||AB} \right. \\
    &\phantom{=}\ \left. - \epsilon^{AD}n_{||D}\tensor{\dot{c}}{^{||BC}_{||C}}c_{||AB} - \frac12\epsilon^{AD}n_{||D}\tensor{\dot{c}}{_{||AB}}(D^2c)^{||B} - \frac14\epsilon^{AD}n_{||D}(D^2c)_{||A}D^2\dot{c} \right. \\
    &\phantom{=}\ \left. - \frac12\epsilon^{AD}n_{||D}c_{||A}D^2\dot{c} \right)
\end{split}
\end{equation*}
\begin{equation*}
\begin{split}
    &= \int_{S^2} \left( \frac32\epsilon^{AD}n_{||D}c_{||AB}(D^2\dot{c})^{||B} + 3\epsilon^{AD}n_{||D}c_{||AB}\dot{c}^{||B} - \frac34\epsilon^{AD}n_{||D}(D^2\dot{c})_{||A}D^2c \right. \\
    &\phantom{=}\ \left. - \frac32\epsilon^{AD}n_{||D}\dot{c}_{||A}D^2c + \epsilon^{AD}nh_{CD}\dot{c}^{||BC}c_{||AB} - \epsilon^{AD}n_{||D}c_{||AB}(D^2\dot{c})^{||B} \right. \\
    &\phantom{=}\ \left. - \epsilon^{AD}n_{||D}c_{||AB}\dot{c}^{||B} - \frac12\epsilon^{AD}n_{||D}\tensor{\dot{c}}{_{||AB}}(D^2c)^{||B} - \frac14\epsilon^{AD}n_{||D}(D^2c)_{||A}D^2\dot{c} \right. \\
    &\phantom{=}\ \left. - \frac12\epsilon^{AD}n_{||D}c_{||A}D^2\dot{c} \right) \\
    &= \int_{S^2} \left( \frac12\epsilon^{AD}n_{||D}c_{||AB}(D^2\dot{c})^{||B} + 2\epsilon^{AD}n_{||D}c_{||AB}\dot{c}^{||B} - \frac34\epsilon^{AD}n_{||D}(D^2\dot{c})_{||A}D^2c \right. \\
    &\phantom{=}\ \left. - \frac32\epsilon^{AD}n_{||D}\dot{c}_{||A}D^2c + \epsilon^{AD}nc_{||AB}\tensor{\dot{c}}{^{||B}_{||D}} - \frac12\epsilon^{AD}n_{||D}\tensor{\dot{c}}{_{||AB}}(D^2c)^{||B} \right. \\
    &\phantom{=}\ \left. - \frac14\epsilon^{AD}n_{||D}(D^2c)_{||A}D^2\dot{c} - \frac12\epsilon^{AD}n_{||D}c_{||A}D^2\dot{c} \right) \\
    &= \int_{S^2} \left( \left( \frac12\epsilon^{AD}n_{||D}\tensor{c}{_{||A}^{||B}}(D^2\dot{c}) \right)_{||B} - \frac12\epsilon^{AD}n_{||D}(D^2c)_{||A}D^2\dot{c} - \frac12\epsilon^{AD}n_{||D}c_{||A}D^2\dot{c} \right. \\
    &\phantom{=}\ \left. + 2\epsilon^{AD}n_{||D}c_{||AB}\dot{c}^{||B} - \frac34\epsilon^{AD}n_{||D}(D^2\dot{c})_{||A}D^2c - \frac32\epsilon^{AD}n_{||D}\dot{c}_{||A}D^2c + \left( \epsilon^{AD}nc_{||AB}\dot{c}^{||B} \right)_{||D} \right. \\
    &\phantom{=}\ \left. - \epsilon^{AD}n_{||D}c_{||AB}\dot{c}^{||B} - \epsilon^{AD}nc_{||ABD} - \left( \frac12\epsilon^{AD}n_{||D}\tensor{\dot{c}}{_{||A}^{||B}}(D^2c) \right)_{||B} + \frac12\epsilon^{AD}n_{||D}(D^2\dot{c})_{||A}D^2c \right. \\
    &\phantom{=}\ \left. + \frac12\epsilon^{AD}n_{||D}\dot{c}_{||A}D^2c - \frac14\epsilon^{AD}n_{||D}(D^2c)_{||A}D^2\dot{c} - \frac12\epsilon^{AD}n_{||D}c_{||A}D^2\dot{c} \right) \\
    &= \int_{S^2} \left( - \frac34\epsilon^{AD}n_{||D}(D^2c)_{||A}D^2\dot{c} - \epsilon^{AD}n_{||D}c_{||A}D^2\dot{c} + \epsilon^{AD}n_{||D}c_{||AB}\dot{c}^{||B} \right. \\
    &\phantom{=}\ \left. - \frac14\epsilon^{AD}n_{||D}(D^2\dot{c})_{||A}D^2c - \epsilon^{AD}n_{||D}\dot{c}_{||A}D^2c - \epsilon^{AD}n\dot{c}^{||B}\left(c_{||ADB} - \tensor{R}{^C_A_D_B}c_{||C}\right) \right) \\
    &= \int_{S^2} \left( - \frac34\epsilon^{AD}n_{||D}(D^2c)_{||A}D^2\dot{c} - \epsilon^{AD}n_{||D}c_{||A}D^2\dot{c} + \left(\epsilon^{AD}n_{||D}c_{||A}\dot{c}^{||B}\right)_{||B} \right. \\
    &\phantom{=}\ \left. + \epsilon^{AD}nc_{||A}\dot{c}_{||D} - \epsilon^{AD}n_{||D}c_{||A}D^2\dot{c} - \frac14\epsilon^{AD}n_{||D}(D^2\dot{c})_{||A}D^2c - \epsilon^{AD}n_{||D}\dot{c}_{||A}D^2c \right. \\
    &\phantom{=}\ \left. + \epsilon^{AD}n\dot{c}^{||B}c_{||C}\left(\tensor{\delta}{^C_D}h_{AB} - \tensor{\delta}{^C_B}h_{AD}\right) \vphantom{\frac12}\right) \\
    &= \int_{S^2} \left( - \frac34\epsilon^{AD}n_{||D}(D^2c)_{||A}D^2\dot{c} - 2\epsilon^{AD}n_{||D}c_{||A}D^2\dot{c} + \epsilon^{AD}nc_{||A}\dot{c}_{||D} - \frac14\epsilon^{AD}n_{||D}(D^2\dot{c})_{||A}D^2c \right. \\
    &\phantom{=}\ \left. - \epsilon^{AD}n_{||D}\dot{c}_{||A}D^2c + \epsilon^{AD}nc_{||D}\dot{c}_{||A} \vphantom{\frac12}\right) \\
    &= \int_{S^2} \left( - \frac34\epsilon^{AD}n_{||D}(D^2c)_{||A}D^2\dot{c} - 2\epsilon^{AD}n_{||D}c_{||A}D^2\dot{c} + \epsilon^{AD}nc_{||A}\dot{c}_{||D} - \frac14\epsilon^{AD}n_{||D}(D^2\dot{c})_{||A}D^2c \right. \\
    &\phantom{=}\ \left. - \epsilon^{AD}n_{||D}\dot{c}_{||A}D^2c - \epsilon^{AD}nc_{||A}\dot{c}_{||D} \vphantom{\frac12}\right)
\end{split}
\end{equation*}
\begin{equation*}
\begin{split}
    &= \int_{S^2} \left( - \frac34\epsilon^{AD}n_{||D}(D^2c)_{||A}D^2\dot{c} - 2\epsilon^{AD}n_{||D}c_{||A}D^2\dot{c} - \left(\frac14\epsilon^{AD}n_{||D}D^2cD^2\dot{c}\right)_{||A} \right. \\
    &\phantom{=}\ \left. + \frac14\epsilon^{AD}n_{||D}(D^2c)_{||A}D^2\dot{c} - \left(\epsilon^{AD}n_{||D}c^{||B}\dot{c}_{||A}\right)_{||B} - \epsilon^{AD}nc_{||D}\dot{c}_{||A} + \epsilon^{AD}n_{||D}c^{||B}\dot{c}_{||AB} \right) \\
    &= \int_{S^2} \left( - \frac12\epsilon^{AD}n_{||D}(D^2c)_{||A}D^2\dot{c} - 2\epsilon^{AD}n_{||D}c_{||A}D^2\dot{c} - \epsilon^{AD}nc_{||D}\dot{c}_{||A} + \left(\epsilon^{AD}n_{||D}c\tensor{\dot{c}}{_{||A}^{||B}}\right)_{||B} \right. \\
    &\phantom{=}\ \left. - \epsilon^{AD}n_{||D}c(D^2\dot{c})_{||A} - \epsilon^{AD}n_{||D}c\dot{c}_{||A} \vphantom{\frac12}\right) \\
    &= \int_{S^2} \left( - \frac12\epsilon^{AD}n_{||D}(D^2c)_{||A}D^2\dot{c} - 2\epsilon^{AD}n_{||D}c_{||A}D^2\dot{c} - \epsilon^{AD}nc_{||D}\dot{c}_{||A} - \left(\epsilon^{AD}n_{||D}cD^2\dot{c}\right)_{||A} \right. \\
    &\phantom{=}\ \left. + \epsilon^{AD}n_{||D}c_{||A}D^2\dot{c} - \left(\epsilon^{AD}nc\dot{c}_{||A}\right)_{||D} + \epsilon^{AD}nc_{||D}\dot{c}_{||A} \vphantom{\frac12}\right) \\
    &= \int_{S^2} \left( - \frac12\epsilon^{AD}n_{||D}(D^2c)_{||A}D^2\dot{c} - \epsilon^{AD}n_{||D}c_{||A}D^2\dot{c} \right) \\
    &= \int_{S^2} \left( - \frac12\epsilon^{AD}n_{||D}\tensor{c}{_{||A}^{||B}_{||B}}D^2\dot{c} + \frac12\epsilon^{AD}n_{||D}c_{||A}D^2\dot{c} - \epsilon^{AD}n_{||D}c_{||A}D^2\dot{c} \right) \\
    &= \int_{S^2} \left( - \left(\frac12\epsilon^{AD}n_{||D}\tensor{c}{_{||A}^{||B}}D^2\dot{c}\right)_{||B} + \frac12\epsilon^{AD}n_{||D}\tensor{c}{_{||A}^{||B}}(D^2\dot{c})_{||B} - \frac12\epsilon^{AD}n_{||D}c_{||A}D^2\dot{c} \right) \\
    &= \int_{S^2} \left( \left(\frac12\epsilon^{AD}n_{||D}c_{||A}(D^2\dot{c})^{||B}\right)_{||B} + \frac12\epsilon^{AD}nc_{||A}(D^2\dot{c})_{||D} - \frac12\epsilon^{AD}n_{||D}c_{||A}D^2D^2\dot{c} \right. \\
    &\phantom{=}\ \left. - \frac12\epsilon^{AD}n_{||D}c_{||A}D^2\dot{c} \right) \\
    &= \int_{S^2} \left( \left(\frac12\epsilon^{AD}nc_{||A}D^2\dot{c}\right)_{||D} - \frac12\epsilon^{AD}n_{||D}c_{||A}D^2\dot{c} - \frac12\epsilon^{AD}n_{||D}c_{||A}D^2D^2\dot{c} \right. \\
    &\phantom{=}\ \left. - \frac12\epsilon^{AD}n_{||D}c_{||A}D^2\dot{c} \right) \\
    &= \int_{S^2} \left( - \frac12\epsilon^{AD}n_{||D}c_{||A}D^2D^2\dot{c} - \epsilon^{AD}n_{||D}c_{||A}D^2\dot{c} \right) = - \frac12\int_{S^2} \epsilon^{AD}n_{||D}c_{||A} D^2(D^2 + 2)\dot{c}
\end{split}
\end{equation*}

This is the same for $\dcn$ since $\cjk = \cn$ for $b=0$. For $\dcwy$, we just need to add the derivative of the correction term, which is evaluated as follows.

\begin{equation*}
\begin{split}
    &\phantom{=} \int_{S^2} \left( 2\dot{m}\epsilon^{AD}n_{i||D}c_{||A} + 2m\epsilon^{AD}n_{i||D}\dot{c}_{||A} \right) \\
    &= \int_{S^2} \left( 2\epsilon^{AD}n_{||D}c_{||A} \left( - \frac18 h^{BE}h^{CF}\dot{\chi}_{BC}\dot{\chi}_{EF} + \frac14 \tensor{\dot{\chi}}{^B^C_{||CB}} \right) + 2\epsilon^{AD}n_{||D}m\dot{c}_{||A} \right) \\
    &= \int_{S^2} \left( - \frac14 \epsilon^{AD}n_{||D}c_{||A} \left( 2\dot{c}^{||BC} - h^{BC}D^2\dot{c} \right) \left( 2\dot{c}_{||BC} - h_{BC}D^2\dot{c} \right) \right. \\
    &\phantom{=}\, \left. + \frac12 \epsilon^{AD}n_{||D}c_{||A} \left( 2\dot{c}^{||BC} - h^{BC}D^2\dot{c} \right)_{||CB} + \vphantom{\frac12} 2\epsilon^{AD}n_{||D}m\dot{c}_{||A} \right)
\end{split}
\end{equation*}
\begin{equation*}
\begin{split}
    &= \int_{S^2} \left( - \epsilon^{AD}n_{||D}c_{||A}\dot{c}^{||BC}\dot{c}_{||BC} + \frac12\epsilon^{AD}n_{||D}c_{||A}h_{BC}\dot{c}^{||BC}D^2\dot{c} + \frac12\epsilon^{AD}n_{||D}c_{||A}h^{BC}\dot{c}_{||BC}D^2\dot{c} \right. \\
    &\phantom{=}\, \left. - \frac14\epsilon^{AD}n_{||D}c_{||A}h^{BC}h_{BC}(D^2\dot{c})^2 + \epsilon^{AD}n_{||D}c_{||A}\tensor{\dot{c}}{^{||BC}_{||CB}} - \frac12 \epsilon^{AD}n_{||D}c_{||A}h^{BC}(D^2\dot{c})_{||CB} \right. \\
    &\phantom{=}\, \left. + 2\epsilon^{AD}n_{||D}m\dot{c}_{||A} \vphantom{\frac12}\right) \\
    &= \int_{S^2} \left( - \epsilon^{AD}n_{||D}c_{||A}\dot{c}^{||BC}\dot{c}_{||BC} + \frac12\epsilon^{AD}n_{||D}c_{||A}(D^2\dot{c})^2 + \epsilon^{AD}n_{||D}c_{||A}\tensor{\dot{c}}{^{||BC}_{||CB}} \right. \\
    &\phantom{=}\, \left. - \frac12 \epsilon^{AD}n_{||D}c_{||A}D^2D^2\dot{c} + 2\epsilon^{AD}n_{||D}m\dot{c}_{||A} \right) \\
    &= \int_{S^2} \left( - \epsilon^{AD}n_{||D}c_{||A}\dot{c}^{||BC}\dot{c}_{||BC} + \frac12\epsilon^{AD}n_{||D}c_{||A}(D^2\dot{c})^2 + \frac12 \epsilon^{AD}n_{||D}c_{||A}D^2(D^2 + 2)\dot{c} \right. \\
    &\phantom{=}\, \left. + 2\epsilon^{AD}n_{||D}m\dot{c}_{||A} \vphantom{\frac12}\right)
\end{split}
\end{equation*}

By multiplying the above term respectively by $\frac32$ (or $\frac12$) and adding it to $\dcjk$, one also obtains the flux of the proposed angular momentum $\dot{J}$ (or $\dot{\Tilde{J}}$). 

\section{Transformations under supertranslations}\label{app:transformations}

First, let us notice that, in view of \eqref{eq:chi} and \eqref{eq:chi-trans}, $b \mapsto b$ and $c \mapsto c - f$ under any supertranslation $u \mapsto u - f$. Therefore, to prove Theorem \ref{thm:supertranslations}, it suffices to substitute these into the formulae from Lemma \ref{lem:momenta-potentials} and integrate all terms by parts so that each term contains $f$ (without derivative).

\begin{equation*}
\begin{split}
    \cjk &\longmapsto \frac{1}{8\pi}\int_{S^2} \left( - 3\epsilon^{AD}n_{||D} \left( N_A - mf_{||A} + \frac14f_{||C}\tensor{\chi}{_A_B^{||BC}} - \frac14f_{||C}\tensor{\chi}{^C^B_{||BA}} \right) \right. \\
    &\phantom{=}\quad \left. - 2n_{||A}(c-f)^{||A}D^2b + \frac12nD^2bD^2(c-f) - \frac12n_{||A}D^2b(D^2(c-f))^{||A} \right) \\
    &= \frac{1}{8\pi}\int_{S^2} \left( - 3\epsilon^{AD}n_{||D}N_A + 3\epsilon^{AD}n_{||D}mf_{||A} - \frac34\epsilon^{AD}n_{||D}f_{||C} \left( 2\tensor{c}{_{||AB}^{||BC}} \right. \right. \\
    &\phantom{=}\ \left. \left. - h_{AB}(D^2c)^{||BC} + \epsilon_{AE}\tensor{b}{^{||E}_{||B}^{||BC}} + \epsilon_{BE}\tensor{b}{^{||E}_{||A}^{||BC}} \right) \right. \\
    &\phantom{=}\ \left. + \frac34\epsilon^{AD}n_{||D}f_{||C} \left( 2\tensor{c}{^{||CB}_{||BA}} - h^{CB}(D^2c)_{||BA} + \epsilon^{CE}\tensor{b}{_{||E}^{||B}_{||BA}} + \epsilon^{BE}\tensor{b}{_{||E}^{||C}_{||BA}} \right) \right. \\
    &\phantom{=}\ \left. - 2n_{||A}c^{||A}D^2b + 2n_{||A}f^{||A}D^2b + \frac12nD^2bD^2c - \frac12nD^2bD^2f - \frac12n_{||A}D^2b(D^2c)^{||A} \right. \\
    &\phantom{=}\ \left. + \frac12n_{||A}D^2b(D^2f)^{||A} \right)
\end{split}
\end{equation*}
\begin{equation*}
\begin{split}
    &= \cjk + \frac{1}{8\pi}\int_{S^2} \left( 3\epsilon^{AD}n_{||D}mf_{||A} - \frac32\epsilon^{AD}n_{||D}f_{||C}\tensor{c}{_{||AB}^{||BC}} + \frac34\epsilon^{AD}n_{||D}f_{||C}\tensor{(D^2c)}{_{||A}^{||C}} \right. \\
    &\phantom{=}\ \left. - \frac34\epsilon^{AD}\epsilon_{AE}n_{||D}f_{||C}\tensor{b}{^{||E}_{||B}^{||BC}} - \frac34\epsilon^{AD}\epsilon_{BE}n_{||D}f_{||C}\tensor{b}{^{||E}_{||A}^{||BC}} + \frac32\epsilon^{AD}n_{||D}f_{||C}\tensor{c}{^{||CB}_{||BA}} \right. \\
    &\phantom{=}\ \left. - \frac34\epsilon^{AD}n_{||D}f_{||C}\tensor{(D^2c)}{^{||C}_{||A}} + \frac34\epsilon^{AD}\epsilon_{CE}n_{||D}f^{||C}\tensor{b}{^{||EB}_{||BA}} + \frac34\epsilon^{AD}\epsilon_{BE}n_{||D}f_{||C}\tensor{b}{^{||ECB}_{||A}} \right. \\
    &\phantom{=}\ \left. + 2n_{||A}f^{||A}D^2b - \frac12nD^2bD^2f + \frac12n_{||A}D^2b(D^2f)^{||A} \right) \\
    &= \cjk + \frac{1}{8\pi}\int_{S^2} \left( 3\epsilon^{AD}n_{||D}mf_{||A} - \frac32\epsilon^{AD}n_{||D}f_{||C}\tensor{(D^2c)}{_{||A}^{||C}} - \frac32\epsilon^{AD}n_{||D}f_{||C}\tensor{c}{_{||A}^{||C}} \right. \\
    &\phantom{=}\ \left. - \frac34\tensor{\delta}{_E^D}n_{||D}f_{||C}(D^2b)^{||EC} - \frac34\tensor{\delta}{_E^D}n_{||D}f_{||C}b^{||EC} - \frac34 \left( \tensor{\delta}{_B^A}\tensor{\delta}{_E^D} - \tensor{\delta}{_B^D}\tensor{\delta}{_E^A} \right) n_{||D}f_{||C}\tensor{b}{^{||E}_{||A}^{||BC}} \right. \\
    &\phantom{=}\ \left. + \frac32\epsilon^{AD}n_{||D}f_{||C}\tensor{(D^2c)}{^{||C}_{||A}} + \frac32\epsilon^{AD}n_{||D}f_{||C}\tensor{c}{^{||C}_{||A}} \right. \\
    &\phantom{=}\ \left. + \frac34 \left( \tensor{\delta}{_C^A}\tensor{\delta}{_E^D} - \tensor{\delta}{_C^D}\tensor{\delta}{_E^A} \right) n_{||D}f^{||C} \left( \tensor{(D^2b)}{^{||E}_{||A}} + \tensor{b}{^{||E}_{||A}} \right) \right. \\
    &\phantom{=}\ \left. + \frac34 \left( \tensor{\delta}{_B^A}\tensor{\delta}{_E^D} - \tensor{\delta}{_B^D}\tensor{\delta}{_E^A} \right) n_{||D}f_{||C}\tensor{b}{^{||ECB}_{||A}} + 2n_{||A}f^{||A}D^2b - \frac12nD^2bD^2f \right. \\
    &\phantom{=}\ \left. + \frac12n_{||A}D^2b(D^2f)^{||A} \right) \\
    &= \cjk + \frac{1}{8\pi}\int_{S^2} \left( 3\epsilon^{AD}n_{||D}mf_{||A} - \frac34n_{||A}f_{||B}(D^2b)^{||AB} - \frac34n_{||A}f_{||B}b^{||AB} \right. \\
    &\phantom{=}\ \left. - \frac34n_{||D}f_{||C}\tensor{b}{^{||D}_{||A}^{||AC}} + \frac34n_{||B}f_{||C}\tensor{b}{^{||A}_{||A}^{||BC}} + \frac34n_{||D}f^{||A}\tensor{(D^2b)}{^{||D}_{||A}} + \frac34n_{||D}f^{||A}\tensor{b}{^{||D}_{||A}} \right. \\
    &\phantom{=}\ \left. - \frac34n_{||C}f^{||C}D^2D^2b - \frac34n_{||C}f^{||C}D^2b + \frac34n_{||D}f_{||C}\tensor{b}{^{||DCA}_{||A}} - \frac34n_{||B}f_{||C}\tensor{b}{^{||ACB}_{||A}} \right. \\
    &\phantom{=}\ \left. + 2n_{||A}f^{||A}D^2b - \frac12nD^2bD^2f + \frac12n_{||A}D^2b(D^2f)^{||A} \right) \\
    &= \cjk + \frac{1}{8\pi}\int_{S^2} \left( 3\epsilon^{AD}n_{||D}mf_{||A} - \frac34n_{||A}f_{||B}(D^2b)^{||AB} - \frac34n_{||A}f_{||B}b^{||AB} \right. \\
    &\phantom{=}\ \left. - \frac34n_{||A}f_{||B}(D^2b)^{||AB} - \frac34n_{||A}f_{||B}b^{||AB} + \frac34n_{||A}f_{||B}(D^2b)^{||AB} + \frac34n_{||A}f_{||B}(D^2b)^{||AB} \right. \\
    &\phantom{=}\ \left. + \frac34n_{||A}f_{||B}b^{||AB} - \frac34n_{||A}f^{||A}D^2D^2b - \frac34n_{||A}f^{||A}D^2b + \frac34n_{||B}f_{||C}\tensor{b}{^{||BCA}_{||A}} \right. \\
    &\phantom{=}\ \left. - \frac34n_{||B}f_{||C}\tensor{b}{^{||ACB}_{||A}} + 2n_{||A}f^{||A}D^b - \frac12nD^2bD^2f + \frac12n_{||A}D^2b(D^2f)^{||A} \right) \\
    &= \cjk + \frac{1}{8\pi}\int_{S^2} \left( 3\epsilon^{AD}n_{||D}mf_{||A} - \frac34n_{||A}f_{||B}b^{||AB} - \frac34n_{||A}f^{||A}D^2D^2b \right. \\
    &\phantom{=}\ \left. + \frac54n_{||A}f^{||A}D^2b + \frac34n^{||B}f^{||C} \left( \tensor{b}{_{||BAC}^{||A}} + \tensor{R}{^D_B_C_A}\tensor{b}{_{||D}^{||A}} - \tensor{b}{_{||ABC}^{||A}} - \tensor{R}{^D_A_C_B}\tensor{b}{_{||D}^{||A}} \right) \right. \\
    &\phantom{=}\ \left. - \frac12nD^2bD^2f + \frac12n_{||A}D^2b(D^2f)^{||A} \right)
\end{split}
\end{equation*}
\begin{equation*}
\begin{split}
    &= \cjk + \frac{1}{8\pi} \int_{S^2} \left( 3\epsilon^{AD}n_{||D}mf_{||A} - \frac{3}{4}n_{||A}f_{||B}b^{||AB} - \frac{3}{4}n_{||A}f^{||A}D^2D^2b \right. \\
    &\phantom{=}\ \left. + \frac{5}{4}n_{||A}f^{||A}D^2b + \frac{3}{4}n^{||B}f^{||C}\tensor{b}{_{||D}^{||A}} \left( \tensor{\delta}{^D_C}h_{BA} - \tensor{\delta}{^D_A}h_{BC} - \tensor{\delta}{^D_C}h_{AB} + \tensor{\delta}{^D_B}h_{AC} \right) \right. \\
    &\phantom{=}\ \left. - \frac{1}{2}nD^2bD^2f + \frac{1}{2}n_{||A}D^2b(D^2f)^{||A} \right) \\
    &= \cjk + \frac{1}{8\pi} \int_{S^2} \left( 3\epsilon^{AD}n_{||D}mf_{||A} - \frac{3}{4}n_{||A}f_{||B}b^{||AB} - \frac{3}{4}n_{||A}f^{||A}D^2D^2b \right. \\
    &\phantom{=}\ \left. + \frac{5}{4}n_{||A}f^{||A}D^2b + \frac{3}{4}n^{||B}f_{||A}\tensor{b}{_{||B}^{||A}} - \frac{3}{4}n^{||B}f_{||B}D^2b - \frac{1}{2}nD^2bD^2f + \frac{1}{2}n_{||A}D^2b(D^2f)^{||A} \right) \\
    &= \cjk + \frac{1}{8\pi} \int_{S^2} \left( 3\epsilon^{AD}n_{||D}mf_{||A} - \frac34n_{||A}f^{||A}D^2D^2b + \frac12n_{||A}f^{||A}D^2b - \frac12nD^2bD^2f \right. \\
    &\phantom{=}\ \left. + \frac12n_{||A}D^2b(D^2f)^{||A} \right) \\
    &= \cjk + \frac{1}{8\pi} \int_{S^2} \left( - 3\epsilon^{AD}n_{||D}m_{||A}f - \frac32nfD^2D^2b + \frac34n_{||A}fD^AD^2D^2b + nfD^2b \right. \\
    &\phantom{=}\ \left. - \frac12n_{||A}fD^AD^2b - \frac12nD^2bD^2f + nD^2bD^2f - \frac12n_{||A}D^AD^2bD^2f \right) \\
    &= \cjk + \frac{1}{8\pi} \int_{S^2} \left( - 3\epsilon^{AD}n_{||D}m_{||A}f - \frac32nfD^2D^2b + \frac34n_{||A}fD^AD^2D^2b + nfD^2b \right. \\
    &\phantom{=}\ \left. - \frac12n_{||A}fD^AD^2b - \frac12n_{||A}D^2bf^{||A} - \frac12n(D^2b){||A}f^{||A} - \frac12nf_{||A}D^AD^2b \right. \\
    &\phantom{=}\ \left. + \frac12n_{||A}\tensor{(D^2b)}{^{||A}_{||B}}f^{||B} \right) \\
    &= \cjk + \frac{1}{8\pi} \int_{S^2} \left( - 3\epsilon^{AD}n_{||D}m_{||A}f - \frac32nfD^2D^2b + \frac34n_{||A}fD^AD^2D^2b + nfD^2b \right. \\
    &\phantom{=}\ \left. - \frac12n_{||A}fD^AD^2b - nfD^2b + \frac12n_{||A}fD^AD^2b + \frac12n_{||A}fD^AD^2b + \frac12nfD^2D^2b \right. \\
    &\phantom{=}\ \left. + \frac12n_{||A}fD^AD^2b + \frac12nfD^2D^2b + \frac12nfD^2D^2b - \frac12n_{||A}fD^AD^2D^2b - \frac12n_{||A}fD^AD^2b \right) \\
    &= \cjk + \frac{1}{8\pi} \int_{S^2} f \left( - 3\epsilon^{AD}n_{||D}m_{||A} + \frac14n_{||A}D^AD^2D^2b + \frac12n_{||A}D^AD^2b \right) \\
    &= \cjk + \frac{1}{8\pi} \int_{S^2} f \left( - 3\epsilon^{AD}n_{||D}m_{||A} + \frac14n_{||A}D^AD^2(D^2+2)b \right)
\end{split}
\end{equation*}

A similar procedure is to be performed for $\cn$ and $\cwy$, with additional terms to be taken into consideration. 

\end{document}